\renewcommand{\qed}{\rule{2mm}{2mm}}
\newcommand{\indep}{\perp \!\!\! \perp}
\newtheorem{theorem}{Theorem}[section]
\newtheorem{lemma}{Lemma}[section]
\theoremstyle{definition}
\newtheorem{example}{Example}[section]
\newtheorem{remark}{Remark}[section]
\newtheorem{assumption}{Assumption}[section]
\newtheorem{condition}{Condition}[section]
\DeclareMathOperator{\var}{Var}
\begin{document}

\author{
Yuehao Bai \\
Department of Economics\\
University of Southern California \\
\url{yuehao.bai@usc.edu}
\and
Jizhou Liu \\
HSBC Business School \\
Peking University \\
\url{jizhou.liu@phbs.pku.edu.cn}
\and
Azeem M.\ Shaikh\\
Department of Economics\\
University of Chicago \\
\url{amshaikh@uchicago.edu}
\and
Max Tabord-Meehan\\
Department of Economics\\
University of Toronto \\
\url{m.tabordmeehan@utoronto.ca}
}

\bigskip

\title{On the Efficiency of Highly Stratified Experiments \thanks{An earlier version of this paper was circulated under the title ``On the Efficiency of Finely Stratified Experiments.'' We thank Xinran Li, Wei Ma, Kirill Ponomarev, Joseph Romano, Andres Santos, Panos Toulis, Ke-Li Xu, and the seminar participants at the 2024 CEME Young Econometricians Conference, City University of Hong Kong, UC Riverside, UC Irvine, Cornell University, Syracuse University, the University of Toronto, the University of Wisconsin, Vanderbilt University, Carleton University, Queen's University, and McGill University for helpful comments. The third author acknowledges support from the National Science Foundation through grant SES-2419008. The fourth author acknowledges support from the National Science Foundation through grant SES-2149408.}}

\maketitle


\begin{spacing}{1.1}
\begin{abstract}
This paper studies the use of highly stratified designs for the efficient estimation of a large class of treatment effect parameters that arise in the analysis of experiments. By a ``highly stratified'' design, we mean experiments in which units are divided into blocks of a fixed size and a proportion within each block is assigned to a binary treatment uniformly at random. The class of parameters considered are those that can be expressed as the solution to a set of moment conditions constructed using a known function of the observed data. They include, among other things, average treatment effects, quantile treatment effects, and local average treatment effects as well as the counterparts to these quantities in experiments in which the unit is itself a cluster. In this setting, we establish three results. First, we show that under a highly stratified design, the na\"ive method of moments estimator achieves the same asymptotic variance as what could typically be attained under alternative treatment assignment mechanisms only through {\it ex post} covariate adjustment. Second, we argue that the na\"ive method of moments estimator under a highly stratified design is asymptotically efficient by deriving a lower bound on the asymptotic variance of regular estimators of the parameter of interest in the form of a convolution theorem. In this sense, highly stratified experiments are attractive because they lead to efficient estimators of treatment effect parameters ``by design.'' Finally, we strengthen this conclusion by establishing conditions under which a ``fast-balancing'' property of highly stratified designs is in fact necessary for the na\"ive method of moments estimator to attain the efficiency bound.
\end{abstract}
\end{spacing}

\noindent KEYWORDS: Convolution theorem, Efficiency, Experiment, Experimental design, Highly stratified experiment, Matched pairs, Randomized controlled trial

\noindent JEL classification codes: C12, C14

\thispagestyle{empty} 
\newpage
\setcounter{page}{1}

\section{Introduction}

This paper studies the use of highly stratified designs for the efficient estimation of a large class of treatment effect parameters that arise in the analysis of experiments. By a ``highly stratified'' design, we mean experiments in which units are divided into blocks of a \emph{fixed} size based on their covariate values and a proportion within each group is assigned to a binary treatment uniformly at random. The canonical example of such a design is a matched pairs design: here the fixed size of the blocks equals two, and exactly one of the two units in each block is assigned to treatment at random, so that the marginal probability of treatment assignment is one half. More broadly, a ``highly" stratified experiment is also intended to generalize the notion of a ``finely" stratified experiment \citep[as defined in][]{fogarty2018mitigating}, for which in every block there is exactly one treated or control individual. On the other hand, our use of the term ``highly stratified" in this context contrasts with what we could call a ``coarsely stratified" design, in which units are divided into a small set of \emph{large} blocks: see Example \ref{ex:CAR} below for a formal definition. The class of parameters considered are those that can be expressed as the solution to a set of moment conditions constructed using a known function of the observed data.  This class of parameters includes many causal parameters of interest: average treatment effects (ATEs), quantile treatment effects, and local average treatment effects as well as the counterparts to these quantities in experiments in which the unit is itself a cluster. 

In the setting described above, we establish three results. First, we study the asymptotic properties of a na\"ive method of moments estimator under a highly stratified design. Here, by a na\"ive method of moments estimator, we mean an estimator constructed using a direct sample analog of the moment conditions. For example, in the case of the ATE, such an estimator is given by the Horvitz-Thompson estimator for the difference in means. We show that under a highly stratified design, the na\"ive method of moments estimator achieves the same asymptotic variance as what could typically be attained under alternative treatment assignment mechanisms only through {\it ex post} covariate adjustment using the same set of covariates. Such adjustment strategies frequently involve the nonparametric estimation of conditional expectations or similar quantities; see, for example, \cite{zhang2008improving}, \cite{tsiatis2008covariate}, \cite{jiang2022improving}, \cite{jiang2022regression-adjusted} and \cite{rafi2023efficient}. We further illustrate that this feature of highly stratified experiments stems from the way in which highly stratified designs balance treatment status across covariate values, a property we define formally below and refer to as ``fast-balancing.'' Second, we derive a lower bound on the asymptotic variance of regular estimators of the parameter of interest in the form of a convolution theorem. This convolution theorem accommodates a large class of possible treatment assignment mechanisms, including covariate adaptive randomization \citep[][]{efron1971forcing, wei1978adaptive,zelen1974randomization, pocock1975sequential,hu2012asymptotic,bugni2018inference,ye2022inference,ma2020statistical,ma2024new}, re-randomization \citep[][]{li2017general,li2018asymptotic,li2020rerandomization,li2020rerandomization-1, cytrynbaum2024finely}, and highly stratified designs \citep[][]{jiang2021bootstrap, bai2022inference, cytrynbaum2023designing}.  We show that the lower bound is attained by the na\"ive method of moments estimator under a highly stratified design. In this sense, the na\"ive method of moments estimator under a highly stratified design is asymptotically efficient. More succinctly, we say that highly stratified experiments lead to efficient estimators ``by design.'' Finally, we strengthen this conclusion by characterizing all regular asymptotically linear estimators for a large class of treatment assignment mechanisms and use this characterization to establish conditions under which the fast-balancing property of highly stratified experiments is in fact a necessary condition for the na\"ive method of moments estimator to attain the efficiency bound.

Together, these results demonstrate that highly stratified experiments lead to efficient estimators that prioritize transparency in that they preclude the researcher from ``data snooping'' associated with {\it ex post} nonparametric covariate adjustment.  Importantly, concerns with this type of data snooping are not completely eliminated by typical pre-analysis plans because such adjustments involve choices, such as the choice of nonparametric estimator or tuning parameters, that are often not pre-registered prior to the experiment. The estimators are therefore attractive because they avoid performing nonparametric covariate adjustment in order to achieve efficiency and thereby remain ``hands above the table'' \citep[][]{freedman2008regression, lin2013agnostic}. 

Our paper builds upon two strands of literature. The first strand of literature concerns the analysis of highly stratified experiments.  Within this literature, our analysis is most closely related to \cite{bai2022inference}, who derive the asymptotic behavior of the difference-in-means estimator of the ATE when treatment is assigned according to a matched pairs design, and \cite{cytrynbaum2023designing}, who develops related results for an experimental design referred to as ``local randomization'' that permits the proportion of units assigned to treatment to vary with the baseline covariates. Beyond settings that study estimation of the ATE, \cite{bai2024inference-1} develops results for the analysis of different cluster-level average treatment effects and \cite{jiang2021bootstrap} develop results analogous to those in \cite{bai2022inference} for suitable estimators of the quantile treatment effect. This paper, like those just mentioned, operates in a ``super-population'' framework, in which the outcomes and covariates are assumed to be drawn as an i.i.d.\ sample from a population distribution. This is in contrast to an alternative strand of the literature that studies highly stratified experiments from the design-based perspective \citep[see, in particular,][]{imai2008variance, imai2009essential, fogarty2018mitigating, fogarty2018regression-assisted, liu2020regression-adjusted, pashley2021insights, bai2025new}. To our knowledge, our paper is the first to analyze the properties of highly stratified experiments in a general framework that accommodates any parameter that can be characterized as the solution to a set of moment conditions involving a known function of the observed data. In more recent work, \cite{cytrynbaum2024finely} considers a similar framework to study highly stratified \emph{re-randomized} experiments, which nest highly stratified experiments as a special case. However, his results assume that the moment functions which define the parameter of interest are continuous in a way that precludes parameters like the Quantile Treatment Effect. Moreover, we emphasize that none of the above papers formally establish the asymptotic efficiency of highly stratified experiments. The second strand of literature concerns bounds on the efficiency with which treatment effect parameters can be estimated in experiments. We note that, due to the potential for dependence in treatment assignments across individuals, we cannot immediately appeal to standard semi-parametric efficiency results \citep[see, for example,][]{van_der_vaart1998asymptotic, chen2008semiparametric}. Two important recent papers in this literature studying efficiency bounds in the special case of estimating the ATE are \cite{armstrong2022asymptotic} and \cite{rafi2023efficient}. Even in this special case, their results differ from ours in important and empirically relevant ways; Remark \ref{rem:other_SPEB} provides an in-depth discussion of the connection between these results and ours.  See also \cite{bai2022optimality} for some finite-sample optimality properties of matched pairs designs for estimation of the ATE.

The remainder of this paper is organized as follows.  In Section \ref{sec:setup}, we describe our setup and notation.  We emphasize, in particular, the way in which our framework can accommodate various treatment effect parameters of interest.  Section \ref{sec:variance} derives the asymptotic behavior of the na\"ive method of moments estimator of our parameter of interest when treatment is assigned using a highly stratified design and studies estimation of the asymptotic variance. In Section \ref{sec:semi}, we develop our lower bound on the asymptotic variance of regular estimators of these parameters and show that it is achieved by the the na\"ive method of moments estimator in a highly stratified design. In Section \ref{sec:regular}, we characterize all regular asymptotically linear estimators and argue that the fast-balancing property is necessary for the na\"ive method of moments estimator to attain the efficiency bound. In Section \ref{sec:sims}, we illustrate the practical relevance of our theoretical results through a simulation study. Finally, we conclude in Section \ref{sec:recs} with some recommendations for empirical practice guided by both these simulations and our theoretical results. Proofs of all results can be found in the supplementary material.

\section{Setup and Motivation}\label{sec:setup}

Let $A_i \in \{0, 1\}$ denote the treatment status of the $i$th unit, and let $X_i \in \mathbf R^{d_x}$ denote their observed, baseline covariates. For $a \in \{0, 1\}$, let $R_i(a) \in \mathbf R^{d_r}$ denote a vector of potential responses. As we illustrate below, considering a vector of responses allows us to accommodate many parameters of interest. Let $R_i \in \mathbf R^{d_r}$ denote the vector of observed responses obtained from $R_i(a)$ once treatment is assigned. As usual, the observed responses and potential responses are related to treatment status by the relationship
\begin{equation}\label{eq:PO}
R_i = R_i(1) A_i + R_i(0) (1 - A_i)~. 
\end{equation}
We assume throughout that our sample consists of $n$ units. For any random vector indexed by $i$, for example $A_i$, we define $A^{(n)} = (A_1, \ldots, A_n)$. Let $P_n$ denote the distribution of the observed data $(R^{(n)}, A^{(n)}, X^{(n)})$, and $Q_n$ the distribution of $(R^{(n)}(1), R^{(n)}(0), X^{(n)})$. We assume $Q_n = Q^n$, where $Q$ is the marginal distribution of $(R_i(1), R_i(0), X_i)$. Given $Q_n$, $P_n$ is then determined by \eqref{eq:PO} and the mechanism for determining treatment assignment. We assume that treatment assignment is performed such that a standard unconfoundedness assumption holds and such that the probability of assignment given $X_i$ is some known constant for every $1 \le i \le n$, as is often the case in most experiments:

\begin{assumption} \label{ass:unconfounded}
Treatment status is assigned so that
\begin{equation} \label{eq:unconfounded}
    (R^{(n)}(1), R^{(n)}(0)) \indep A^{(n)} | X^{(n)}~,
\end{equation}
and such that $P\{A_i = 1|X_i=x\} = \eta$, for some $\eta \in (0, 1)$ for all $1 \le i \le n$.
\end{assumption}

Assumption \ref{ass:unconfounded} restricts the probability of assignment to be the fixed fraction $\eta$ across the entire experimental sample, but this restriction can be weakened so that $\eta$ is replaced by $\eta(X_i)$ for many of our subsequent results: see Remark \ref{rem:generaleta} for a discussion. Given Assumption \ref{ass:unconfounded}, it can be shown that $(X_i, A_i, R_i)$ are identically distributed for $1 \le i \le n$, and their marginal distribution does not change with $n$ (see Lemma A.7 in the Supplement). As a consequence, we denote the marginal distribution of $(X_i, A_i, R_i)$ by $P$. We consider parameters $\theta_0 \in \Theta \subset \mathbf R^{d_\theta}$ that can be defined as the solution to a set of moment equalities. As we show below, these parameters include a large class of causal parameters defined in terms of potential outcomes and potential treatments. Formally, let $m: \mathbf R^{d_x} \times \{0, 1\} \times \mathbf R^{d_r} \to \mathbf R^{d_\theta}$ be a known measurable function, then we consider parameters $\theta_0$ that uniquely solve the moment equality
\begin{equation} \label{eq:moments}
E_P[m(X_i, A_i, R_i, \theta_0)] = 0~.
\end{equation}
We emphasize that $m(\cdot)$ is not a function of any unknown nuisance parameters, but may depend on the known value of $\eta$ in Assumption \ref{ass:unconfounded}. We present five examples of well-known parameters that can be described as (functions of) solutions to a set of moment conditions as in \eqref{eq:moments}.

\begin{example}[Average Treatment Effect]\label{ex:ATE}
Let $Y_i(a) = R_i(a)$ denote a scalar potential outcome for the $i$th unit under treatment $a \in \{0, 1\}$, and let $Y_i = R_i$ denote the observed outcome. Let $\theta_0 = E_Q[Y_i(1) - Y_i(0)]$ denote the average treatment effect (ATE). Under Assumption \ref{ass:unconfounded}, $\theta_0$ solves the moment condition in \eqref{eq:moments} with
\begin{equation} \label{eq:moments-ate}
   m(X_i, A_i, R_i, \theta) = \frac{Y_i A_i}{\eta} - \frac{Y_i (1 - A_i)}{1 - \eta} - \theta~. 
\end{equation}
For papers that consider estimators based on \eqref{eq:moments-ate}, see \cite{hirano2001estimation} and \cite{hirano2003efficient}.
\end{example}

\begin{example}[Quantile Treatment Effect]\label{ex:QTE}
Let $Y_i(a) = R_i(a)$ denote a scalar potential outcome for the $i$th unit under treatment $a \in \{0, 1\}$, and let $Y_i = R_i$ denote the observed outcome. Let $\tau \in (0, 1)$ and $\theta_0 = (\theta_0(1), \theta_0(0))' = (q_{Y(1)}(\tau), q_{Y(0)}(\tau))'$, where
\[ q_{Y(a)}(\tau) = \inf \{\lambda \in \mathbf R: Q \{Y_i(a) \leq \lambda \} \geq \tau\}~. \]
In other words, $\theta_0$ is defined to be the vector of $\tau$th quantiles of the marginal distributions of $Y_i(1)$ and $Y_i(0)$. If we assume $q_{Y(a)}(\tau)$ is unique for $a \in \{0, 1\}$ in the sense that $Q\{Y(a) \leq q_{Y(a)}(\tau) + \epsilon\} > Q\{Y(a) \leq q_{Y(a)}(\tau)\}$ for all $\epsilon > 0$, then it follows from Assumption \ref{ass:unconfounded} and Lemma 1 in \cite{firpo2007efficient} that $\theta_0$ solves the moment condition in \eqref{eq:moments} with
\[ m(X_i, A_i, R_i, \theta) = \begin{pmatrix}
\displaystyle \frac{A_i (\tau - I \{Y_i \leq \theta(1)\})}{\eta} \\
\displaystyle \frac{(1 - A_i) (\tau - I \{Y_i \leq \theta(0)\})}{1 - \eta}
\end{pmatrix}~, \]
for $\theta = (\theta^{(1)}, \theta^{(0)})'$. Note that the quantile treatment effect $q_{Y(1)}(\tau) - q_{Y(0)}(\tau)$ can then be defined as $h(\theta_0)$ where $h:\mathbf R^2 \to \mathbf R$ is given by $h(s,t) = s - t$.
\end{example}

\begin{example}[Local Average Treatment Effect]\label{ex:LATE}
Let $(\tilde{Y}_i(a), D_i(a)) = R_i(a)$ denote the vector of potential outcomes ($\tilde{Y}_i(a) \in \mathbf R$) and treatment take-up ($D_i(a) \in \{0, 1\}$) under treatment $a \in \{0, 1\}$, and let $(Y_i, D_i) = R_i$ denote the vector of observed outcomes and treatment take-up. Note here that $\tilde{Y}_i(a)$ corresponds to the potential outcome under assignment $a \in \{0,1\}$ and not to the potential outcome for a given take-up $D_i = d$. Suppose $E_Q[D_i(1) - D_i(0)] \ne 0$ and let
\[ \theta_0 = \frac{E_Q[\tilde{Y}_i(1) - \tilde{Y}_i(0)]}{E_Q[D_i(1) - D_i(0)]}~. \]
It then follows from Assumption \ref{ass:unconfounded} that $\theta_0$ solves the moment condition in \eqref{eq:moments} with
\begin{equation} \label{eq:moments-late}
    m(X_i, A_i, R_i, \theta) = \frac{Y_i A_i}{\eta} - \frac{Y_i (1 - A_i)}{1 - \eta} - \theta \left ( \frac{D_i A_i}{\eta} - \frac{D_i (1 - A_i)}{1 - \eta} \right )~.
\end{equation}
If we further assume instrument monotonicity (i.e., $P\{D_i(1) \ge D_i(0)\} = 1$) and instrument exclusion, then $\theta_0$ could be re-interpreted as the local average treatment effect (LATE) in the sense of \cite{imbens1994identification}.
\end{example}

\begin{example}[Weighted Average Treatment Effect]\label{ex:Clust_ATE}
Let $Y_i(a) = R_i(a)$ denote a scalar potential outcome for the $i$th unit under treatment $a \in \{0, 1\}$, and let $Y_i = R_i$ denote the observed outcome. 
Let 
\[\theta_0 = E_Q\left[\frac{\omega(X_i)}{E_Q[\omega(X_i)]}\left(Y_i(1) - Y_i(0)\right)\right]~,\]
for some known function $\omega: \mathbf R^{d_x} \to \mathbf R$.
It then follows from Assumption \ref{ass:unconfounded} that $\theta_0$ solves the moment condition in \eqref{eq:moments} with
\[ m(X_i, A_i, R_i, \theta) = \omega(X_i)\left(\frac{Y_i A_i}{\eta} - \frac{Y_i (1 - A_i)}{1 - \eta}\right) - \omega(X_i)\theta~.\] 
By defining $Y_i$ to be the average outcome in the $i$th cluster, $\theta_0$ defined in this way can accommodate the (cluster) size-weighted and equally-weighted average treatment effects considered in \cite{bugni2022inference} and \cite{bai2024inference-1} in the context of cluster-level randomized controlled trials.
\end{example}

\begin{example}[Log-Odds Ratio]\label{ex:log_odds}
Let $Y_i(a) = R_i(a) \in \{0, 1\}$ denote a binary potential outcome for the $i$th unit under treatment $a \in \{0, 1\}$, and let $Y_i = R_i$ denote the observed outcome.
Suppose $0 < P\{Y_i(a) = 0\} < 1$ for $a \in \{0, 1\}$, and let $\theta_0 = (\theta_0(1), \theta_0(2))'$, where
\[\theta_0(1) = \text{logit}(E_Q[Y_i(0)])~,\]
\[\theta_0(2) = \text{logit}(E_Q[Y_i(1)]) - \text{logit}(E_Q[Y_i(0)])~,\]
with $\text{logit}(z) = \log(\frac{z}{1-z})$, so that $\theta_0(2)$ denotes the log-odds ratio of treatment $1$ relative to treatment $0$. It follows from Assumption \ref{ass:unconfounded} that $\theta_0$ solves the moment condition in \eqref{eq:moments} with
\[m(X_i, A_i, R_i, \theta) =  \begin{pmatrix}
\displaystyle 1 - A_i \\
\displaystyle A_i
\end{pmatrix}\left(Y_i - \text{expit}(\theta(1) + \theta(2)A_i)\right)~,\]
where $\text{expit}(z) = \frac{\exp(z)}{1 + \exp(z)}$. The log-odds ratio can then be defined as $h(\theta_0)$ where $h: \mathbf R^2 \to \mathbf R$ is given by $h(s,t) = t$. This parameter appears in, for example, \cite{zhang2008improving}.
\end{example}
Additional examples could be obtained by considering combinations of Examples \ref{ex:ATE}--\ref{ex:log_odds}. For instance, combining the moment functions from Examples \ref{ex:LATE} and \ref{ex:Clust_ATE} would result in a weighted LATE parameter. Beyond these examples, certain treatment effect contrasts could also be related to the structural parameters in, for instance, an economic model of supply and demand: see, for example, the model estimated in \cite{casaburi2022using}.

Throughout the rest of the paper we consider the asymptotic properties of the method  of moments estimator $\hat{\theta}_n$ for $\theta_0$ which is constructed as a solution to the sample analogue of \eqref{eq:moments}:
\begin{equation} \label{eq:est}
\frac{1}{n} \sum_{1 \leq i \leq n} m(X_i, A_i, R_i, \hat \theta_n) = 0~.
\end{equation}
Because $\hat{\theta}_n$ is constructed directly using the moment function $m(\cdot)$, we call $\hat{\theta}_n$ the na\"ive method of moments estimator. Note that $\hat \theta_n$ as defined in \eqref{eq:est} is closely related to standard estimators of the parameter $\theta_0$ in specific examples. For instance, in Example \ref{ex:ATE}, \[\hat{\theta}_n = \frac{1}{\eta}\sum_{1 \le i \le n}Y_iA_i - \frac{1}{1 - \eta}\sum_{1 \le i \le n}Y_i(1 - A_i)~,\]
so that $\hat{\theta}_n$ is a Horvitz-Thompson analogue of the standard difference-in-means estimator for the ATE. In Example \ref{ex:LATE},
\[\hat{\theta}_n = \frac{\frac{1}{\eta}\sum_{1 \le i \le n}Y_iA_i - \frac{1}{1 - \eta}\sum_{1 \le i \le n}Y_i(1 - A_i)}{\frac{1}{\eta}\sum_{1 \le i \le n}D_iA_i - \frac{1}{1 - \eta}\sum_{1 \le i \le n}D_i(1 - A_i)}~,\]
so that $\hat{\theta}_n$ is a Horvitz-Thompson analogue of the standard Wald estimator for the local average treatment effect. 

Before proceeding, in the remainder of this section, we provide a more detailed summary of the main contributions of our paper.  To this end, first note that if $A^{(n)}$ were assigned i.i.d., independently of $X^{(n)}$, then it can be shown under mild conditions on $m(\cdot)$ \citep[see, for instance, Theorem 5.1 in][]{van_der_vaart1998asymptotic} that the na\"ive method of moments estimator satisfies 
\[\sqrt{n}(\hat{\theta}_n - \theta_0) \overset{d}{\to} N(0, \mathbb{V})~,\]
where
\begin{equation}\label{eq:naive_variance}
\mathbb V = M^{-1}E_P[m(X_i, A_i, R_i, \theta_0)m(X_i, A_i, R_i, \theta_0)'](M^{-1})^{\prime}~,
\end{equation}
with $M = \frac{\partial}{\partial \theta'} E_P[m(X, A, R, \theta)] \Big |_{\theta = \theta_0}$. In Section \ref{sec:variance}, we show that if we assign $A^{(n)}$ using a highly stratified design (see Assumption \ref{ass:a} below for a formal definition) then, under appropriate assumptions so that we achieve ``fast balance'' of the treatment across covariate values (see Assumptions \ref{ass:pair}, \ref{ass:normal} below),
\[\sqrt{n}(\hat{\theta}_n - \theta_0) \overset{d}{\to} N(0, \mathbb{V}_*)~,\]
where $\mathbb{V} \ge \mathbb{V}_*$ (see Theorem \ref{thm:normal}). Under i.i.d.\ assignment, the na\"ive method of moment estimator $\hat \theta_n$ cannot generally attain $\mathbb V_\ast$, but an estimator that attains $\mathbb{V}_*$  could instead be constructed by appropriately ``augmenting'' the moment function, and then considering an estimator which solves the augmented moment equation. For instance, if we consider the ATE in Example \ref{ex:ATE}, then it is straightforward to show that the following augmented moment function identifies $\theta_0$:
\begin{equation}\label{eq:augment_m}
m^*(X_i, A_i, R_i, \theta) = \frac{A_i(Y_i - \mu_1(X_i))}{\eta} - \frac{(1 - A_i)(Y_i - \mu_0(X_i))}{1 - \eta} + \mu_1(X_i) - \mu_0(X_i) - \theta~,
\end{equation}
where $\mu_a(X_i) = E_Q[Y_i(a)|X_i]$. 
This choice of $m^*(\cdot)$ produces the well known doubly-robust moment condition for estimating the ATE \citep{robins1995analysis,hahn1998role}. It can then be shown that an appropriately constructed two-step estimator, in which $\mu_1(\cdot)$ and $\mu_0(\cdot)$ are non-parametrically estimated in a first step, attains $\mathbb{V}_*$  \citep{tsiatis2008covariate,farrell2015robust,chernozhukov2017doubledebiasedneyman, rafi2023efficient}. Intuitively, the estimator obtained from the augmented moment function $m^*(\cdot)$ performs nonparametric covariate adjustment by exploiting the information contained in $X^{(n)}$ that may not have been captured in the original moment function $m(\cdot)$.  Similar nonparametric covariate adjustments based on augmented moment equations have been developed for other parameters of interest \citep{zhang2008improving, belloni2017program,jiang2022improving,jiang2022regression-adjusted}.   In this sense, we show that highly stratified designs can perform nonparametric covariate adjustment ``by design'' for the large class of parameters that can be expressed in terms of moment conditions of the form given in \eqref{eq:moments}, thus generalizing similar observations made in \cite{bai2022inference}, \cite{bai2022optimality}, and \cite{cytrynbaum2023designing} in the special case of estimating the ATE. As we explain in the discussion following Theorem \ref{thm:normal}, highly stratified experiments have this feature because they lead to fast-balancing of the treatment across covariate values, as defined formally in \eqref{eq:fast-m} below.

Earlier work on efficient treatment effect estimation has noted that the variance $\mathbb{V}_*$ is in fact the efficiency bound for estimating $\theta_0$ under i.i.d.\ assignment \citep[see, for instance,][]{cattaneo2010efficient}. A natural follow-up question is whether or not $\mathbb{V}_*$ continues to be the efficiency bound for estimating $\theta_0$ under a highly stratified design, or more generally for complex experimental designs which induce dependence in the treatment assignments across individuals in the experiment. In Section \ref{sec:semi}, we show that $\mathbb V_\ast$ continues to be the efficiency bound for estimating $\theta_0$ for a large class of treatment assignment mechanisms with a fixed marginal probability of treatment assignment, which includes highly stratified designs as a special case. We can thus conclude that, from the perspective of asymptotic efficiency, highly stratified designs are optimal experimental designs for a broad range of treatment effect estimation problems. In Section \ref{sec:regular} we build on this result and establish conditions under which efficient estimation of $\theta_0$ using the na\"ive method of moments estimator can be achieved only if the experimental design leads to fast-balancing of the treatment across covariate values. In this sense, we show that the fast-balancing property of highly stratified experiments is in fact a \emph{necessary} condition for achieving efficient estimation of $\theta_0$ ``by design.''

\section{The Asymptotic Variance of Highly Stratified Experiments}\label{sec:variance}
In this section, we derive the asymptotic distribution of the method of moments estimator $\hat \theta_n$ when treatment is assigned by a highly stratified design over the baseline covariates $X^{(n)}$. This assignment mechanism uses the covariates $X^{(n)}$ to group units with similar covariate values into blocks of fixed size, and then assigns treatment completely at random within each block.  In order to describe the assignment mechanism formally, we require some further notation to define the blocks of units. Let $\ell$ and $k$ be arbitrary positive integers with $\ell < k$ and set $\eta = \ell/k$. Here, $k$ is the total number of units in each block and $\ell$ is the number of treated units in each block. For simplicity, assume that $n$ is divisible by $k$. We then represent blocks of units using a partition of $\{1, \ldots, n\}$ given by
\[\left\{\lambda_j = \lambda_j(X^{(n)}) \subseteq \{1, \ldots, n\}, 1 \le j \le n/k\right\}~,\]
with $|\lambda_j| = k$.  Because of its possible dependence on $X^{(n)}$, $\{\lambda_j: 1 \le j \le n/k\}$ encompasses a variety of different ways of blocking the $n$ units according to the observed, baseline covariates. We note, however, that our framework is not intended to reflect settings in which the blocks themselves are randomly sampled from the population of interest \citep[see, for instance, the discussion in Section 4.4 of][]{pashley2021insights}.  Given such a partition, we assume that treatment status is assigned as described in the following assumption:
\begin{assumption} \label{ass:a}
Treatment status is assigned so that $(R^{(n)}(1), R^{(n)}(0)) \indep A^{(n)} \big | X^{(n)}$ and, conditional on $X^{(n)}$, 
\[\{(A_i: i \in \lambda_j): 1 \le j \le n/k\}\]
are i.i.d.\ and each uniformly distributed over all permutations of $(\underbrace{0, 0, \ldots, 0}_{k - \ell}, \underbrace{1, 1, \ldots, 1}_{\ell})$.
\end{assumption}

The assignment mechanism described in Assumptions \ref{ass:a} generalizes the definition of a matched pairs design. In particular, we recover a matched pairs design if we set $(\ell, k) = (1, 2)$, with $\eta = 1/2$. Indeed, suppose $n$ is even and consider pairing the experimental units into $n / 2$ pairs, represented by the sets
\[ \lambda_j = \{\pi(2j - 1), \pi(2j)\} \text{ for } j = 1, \ldots, n / 2~, \]
where $\pi = \pi_n(X^{(n)})$ is a permutation of $n$ elements.  Because of its possible dependence on $X^{(n)}$, $\pi$ encompasses a broad variety of ways of pairing the $n$ units according to the observed, baseline covariates $X^{(n)}$. Given such a $\pi$, we assume that treatment status is assigned so that Assumption \ref{ass:a} holds and, conditional on $X^{(n)}$, $(A_{\pi(2j-1)}, A_{\pi(2j)}), j = 1, \ldots, n / 2$ are i.i.d.\ and each uniformly distributed over the values in $\{(0,1), (1,0)\}$. For some examples of such an assignment mechanism being used in practice, see, for instance, \cite{angrist2009effects}, \cite{banerjee2015miracle}, and \cite{bruhn2016impact}.

\begin{remark}\label{rem:matched_pairs}
Note that Assumption \ref{ass:a} generalizes matched pairs designs along two dimensions: first, it allows for treatment fractions other than $\eta = 1/2$. Second, it allows for choices of $\ell$ and $k$ which are not relatively prime. For instance, if we set $(\ell, k) = (2, 4)$, then $\eta = 1/2$ as in matched pairs, but now the assignment mechanism blocks units into groups of size $4$ and assigns two units to treatment, two units to control. Although Theorem \ref{thm:normal} below establishes that allowing for this level of flexibility has no effect on the asymptotic properties of our estimator, in our experience we have found that designs which employ these treatment ``replicates'' in each block can simplify the construction of variance estimators in practice; see Section \ref{sec:var-est} for details, and \cite{imbens2011experimental} for an early discussion. 
\end{remark}

Our analysis will require some discipline on the way in which the blocks are formed.  In particular, we will require that the units in each block be close in terms of their baseline covariates in the sense described by the following assumption:
\begin{assumption} \label{ass:pair}
The blocks used in determining treatment status satisfy
\[ \frac{1}{n} \sum_{1 \leq j \leq n/k} \max_{i, i' \in \lambda_j} \|X_{i} - X_{i'}\|^2 \stackrel{P}{\rightarrow} 0~. \]
\end{assumption}
\cite{bai2022inference} and \cite{cytrynbaum2023designing} discuss blocking algorithms that satisfy Assumption \ref{ass:pair}. When $X_i \in \mathbf R$ and $E_Q[X_i^2] < \infty$, a simple algorithm that satisfies Assumption \ref{ass:pair} is simply to order units from smallest to largest and then block adjacent units into blocks of size $k$. In the case of matched pairs, if $\mathrm{dim}(X_i) > 1$ and $E_Q[\|X_i\|^d] < \infty$ for $d \geq \mathrm{dim}(X_i) + 1$, then Assumption \ref{ass:pair} is satisfied by the \texttt{nbpmatching} algorithm in \texttt{R} that minimizes the sum of squared distances of $X$ within pairs: see Appendix A of \cite{bai2024inference-1} for details. Beyond the case of pairs, \cite{cytrynbaum2023designing} demonstrates that the optimal blocking satisfies the following bound
\[
\frac{1}{n} \sum_{1 \leq j \leq n/k} \max_{i, i' \in \lambda_j} \|X_{i} - X_{i'}\|^2 = O_P(n^{{2/d} - 2/(\rm{dim}(X_i)+1)})~,
\]
from which we can deduce that the rate of convergence depends on both the dimension of $X_i$ as well as the number of moments it possesses. Note further that Assumption \ref{ass:pair} can be satisfied even if $X_i$ contains discrete components, as may arise when summarizing a categorical variable numerically using, e.g., one-hot encoding.

Finally, we impose the following assumptions to derive the large-sample properties of $\hat{\theta}_n$. In what follows, when writing expectations and variances, we suppress the subscripts $P$ and $Q$ whenever doing so does not lead to confusion.

\begin{assumption} \label{ass:normal}
Let $m(\cdot) = (m_s(\cdot): 1 \le s \le d_{\theta})'$. The moment functions are such that
\begin{enumerate}[\rm (a)]
    \item For every $\epsilon > 0$, $\inf\limits_{\theta \in \Theta: \|\theta - \theta_0\| > \epsilon} \| E[m(X_i, A_i, R_i, \theta)] \| > 0$.
    \item $E[m(X_i, A_i, R_i, \theta)]$ is differentiable at $\theta_0$ with a nonsingular derivative $M = \frac{\partial}{\partial \theta'} E[m(X, \allowbreak A, R, \theta)] \Big |_{\theta = \theta_0}$.
    \item For $1 \leq s \leq d_\theta$, $E[((m_s(X, a, R(a), \theta) - m_s(X, a, R(a), \theta_0))^2] \to 0$ as $\theta \to \theta_0$ for $a \in \{0, 1\}$.
    \item For $1 \leq s \leq d_\theta$, $\{m_s(x, a, r, \theta): \theta \in \Theta\}$ is pointwise measurable in the sense that there exists a countable set $\Theta^\ast$ such that for each $\theta \in \Theta$, there exists a sequence $\{\theta_m\} \subset \Theta^\ast$ such that $m_s(x, a, r, \theta_m) \to m_s(x, a, r, \theta)$ as $m \to \infty$ for all $x, a, r$.
    \item (i) $\sup_{\theta \in \Theta} E[\|m(X, a, R(a), \theta)\|] < \infty$ for $a \in \{0, 1\}$. (ii) $\{m_s(x, 1, r, \theta): \theta \in \Theta^\ast\}$ and $\{m_s(x, 0, r, \theta): \theta \in \Theta^\ast\}$ are $Q$-Donsker for $1 \leq s \leq d_\theta$.
    \item For $a \in \{0, 1\}$, $E[ m_s(X, a, R(a), \theta_0) | X = x]$ is $C$-Lipschitz for $1 \leq s \leq d_\theta$, for some constant $C < \infty$.
\end{enumerate}
\end{assumption}

Assumption \ref{ass:normal}(a) is a standard assumption to ensure the solution to \eqref{eq:moments} is ``well separated.'' It appears as a condition, for instance, in Theorem 5.9 in \cite{van_der_vaart1998asymptotic}. Assumption \ref{ass:normal}(b) is a standard assumption used when deriving the properties of $Z$-estimators. See, for instance, Theorem 3.1 in \cite{newey1994large} and Theorem 5.21 in \cite{van_der_vaart1998asymptotic}. Because differentiability is imposed on their expectations instead of the moment functions themselves, the moment functions are allowed to be nonsmooth, as in Example \ref{ex:QTE}. Assumption \ref{ass:normal}(c) requires the moment function to be mean-square continuous in $\theta$. Assumption \ref{ass:normal}(d) is a standard condition to guarantee the measurability of the supremum of a suitable class of functions. In particular, it allows us to define expectations of suprema without invoking outer expectations. See Example 2.3.4 in \cite{van_der_vaart1996weak} for details. Assumption \ref{ass:normal}(e) is a standard assumption which guarantees the existence of an integrable envelope and allows us to invoke a uniform law of large numbers and a uniform central limit theorem (see, for instance, page 81 of \cite{van_der_vaart1996weak} for a definition of a Donsker class). In particular, this assumption can be verified for Examples \ref{ex:ATE}--\ref{ex:log_odds}. Assumption \ref{ass:normal}(f) is a common assumption which simplifies some arguments when studying highly stratified designs, and ensures units that are close in terms of the baseline covariates are also close in terms of their moments. Note that Assumption \ref{ass:normal}(f) could be dropped following the approximation arguments in Lemma C.5 of \cite{cytrynbaum2023designing}; see also Examples \ref{ex:MP} and \ref{ex:MP2} for further discussion.

The following theorem establishes the asymptotic variance of the na\"ive method of moments estimator when the treatment assignment mechanism is highly stratified in the sense of satisfying Assumptions \ref{ass:a}--\ref{ass:pair}. Its proof relies on a crucial technical result in \cite{han2021complex}, which allows us to compare the empirical process that depends on the treatment assignments with the empirical process that only depends on i.i.d.\ quantities. As a consequence of us leveraging this result, Assumption \ref{ass:normal} is comparable to the typical assumptions imposed to study the properties of method of moments estimators using i.i.d.\ data.

\begin{theorem} \label{thm:normal}
Suppose the treatment assignment mechanism satisfies Assumptions \ref{ass:a}--\ref{ass:pair} and the moment functions satisfy Assumption \ref{ass:normal}. Let $\hat{\theta}_n$ be defined as in \eqref{eq:est}. Then,
\begin{equation} \label{eq:normal}
\sqrt n(\hat \theta_n -  \theta_0) = \frac{1}{\sqrt{n}} \sum_{1 \leq i \leq n} \psi^\ast(X_i, A_i, R_i, \theta_0) + o_P(1)~.
\end{equation}
where
\begin{align*}
& \psi^\ast(X_i, A_i, R_i, \theta_0) \\
& = - M^{-1} \Big ( I\{A_i = 1\} (m(X_i, 1, R_i, \theta_0) - E[m(X_i, 1, R_i(1), \theta_0) | X_i]) \\
& \hspace{3.5em} + I\{A_i = 0\} (m(X_i, 0, R_i, \theta_0) - E[m(X_i, 0, R_i(0), \theta_0) | X_i]) \\
& \hspace{3.5em} + \eta E[m(X_i, 1, R_i(1), \theta_0) | X_i] + (1 - \eta) E[m(X_i, 0, R_i(0), \theta_0) | X_i] \Big )~.
\end{align*}
Further, we have that
\begin{equation} \label{eq:normal_convergence}
\sqrt n(\hat \theta_n -  \theta_0) \stackrel{d}{\to} N(0, \mathbb V_\ast)~,
\end{equation}
where
\begin{equation} \label{eq:Vstar}
    \mathbb V_\ast = \var[\psi^\ast(X_i, A_i, R_i, \theta_0)]~.
\end{equation}
\end{theorem}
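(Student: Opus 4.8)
The plan is to proceed in three stages: (i) consistency of $\hat\theta_n$; (ii) an asymptotic linear expansion reducing $\sqrt n(\hat\theta_n-\theta_0)$ to $-M^{-1}$ times the normalized score $\frac{1}{\sqrt n}\sum_{1\le i\le n}m(X_i,A_i,R_i,\theta_0)$; and (iii) a central limit theorem for this score under the finely stratified design, whose limiting variance is identified with $\var[\psi^\ast(X_i,A_i,R_i,\theta_0)]$. Stages (ii) and (iii) together give \eqref{eq:normal}, and \eqref{eq:normal_convergence} then follows from \eqref{eq:normal}. Throughout, write $\bar m(x,\theta)=\eta\,E[m(X,1,R(1),\theta)\mid X=x]+(1-\eta)\,E[m(X,0,R(0),\theta)\mid X=x]$; because $Q_n=Q^n$ the $X_i$ are i.i.d., $(X_i,A_i,R_i)$ are identically distributed with law $P$ by Lemma \ref{lem:marginal}, and Assumptions \ref{ass:unconfounded}--\ref{ass:a} with \eqref{eq:PO} give $E[m(X_i,A_i,R_i,\theta)\mid X^{(n)}]=\bar m(X_i,\theta)$, whence $E_P[m(X,A,R,\theta_0)]=E[\bar m(X,\theta_0)]=0$.

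For stage (i), I would prove the uniform law of large numbers $\sup_{\theta\in\Theta}\|\frac1n\sum_i m(X_i,A_i,R_i,\theta)-E[\bar m(X,\theta)]\|=o_P(1)$ by splitting the left side as $\frac1n\sum_i(m(X_i,A_i,R_i,\theta)-\bar m(X_i,\theta))+\frac1n\sum_i\bar m(X_i,\theta)$: the second term is an i.i.d.\ average handled by a standard ULLN under the VC, pointwise-measurability and envelope conditions in Assumptions \ref{ass:consistency}(b)--(d), while the first term, conditional on $X^{(n)}$, is an average of block sums that are independent across $j$ with conditional mean zero, so a conditional maximal-inequality argument (again via \ref{ass:consistency}(b)--(d)) makes it $o_P(1)$ uniformly; consistency then follows from \eqref{eq:est} and the well-separation condition \ref{ass:consistency}(a). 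For stage (ii), set $\mathbb G_n(\theta)=\frac{1}{\sqrt n}\sum_i\{m(X_i,A_i,R_i,\theta)-E[\bar m(X,\theta)]\}$, so that $0=\mathbb G_n(\hat\theta_n)+\sqrt n\,E[\bar m(X,\hat\theta_n)]$. Consistency plus asymptotic equicontinuity of $\mathbb G_n$ near $\theta_0$ gives $\mathbb G_n(\hat\theta_n)=\mathbb G_n(\theta_0)+o_P(1)=\frac{1}{\sqrt n}\sum_i m(X_i,A_i,R_i,\theta_0)+o_P(1)$; the equicontinuity is established by decomposing $\mathbb G_n(\theta)-\mathbb G_n(\theta_0)$ in the same way the score is decomposed in stage (iii) (an i.i.d.\ empirical process in $X$ plus conditionally block-independent remainders) and using the Donsker-type inputs in Assumptions \ref{ass:consistency}(b)--(d) and \ref{ass:normal}(b)--(d). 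Differentiability at $\theta_0$ (Assumption \ref{ass:normal}(a)) gives $\sqrt n\,E[\bar m(X,\hat\theta_n)]=M\sqrt n(\hat\theta_n-\theta_0)+o_P(\sqrt n\|\hat\theta_n-\theta_0\|)$, and since $M$ is nonsingular and the score is $O_P(1)$ by stage (iii), rearranging yields $\sqrt n(\hat\theta_n-\theta_0)=-M^{-1}\frac{1}{\sqrt n}\sum_i m(X_i,A_i,R_i,\theta_0)+o_P(1)$.

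Stage (iii) is the crux. Writing $g_a(x)=E[m(X,a,R(a),\theta_0)\mid X=x]$ and using \eqref{eq:PO}, I would decompose
\begin{align*}
\frac{1}{\sqrt n}\sum_{1\le i\le n}m(X_i,A_i,R_i,\theta_0)
&=\frac{1}{\sqrt n}\sum_i\Big(I\{A_i=1\}\big(m(X_i,1,R_i,\theta_0)-g_1(X_i)\big)+I\{A_i=0\}\big(m(X_i,0,R_i,\theta_0)-g_0(X_i)\big)\Big)\\
&\quad+\frac{1}{\sqrt n}\sum_i(A_i-\eta)\big(g_1(X_i)-g_0(X_i)\big)\\
&\quad+\frac{1}{\sqrt n}\sum_i\big(\eta g_1(X_i)+(1-\eta)g_0(X_i)\big)~,
\end{align*}
calling the three terms $\mathrm{(I)}$, $\mathrm{(II)}$, $\mathrm{(III)}$; by construction $\mathrm{(I)}+\mathrm{(III)}=-M\cdot\frac{1}{\sqrt n}\sum_i\psi^\ast(X_i,A_i,R_i,\theta_0)$. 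For $\mathrm{(II)}$, exactly $\ell$ of the $A_i$ in each block $\lambda_j$ equal $1$, so $\sum_{i\in\lambda_j}(A_i-\eta)=0$ and one may replace $g_1(X_i)-g_0(X_i)$ by its deviation from the within-block average; the Lipschitz conditions in Assumptions \ref{ass:consistency}(e)--(f) and \ref{ass:normal}(e)--(f) (transferred from $\Theta^\ast$ to $\theta_0$ via the approximating sequence in \ref{ass:consistency}(c) and the mean-square continuity in \ref{ass:normal}(c)) together with Assumption \ref{ass:pair} bound the conditional variance of $\mathrm{(II)}$ given $X^{(n)}$ by a constant times $\frac1n\sum_j\max_{i,i'\in\lambda_j}\|X_i-X_{i'}\|^2=o_P(1)$, so $\mathrm{(II)}=o_P(1)$. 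Term $\mathrm{(III)}$ is an i.i.d.\ sum of mean-zero vectors with finite second moment (Assumption \ref{ass:normal}(b)) and obeys the classical CLT; term $\mathrm{(I)}$ is, conditional on $(X^{(n)},A^{(n)})$, a sum of independent mean-zero summands whose normalized conditional variances converge in probability to $E[\eta\var(m(X,1,R(1),\theta_0)\mid X)+(1-\eta)\var(m(X,0,R(0),\theta_0)\mid X)]$, so a conditional Lindeberg CLT applies. Since $\mathrm{(III)}$ is $X^{(n)}$-measurable and $\mathrm{(I)}$ has conditional mean zero given $X^{(n)}$, a joint conditional/unconditional limit argument gives asymptotic normality of $\mathrm{(I)}+\mathrm{(III)}$ with limiting variance $M\,\var[\psi^\ast(X_i,A_i,R_i,\theta_0)]\,M'$ (the cross-covariance between the two summands vanishes by iterated expectations, as the $\mathrm{(III)}$ summand is $X_i$-measurable and the $\mathrm{(I)}$ summand has conditional mean zero given $X_i$). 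Chaining stages (ii) and (iii) then gives $\sqrt n(\hat\theta_n-\theta_0)=\frac{1}{\sqrt n}\sum_i\psi^\ast(X_i,A_i,R_i,\theta_0)+o_P(1)\overset{d}{\to}N(0,\mathbb V_\ast)$ with $\mathbb V_\ast$ as in \eqref{eq:Vstar}.

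The main obstacle is that the finely stratified design is not i.i.d., so none of the empirical-process ingredients can be quoted off the shelf: the ULLN in stage (i), the asymptotic equicontinuity in stage (ii), and the CLT in stage (iii) must each be re-derived by conditioning on $X^{(n)}$, exploiting independence across the blocks $\lambda_j$, and then recombining with the i.i.d.\ behaviour of the covariate-only pieces via a joint conditional/unconditional limit argument. A secondary but genuinely fiddly point is that Assumptions \ref{ass:consistency} and \ref{ass:normal} are stated over the countable set $\Theta^\ast$, so the relevant moment and Lipschitz bounds must first be transferred to $\theta_0$ using the approximating sequence of Assumption \ref{ass:consistency}(c) before the block-level variance estimate controlling $\mathrm{(II)}$ goes through.
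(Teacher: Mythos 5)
Your proposal is correct and follows essentially the same route as the paper: consistency via a design-adapted uniform law of large numbers, the $Z$-estimator expansion of Theorem 5.21 in \cite{van_der_vaart1998asymptotic} reduced to a stochastic-equicontinuity claim, and the three-way decomposition of the score (and of $\mathbb G_n(\theta)-\mathbb G_n(\theta_0)$) into a conditionally block-independent piece, a within-block piece killed by Assumptions \ref{ass:pair} and the Lipschitz conditions, and an i.i.d.\ empirical process in $X$. The only difference is one of presentation: the paper outsources the expansion of $\frac{1}{\sqrt n}\sum_i m(X_i,A_i,R_i,\theta_0)$ and the resulting CLT to Lemma B.3 of \cite{bai2022optimality} and spends its effort on the equicontinuity step, whereas you sketch the score CLT directly; the underlying arguments coincide.
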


In order to make Theorem \ref{thm:normal} useful for inference about $\theta_0$, we describe in Section \ref{sec:var-est} an estimator $\hat{\mathbb{V}}_n$ of $\mathbb{V}_*$. We now sketch an argument of the proof of Theorem \ref{thm:normal} to highlight the fundamental role played by a ``fast-balancing'' property of highly stratified designs (see \eqref{eq:fast-m} below). In the proof of Theorem \ref{thm:normal}, we first establish that
\begin{equation} \label{eq:vdv}
\sqrt n(\hat \theta_n -  \theta_0) = -M^{-1}\frac{1}{\sqrt{n}}\sum_{1 \le i \le n}m(X_i,A_i,R_i,\theta_0) + o_P(1)~.
\end{equation}
To further establish \eqref{eq:normal}, it thus suffices to show that, under a highly stratified design,
\begin{equation}\label{eq:by_design}
-M^{-1}\frac{1}{\sqrt{n}}\sum_{1 \le i \le n}m(X_i,A_i,R_i,\theta_0) = \frac{1}{\sqrt{n}} \sum_{1 \leq i \leq n} \psi^\ast(X_i, A_i, R_i, \theta_0) + o_P(1)~.
\end{equation}
To obtain this equivalence, consider the following decomposition of $m(\cdot)$:
\begin{align*}
& m(X_i, A_i, R_i, \theta_0) \\
& = \eta E[m(X_i, 1, R_i(1), \theta_0) | X_i] + (1 - \eta) E[m(X_i, 0, R_i(0), \theta_0) | X_i] \\
& \hspace{3.5em} + I\{A_i = 1\} (m(X_i, 1, R_i, \theta_0) - E[m(X_i, 1, R_i(1), \theta_0) | X_i]) \\
& \hspace{3.5em} + I\{A_i = 0\} (m(X_i, 0, R_i, \theta_0) - E[m(X_i, 0, R_i(0), \theta_0) | X_i]) \\
& \hspace{3.5em} + (A_i - \eta) (E[m(X_i, 1, R_i(1), \theta_0) - m(X_i, 0, R_i(0), \theta_0) | X_i]) ~.
\end{align*}
Then the equivalence follows if we can show that
\begin{equation}\label{eq:fast-m}\frac{1}{\sqrt{n}}\sum_{1 \le i \le n}(A_i - \eta)\left(E[m(X_i, 1,R_i(1),\theta_0) - m(X_i, 0,R_i(0),\theta_0)|X_i]\right) = o_P(1)~.
\end{equation}
We call \eqref{eq:fast-m} the fast-balancing condition for the function $m(\cdot)$. Intuitively, the fast-balancing condition imposes that the experimental design should balance the treatment across covariate values at a rate which is faster than sampling variation. To see why \eqref{eq:fast-m} holds for a highly stratified design, let $\Omega(X_i) = E[m(X_i, 1,R_i(1),\theta_0) - m(X_i, 0,R_i(0),\theta_0)|X_i]$ and first note that, by Assumption \ref{ass:unconfounded},
\[E\bigg[\frac{1}{\sqrt{n}}\sum_{1 \le i \le n}(A_i - \eta)\Omega(X_i) \bigg \vert X^{(n)}\bigg] = 0~.\]
Next, for $1 \leq s \leq d_\theta$, let $\Omega^{(s)}(X_i)$ denote the $s$th component of $\Omega(X_i)$. Then it can be shown using Assumption \ref{ass:a} and \ref{ass:normal}(f) that for $1 \leq s \leq d_\theta$,
\[\var\bigg[\frac{1}{\sqrt{n}}\sum_{1 \le i \le n}(A_i - \eta)\Omega^{(s)}(X_i) \bigg \vert X^{(n)}\bigg] \le C^2\frac{\ell(k - \ell)}{k - 1}\left(\frac{1}{n}\sum_{1 \le j \le n/k} \max_{i, i' \in \lambda_j} \|X_i - X_{i'}\|^2\right)~,\]
where $C$ denotes the Lipschitz constant in Assumption \ref{ass:normal}(f), and so the conditional variance converges in probability to zero under Assumption \ref{ass:pair}. The fast-balancing condition \eqref{eq:fast-m} then follows by an application of Chebyshev's inequality conditional on $X^{(n)}$ and the dominated convergence theorem. In Section \ref{sec:regular}, we further argue that the fast-balancing condition \eqref{eq:fast-m} is in fact a \emph{necessary} condition which a given experimental design must satisfy to ensure \eqref{eq:by_design}. 

\begin{remark}\label{rem:psi_examples_ate}
Note it follows from \eqref{eq:moments} that
\begin{equation} \label{eq:meanzero}
\eta E_Q[m(X_i, 1, R_i(1), \theta_0)] + (1 - \eta) E_Q[m(X_i, 0, R_i(0), \theta_0)] = E_P[m(X_i, A_i, R_i, \theta_0)] = 0~,     
\end{equation}
so that $E[\psi^\ast(X_i, A_i, R_i, \theta_0)] = 0$. It is further straightforward to show using Assumption \ref{ass:unconfounded} that
\begin{align}
    \label{eq:var} \mathbb V_\ast & = \var[\psi^\ast(X_i, A_i,R_i, \theta_0)] \\
    \nonumber & = M^{-1} \big ( E \big [ \eta \var[m(X_i, 1, R_i(1), \theta_0) | X_i] + (1 - \eta) \var[m(X_i, 0, R_i(0), \theta_0) | X_i] \big ] \\
    \nonumber & \hspace{2em} + \var \big [ \eta E[m(X_i, 1, R_i(1), \theta_0) | X_i] + (1 - \eta) E[m(X_i, 0, R_i(0), \theta_0) | X_i] \big ] \big ) (M^{-1})'
\end{align}
For instance, in the special case of the ATE (Example \ref{ex:ATE}) we obtain that 
\begin{align}
     \nonumber \var[\psi^\ast(X_i, A_i,R_i, \theta_0)] & = E\left[\frac{\var[Y_i(1)|X_i]}{\eta} + \frac{\var[Y_i(0)|X_i]}{1 - \eta} \right. \\
    \label{eq:var-ate} & \hspace{5em} + \left. \left(E[Y_i(1)- Y_i(0)|X_i] - E[Y_i(1) - Y_i(0)]\right)^2\right]~,
\end{align}
which matches the asymptotic variance derived in \cite{bai2022inference} for matched pairs. Theorem \ref{thm:normal} however accommodates a much larger class of parameters, including those introduced in Examples \ref{ex:QTE}--\ref{ex:log_odds}.
\end{remark}

\begin{remark} \label{rem:V_compare}
By comparing the variance expression in \eqref{eq:naive_variance} to the variance expression for $\mathbb{V}_*$, we obtain
\begin{equation} \label{eq:V-Vast}
\mathbb{V} - \mathbb{V}_* = \eta (1 - \eta) M^{-1} \var[E[m(X_i, 1, R_i(1), \theta_0) - m(X_i, 0, R_i(0), \theta_0) | X_i]](M^{-1})^{\prime}~,    
\end{equation}
which is positive semidefinite. From this, we conclude that the asymptotic variance of the naive method of moments estimator $\hat{\theta}_n$ is lower in a highly stratified design compared to i.i.d.\ assignment. In Section \ref{sec:semi}, we will further show that $\mathbb V_\ast$ is the lowest possible asymptotic variance among regular estimators for $\theta_0$ in a large class of treatment assignment mechanisms, including both i.i.d.\ assignment and highly stratified designs. When $d_\theta = 1$, we may express $\mathbb V - \mathbb V_\ast$ in terms of the ``nonparametric $R^2$.''  In particular, $\mathbb V - \mathbb V_\ast$ is proportional to $E[R_{g, X}^2(g_i, X_i) \var[g_i]]$, where 
\begin{equation} \label{eq:R2}
R_{g, X}^2(g_i, X_i) = \frac{\var[E[g_i|X_i]]}{\var[g_i]}~,
\end{equation}
and $g_i = m(X_i, 1, R_i(1), \theta_0) - m(X_i, 0, R_i(0), \theta_0)$.  The quantity in \eqref{eq:R2} measures how much of the variation in $g_i$ can be explained nonparametrically by $X_i$. See, for instance, \cite{chernozhukov2024long}.
\end{remark}

\begin{remark} \label{rem:complete_rand}
Note that highly stratified designs include as a special case completely randomized designs, i.e., experiments in which a fixed proportion of the entire sample is assigned to treatment uniformly at random. To see this, consider, for instance, simply matching on an exogenously generated covariate. In this special case, we find from \eqref{eq:V-Vast} that $\mathbb{V}_{\ast} = \mathbb{V}$ as defined in Section \ref{sec:setup}. In this way, we see that completely randomized experiments are asymptotically no more efficient than i.i.d.\ assignment.
\end{remark}

\subsection{Variance Estimation} \label{sec:var-est}
In this subsection, we provide a consistent variance estimator for the asymptotic variance $\mathbb V_\ast$ in \eqref{eq:var}. We suppose that a consistent estimator $\widehat M_n$ for $M$ is available, i.e., $\widehat M_n \xrightarrow{P} M$. In examples where $m$ is differentiable in $\theta$, including Examples \ref{ex:ATE} and \ref{ex:LATE}--\ref{ex:log_odds}, the analog principle suggests that a natural estimator for $M$ is given by
\[ \widehat M_n = \frac{1}{n} \sum_{1 \leq i \leq n} \frac{\partial}{\partial \theta'} m(X_i, A_i, R_i, \theta)\bigg|_{\theta = \hat \theta_n}~. \]
In examples including Example \ref{ex:QTE} where $m$ is nonsmooth in $\theta$, $M$ may consist of components that require nonparametric estimators. See, for instance, \cite{jiang2021bootstrap}.

It then suffices to construct a consistent estimator for the ``meat" in \eqref{eq:var}. To motivate such an estimator, consider the expression in \eqref{eq:var} when $d_\theta = 1$. By the law of total variance, this middle component equals $\Sigma_1 + \Sigma_2$, where
\begin{align*}
\Sigma_1 & = \eta \var[m(X_i, 1, R_i(1), \theta_0)] + (1 - \eta) \var[m(X_i, 0, R_i(0), \theta_0)] \\
\Sigma_2 & = - \eta (1 - \eta) E \big [ \big ( E[m(X_i, 1, R_i(1), \theta_0) | X_i] - E[m(X_i, 1, R_i(1), \theta_0)] \\
& \hspace{5em} - (E[m(X_i, 0, R_i(0), \theta_0) | X_i] - E[m(X_i, 0, R_i(0), \theta_0)]) \big )^2 \big ] \\
& = - \eta (1 - \eta) \Big (E[E[m(X_i, 1, R_i(1), \theta_0) | X_i]^2] + E[E[m(X_i, 0, R_i(0), \theta_0) | X_i]^2] \\
& \hspace{7em} - 2E[E[m(X_i, 1, R_i(1), \theta_0) | X_i] E[m(X_i, 0, R_i(0), \theta_0) | X_i]] \\
& \hspace{7em} - (E[m(X_i, 1, R_i(1), \theta_0)] - E[m(X_i, 0, R_i(0), \theta_0)])^2  \Big )~.
\end{align*}
For $a \in \{0, 1\}$, define
\[ \hat \mu_n(a) = \frac{1}{\eta_a n} \sum_{1 \leq i \leq n} I \{A_i = a\} m(X_i, A_i, R_i, \hat \theta_n)~, \]
where $\eta_1 = \eta$ and $\eta_0 = 1 - \eta$.
The analog principle suggests that a natural estimator for $\Sigma_1$ is
\begin{align*}
\hat \Sigma_{1, n} = \frac{1}{n} \sum_{1 \leq i \leq n} I \{A_i = 1\} (m(X_i, A_i, R_i, \hat \theta_n) - \hat \mu_n(1)) (m(X_i, A_i, R_i, \hat \theta_n) - \hat \mu_n(1))' \\
\hspace{3em} + \frac{1}{n} \sum_{1 \leq i \leq n} I \{A_i = 0\} (m(X_i, A_i, R_i, \hat \theta_n) - \hat \mu_n(0)) (m(X_i, A_i, R_i, \hat \theta_n) - \hat \mu_n(0))'~.
\end{align*}
To estimate $\Sigma_2$, we first define
\begin{align*}
\hat \varsigma_n(1, 0) & = \frac{k}{n} \sum_{1 \leq j \leq n / k} \frac{1}{\ell(k - \ell)} \sum_{i, i' \in \lambda_j: A_i = 1, A_{i'} = 0} m(X_i, A_i, R_i, \hat \theta_n) m(X_{i'}, A_{i'}, R_{i'}, \hat \theta_n)'
\end{align*}
and $\hat \varsigma_n(0, 1)$ similarly. Next, define
\[ \hat \varsigma_n(1, 1) = \begin{cases}
\frac{k}{n} \sum\limits_{1 \leq j \leq n / k} \frac{1}{\binom{\ell}{2}} \sum\limits_{i < i' \in \lambda_j: A_i = A_{i'} = 1} m(X_i, A_i, R_i, \hat \theta_n) \\
\hspace{6cm} \times m(X_{i'}, A_{i'}, R_{i'}, \hat \theta_n)' & \text{ if } \ell > 1 \\
\frac{2k}{n} \sum\limits_{1 \leq j \leq \frac{n}{2k}} {\sum\limits_{i \in \lambda_{2j}, i' \in \lambda_{2j - 1}: A_i = A_{i'} = 1}} m(X_i, A_i, R_i, \hat \theta_n) \\
\hspace{6cm} \times m(X_{i'}, A_{i'}, R_{i'}, \hat \theta_n)' & \text{ if } \ell = 1~.
\end{cases} \]
Similarly, define
\[ \hat \varsigma_n(0, 0) = \begin{cases}
\frac{k}{n} \sum\limits_{1 \leq j \leq n / k} \frac{1}{\binom{k - \ell}{2}} \sum\limits_{i < i' \in \lambda_j: A_i = A_{i'} = 0} m(X_i, A_i, R_i, \hat \theta_n) \\
\hspace{6cm} \times m(X_{i'}, A_{i'}, R_{i'}, \hat \theta_n)' & \text{ if } k - \ell > 1 \\
\frac{2k}{n} \sum\limits_{1 \leq j \leq \frac{n}{2k}} {\sum\limits_{i \in \lambda_{2j}, i' \in \lambda_{2j - 1}: A_i = A_{i'} = 0}} m(X_i, A_i, R_i, \hat \theta_n) \\
\hspace{6cm} \times m(X_{i'}, A_{i'}, R_{i'}, \hat \theta_n)' & \text{ if } k - \ell = 1~.
\end{cases} \]
Finally, define
\[ \hat \Sigma_{2, n} = - \eta(1 - \eta) \big ( \hat \varsigma_n(1, 1) + \hat \varsigma_n(0, 0) - \hat \varsigma_n(1, 0) - \hat \varsigma_n(0, 1) - (\hat \mu_n(1) - \hat \mu_n(0)) (\hat \mu_n(1) - \hat \mu_n(0))' \big )~. \]
The estimator $\hat \varsigma_n(1, 1)$ is constructed in one of two ways depending on the number of treated units in each block. If more than one unit in each block is treated, then we take the averages of all pairwise products of the treated units in each block, and average them across all blocks. We call this a ``within block" estimator. If instead only one unit in each block is treated, then we take the product of two treated units in \emph{adjacent} blocks. We call this a ``between block" estimator, and note that similar constructions have been used previously in \cite{abadie2008estimation}, \cite{bai2022inference}, \cite{bai2024inference-1}, and \cite{cytrynbaum2023designing}. The estimator $\hat \varsigma_n(0, 0)$ is constructed similarly.
A natural estimator for $\mathbb{V}_\ast$ is then given by
\[\hat{\mathbb{V}}_n = \widehat{M}_n^{-1} \left(\hat{\Sigma}_{1,n} + \hat{\Sigma}_{2,n}\right) \left (\widehat{M}_n^{-1} \right )'~. \]
Note that for specific choices of $m(\cdot)$, $\hat{\mathbb{V}}_n$ recovers estimators which have been studied in prior work on inference in highly stratified experiments.  For instance, in the case of matched pairs with $(\ell, k) = (1,2)$ and $m(\cdot)$ as in Example \ref{ex:ATE}, so that $\theta_0$ is the ATE, $\hat{\mathbb{V}}_n$ exactly coincides with the estimator defined in equation (28) of \cite{bai2022inference}.

In addition to Assumption \ref{ass:pair}, we will now also require that the distances between units in \emph{adjacent} blocks be ``close" in terms of their baseline covariates:
\begin{assumption} \label{ass:pairsofblocks}
The blocks used in determining treatment status satisfy
\[ \frac{1}{n} \sum_{1 \leq j \leq \lfloor n / 2 \rfloor} \max_{i \in \lambda_{2j - 1}, i' \in \lambda_{2j}} \|X_i - X_{i'}\|^2 \stackrel{P}{\to} 0~. \]
\end{assumption}

Note that given blocks which satisfy Assumption \ref{ass:pair}, it is always possible to re-order the blocks such that the pairs $\{(\lambda_{2j-1}, \lambda_{2j})\}_{1 \le j \le \lfloor n/2 \rfloor}$ satisfy Assumption \ref{ass:pairsofblocks}, as long as we maintain the sufficient condition that $E[\|X_i\|^d] < \infty$ for $d \geq \mathrm{dim}(X_i) + 1$. This property could be achieved, for instance, by applying the {\tt nbpmatching} algorithm to the block-means $\{\bar{X}_j\}_{1 \le j \le n/k}$, where $\bar{X}_j := \frac{1}{k}\sum_{i \in \lambda_j}X_i$: see Lemma A.6 in \cite{cytrynbaum2023covariate} for details.

To formally establish the consistency of $\hat{\mathbb V}_n$, we impose the following mild uniform integrability condition, as well as a Glivenko-Cantelli and uniform Lipschitz condition. All three conditions need only hold in an arbitrarily small neighborhood of $\theta_0$.
\begin{assumption} \label{ass:var-est}
There exists $\delta > 0$ such that
\begin{enumerate}[(a)]
    \item For $a \in \{0, 1\}$,
    \begin{equation*}
        \lim_{\lambda \to \infty} E \left [ \sup_{\theta \in \Theta: \|\theta - \theta_0\| < \delta} \|m(X_i, a, R_i(a), \theta)\|^2 I \left \{ \sup_{\theta \in \Theta} \|m(X_i, a, R_i(a), \theta)\| > \lambda \right \} \right ] = 0~.
    \end{equation*}    
    \item $\{E[m_s(X_i, a, R_i(a), \theta) | X_i = x]: \|\theta - \theta_0\| < \delta\}$ and $\{E[m_s(X_i, a, R_i(a), \theta) m(X_i, a, \allowbreak R_i(a), \theta)'| X_i = x]: \|\theta - \theta_0\| < \delta\}$ are $Q$-Glivenko Cantelli for $1 \leq s \leq d_\theta$.
    \item For $a \in \{0, 1\}$, each component of $E[m(X, a, R(a), \theta) | X = x]$ and $E[m(X, a, R(a), \theta) \allowbreak m(X, a, R(a), \theta)' | X = x]$ is Lipschitz with a common Lipschitz constant across $\{\theta \in \Theta: \|\theta - \theta_0\| < \delta\}$.
\end{enumerate}
\end{assumption}

Assumption \ref{ass:var-est}(a) is a mild uniform integrability condition for the envelope function of the moment function in an arbitrarily small neighborhood of $\theta_0$. Assumption \ref{ass:var-est}(b) is a mild condition that requires the conditional expectation of the moment functions to satisfy a uniform law of large numbers. See page 81 of \cite{van_der_vaart1996weak} for a definition of a Glivenko-Cantelli class of functions. Assumption \ref{ass:var-est}(c) strengthens Assumption \ref{ass:normal}(f) to hold uniformly in an arbitrarily small neighborhood of $\theta_0$.

The following theorem establishes the consistency of $\hat{\mathbb V}_n$ for $\mathbb V_\ast$:

\begin{theorem} \label{thm:var-est}
Suppose the treatment assignment mechanism satisfies Assumptions \ref{ass:a}, \ref{ass:pair}, and \ref{ass:pairsofblocks} and the moment functions satisfy Assumptions \ref{ass:normal} and \ref{ass:var-est}. Further suppose $\widehat M_n \xrightarrow{P} M$. Then, $\hat{\mathbb V}_n \xrightarrow{P} \mathbb V_\ast$. 
\end{theorem}

\section{An Efficiency Bound and the Necessity of ``Fast-Balancing''}\label{sec:semi-reg}
In Section \ref{sec:semi} we establish that $\mathbb{V}_{\ast}$ is the efficiency bound for a large class of experimental designs. As a consequence, we can conclude that highly stratified designs are asymptotically efficient ``by design.'' Building on this result, in Section \ref{sec:regular} we establish that a necessary condition for achieving the bound $\mathbb{V}_{\ast}$ when estimating $\theta_0$ using the na\"ive method of moments estimator is that the experimental design be fast-balancing, in the sense of \eqref{eq:fast-m}.

\subsection{Efficiency Bound}\label{sec:semi}
An inspection of the asymptotic variance in \eqref{eq:Vstar} reveals that $\mathbb V_\ast$ in fact coincides with the classical efficiency bound for estimating $\theta_0$ with i.i.d.\ assignment. For example, the variance derived in \eqref{eq:var-ate} coincides with the efficiency bound derived in \cite{hahn1998role} for estimating the ATE with a known marginal treatment probability $\eta$.   Therefore, another way to interpret our result in Theorem \ref{thm:normal} is that the standard i.i.d.\ efficiency bound can be attained by a na\"ive method of moments estimator under a highly stratified design. On the other hand, because treatment status is not independent in a highly stratified design, a natural follow-up question is whether or not the efficiency bound for estimating $\theta_0$ changes relative to what can be obtained under i.i.d.\ assignment once we allow for more general assignment mechanisms. In this section, we show that $\mathbb{V}_*$ continues to be the efficiency bound for the class of parameters introduced in Section \ref{sec:setup}, while allowing for a more general class of treatment assignment mechanisms.  The main restriction on treatment assignment is given by Assumption \ref{ass:unconfounded}, which requires the marginal treatment probability to be known and equal to $\eta$.  As mentioned earlier and explained in Remark \ref{rem:generaleta} below, it is possible to relax this requirement so that $\eta$ can be replaced by a known function $\eta(X_i)$.  For a discussion of how our efficiency bound compares with other results in the literature, see Remark \ref{rem:other_SPEB}. 

We impose the following high-level assumption on the assignment mechanism:

\begin{assumption} \label{ass:LLN}
The treatment assignment mechanism is such that for any integrable function $\gamma: \mathbf R^{d_x} \to \mathbf R$,
\[ \frac{1}{n} \sum_{1 \leq i \leq n} A_i \gamma(X_i) \stackrel{P}{\to} \eta E[\gamma(X_i)]~. \]
\end{assumption}
\noindent In words, Assumption \ref{ass:LLN} requires that the assignment mechanism admits a law of large numbers for integrable functions of the covariate values. Examples \ref{ex:iid}--\ref{ex:rerand} illustrate that the assumption holds for common treatment assignment mechanisms used in practice.

\begin{example}[i.i.d.\ assignment]\label{ex:iid}
Let $A^{(n)}$ be assigned i.i.d., independently of $X^{(n)}$, such that $P\{A_i = 1\} = \eta$. Then it follows immediately by the law of large numbers that Assumption \ref{ass:LLN} is satisfied.
\end{example}

\begin{example}[Covariate-adaptive randomization (CAR)]\label{ex:CAR}
Let $S: \mathbf R^{d_x} \to \mathcal S = \{1, \ldots, \allowbreak |\mathcal S|\}$ be a function that maps the covariates into a fixed, finite set of discrete strata.  We call such a stratification ``coarse", to distinguish it from highly stratified designs as defined in Section \ref{sec:variance}. Define $S_i = S(X_i)$ and assume that treatment status is assigned so that
\[ (R^{(n)}(1), R^{(n)}(0), X^{(n)}) \indep A^{(n)} \big | S^{(n)}~, \]
and that for $s \in \mathcal S$,
\[ \frac{\sum_{1 \leq i \leq n} I \{S_i = s, A_i = 1\}}{\sum_{1 \leq i \leq n} I \{S_i = s\}} \stackrel{P}{\to} \eta~. \]
This high-level assumption accommodates a large class of stratified assignment mechanisms, including stratified biased coin designs \citep[]{efron1971forcing, wei1978adaptive}, minimization methods \citep[][]{pocock1975sequential,hu2012asymptotic} and stratified block randomization \citep[see][for an early discussion]{zelen1974randomization}. It follows from Lemma C.4 in \cite{bugni2019inference} that for any integrable function $\gamma(\cdot)$,
\[ \frac{1}{n} \sum_{1 \leq i \leq n} A_i \gamma(X_i) \stackrel{P}{\to} 
\eta \sum_{s \in \mathcal S} P \{S_i = s\} E[\gamma(X_i)| S_i = s] = \eta E[\gamma(X_i)]~. \]
Therefore, Assumption \ref{ass:LLN} is satisfied.
\end{example}

\begin{example}[CAR with general covariate features] \label{ex:ma}
\cite{ma2024new} propose a family of covariate adaptive randomization procedures which assign treatment sequentially based on an imbalance metric defined by ``feature maps'' of (potentially continuous) covariates. It follows by Theorem 3.5 of their paper that Assumption \ref{ass:LLN} is satisfied under appropriate conditions.
\end{example}

\begin{example}[Matched pairs] \label{ex:MP}
Suppose $n$ is even and we assign treatment using a highly stratified design with $(\ell, k) = (1,2)$. As discussed at the beginning of Section \ref{sec:variance}, such a design is also known as a matched pairs design. Assume that the pairing algorithm $\pi_n(X^{(n)})$ results in pairs that are close in the sense of Assumption \ref{ass:pair}. It then follows from the same argument used to establish \eqref{eq:fast-m} in the discussion following Theorem \ref{thm:normal} that for any Lipschitz integrable function $\gamma(\cdot)$,
\[ \frac{1}{n} \sum_{1 \leq i \leq n} A_i \gamma(X_i) \stackrel{P}{\to} \frac{1}{2} E[\gamma(X_i)]~. \]
By approximating integrable functions by Lipschitz integrable functions as in Lemma A.1 in \cite{hanneke2021universal}, it can be shown that the convergence holds for any integrable function $\gamma(\cdot)$. Therefore, Assumption \ref{ass:LLN} is satisfied.
\end{example}

\begin{example}[Re-randomization]\label{ex:rerand}
Re-randomization is an assignment mechanism in which researchers specify a balance criterion for the covariates, and then repeatedly generate assignments using a completely randomized design until an assignment is found which achieves an acceptable covariate distribution according to the balance criterion. The properties of re-randomization procedures have been studied in \cite{li2017general,li2020rerandomization}, \cite{li2018asymptotic,li2020rerandomization-1}, and \cite{cytrynbaum2024finely}. It follows from Corollary 3.7 in \cite{cytrynbaum2024finely} that Assumption \ref{ass:LLN} holds for re-randomization designs, under appropriate assumptions.
\end{example}

We now present an efficiency bound for the parameter $\theta_0$ introduced in Section \ref{sec:setup}. Formally, we characterize the bound via a convolution theorem that applies to all regular estimators of the parameter $\theta_0$. Following the definition on page 365 of \cite{van_der_vaart1998asymptotic}, by a regular estimator, we mean an estimator whose asymptotic distribution is invariant to ``local" perturbations of the data generating process:
\[ \sqrt n(\tilde \theta_n - \theta(P_{t/\sqrt n, g})) \xrightarrow{P_{t/\sqrt n, g}} L ~, \]
where $P_{t/\sqrt n,g}$ represents a ``local" perturbation of the distribution $P$ along a ``path'' with ``score" $g$. We leave the precise definition of regularity and related assumptions to Supplement A.2. In the paragraph following the statement of the theorem we provide some more details on the nature of our result.

\begin{theorem} \label{thm:efficiencybound}
Suppose Assumptions \ref{ass:unconfounded}, \ref{ass:normal}(b), and \ref{ass:LLN} hold, as well as Condition A.1 described in Supplement A.2. Further suppose $\mathbb V_\ast < \infty$. Let $\tilde{\theta}_n$ be any regular estimator of the parameter $\theta_0$ in the sense of (S.16) in Supplement A.2. Then,  
\[\sqrt{n}(\tilde{\theta}_n - \theta_0) \xrightarrow{d} L~,\]
where
\[L = N(0, \mathbb V_\ast) \ast B~, \]
for $\mathbb V_\ast$ in \eqref{eq:Vstar} and some fixed probability measure $B$ which is specific to the estimator $\tilde{\theta}_n$.
\end{theorem}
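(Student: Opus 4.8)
The plan is to establish the convolution theorem by embedding the non-i.i.d.\ experimental design into a framework where a local asymptotic normality (LAN) expansion holds along smooth parametric submodels of $Q$, and then invoking the abstract convolution theorem (e.g., Theorem 3.11.2 in \cite{van_der_vaart1996weak} or the version in \cite{van_der_vaart1998asymptotic}). The key conceptual point is that the \emph{only} source of randomness that a regular estimator can exploit is the i.i.d.\ structure $Q_n = Q^n$ of the potential outcomes and covariates; the treatment assignment $A^{(n)}$, though dependent across units, is conditionally independent of $(R^{(n)}(1), R^{(n)}(0))$ given $X^{(n)}$ by Assumption \ref{ass:unconfounded}, so it contributes no additional information about $\theta_0$ beyond what is needed to compute the moment function. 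First I would set up one-dimensional (and then finite-dimensional) parametric paths $t \mapsto Q_t$ through $Q$ with score $g \in L_2(Q)$, restricted to the tangent set allowed by the regularity conditions \eqref{eq:qmd-1}, \eqref{eq:qmd-2}, and verify that the log-likelihood ratio of $P_{n,t}$ versus $P_n$ (the induced distributions of the observed data under the given assignment mechanism) admits the quadratic expansion $\log \frac{dP_{n,t/\sqrt n}}{dP_n} = \frac{t}{\sqrt n}\sum_{i} g(R_i(1),R_i(0),X_i) - \frac{t^2}{2}\mathbb{E}[g^2] + o_{P_n}(1)$, with the Fisher information $\mathbb{E}[g^2]$ unaffected by the assignment mechanism because of unconfoundedness. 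Assumption \ref{ass:bounded_path} is what guarantees the induced paths stay inside the model and that the remainder is negligible.

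Next I would compute the pathwise derivative of $\theta_0$ along these submodels. Differentiating the identity $\mathbb{E}_{P_t}[m(X_i,A_i,R_i,\theta_0(t))] = 0$ in $t$ and using $M$ invertible (Assumption \ref{ass:var_lipschitz}(b)), one gets $\dot\theta_0 = -M^{-1}\mathbb{E}[m(X_i,A_i,R_i,\theta_0)\,\dot\ell]$, where $\dot\ell$ is the score of the observed-data model. Using Assumption \ref{ass:unconfounded} to integrate out $A_i$ given $X_i$, this reduces to an expression involving only $g$ and the conditional moments $\mathbb{E}[m(X_i,a,R_i(a),\theta_0)\mid X_i]$, which I would show equals $\mathbb{E}[\psi^\ast(X_i,A_i,R_i,\theta_0)\,g(R_i(1),R_i(0),X_i)]$ for the $\psi^\ast$ of Theorem \ref{thm:normal}. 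The content here is a direct calculation: the cross-terms between the $I\{A_i=a\}$-indicators in $\psi^\ast$ and $g$, after conditioning on $X^{(n)}$ and using that $A^{(n)}\indep(R^{(n)}(1),R^{(n)}(0))\mid X^{(n)}$ together with $P\{A_i=1\mid X_i\}=\eta$, collapse to exactly the derivative of $\theta_0$. This identifies $\psi^\ast$ as a \emph{gradient} of the functional $\theta_0$ relative to the tangent set, and Assumption \ref{ass:var_lipschitz}(a) ensures $\psi^\ast$ (suitably projected) lies in the closed linear span of the tangent set, making it the \emph{efficient} influence function and $\mathbb{V}_\ast = \var[\psi^\ast]$ the efficiency bound. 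It is precisely at this step that Assumption \ref{ass:info} enters: it is used to verify that the empirical process $\frac{1}{\sqrt n}\sum_i(\text{contribution of }A_i)$ behaves correctly in the limit, so that the averaged quantities $\eta\mathbb{E}[\gamma_1(X_i)] + (1-\eta)\mathbb{E}[\gamma_0(X_i)]$ appearing in $\psi^\ast$ are the right limiting objects regardless of the specific (possibly dependent) assignment scheme.

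Finally, with the LAN expansion and the pathwise-differentiability/gradient computation in hand, I would feed these into the abstract convolution theorem: for any sequence $\tilde\theta_n$ that is regular in the sense of \eqref{eq:regular} — i.e., $\sqrt n(\tilde\theta_n - \theta_0(t/\sqrt n))$ has a limit law $L$ under $P_{n,t/\sqrt n}$ not depending on $t$ — the limit law must factor as $L = N(0,\mathbb{V}_\ast)\ast B$ for some probability measure $B$. This is a mechanical application once the ingredients are verified; the substantive obstacles are all upstream. I expect the \textbf{main obstacle} to be establishing the LAN expansion for the \emph{dependent} observed-data sequence $(R^{(n)},A^{(n)},X^{(n)})$: because $P_n$ is not a product measure, one cannot simply quote the classical i.i.d.\ LAN result, and one must carefully argue that the log-likelihood ratio still only "sees" the product structure of $Q_n = Q^n$ — conditioning on $X^{(n)}$, writing $P_n$ as a mixture over $A^{(n)}$ with the \emph{same} assignment kernel under $Q_t$ and under $Q$ (this is where the known, fixed nature of the assignment mechanism and $\eta$ is essential, so the kernel does not vary with $t$), and then applying a conditional/martingale-array version of the quadratic-mean-differentiability argument to the $Q$-part. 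The secondary difficulty is the measure-theoretic bookkeeping needed to ensure the induced paths $P_{n,t}$ are well-defined and dominated (handled by Assumption \ref{ass:bounded_path} and conditions \eqref{eq:qmd-1}--\eqref{eq:qmd-2}), and to confirm that restricting attention to Lipschitz $\gamma_a$ in Assumption \ref{ass:info} is compatible with the tangent set being rich enough that $\psi^\ast$ is genuinely the efficient influence function rather than merely a gradient.
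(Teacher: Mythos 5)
Your overall architecture --- quadratic-mean-differentiable submodels of $Q$, a LAN expansion for the induced observed-data likelihoods, a pathwise-derivative calculation identifying $\psi^\ast$ as a gradient, and then the abstract convolution theorem --- is the same as the paper's (which carries this out in Lemmas \ref{lem:ll}, \ref{lem:differentiability} and \ref{lem:convolution}, leaning on Theorem 3.1 and Corollary 3.1 of \cite{armstrong2022asymptotic} for the likelihood expansion). However, there are two points where your argument as written would not go through. First, you misplace the role of Assumption \ref{ass:info}. The log-likelihood ratio expansion for a general dependent assignment mechanism has a \emph{random} quadratic term, $-\tfrac{1}{2n}\sum_i\sum_a I\{A_i=a\}\,t' I^{R(a)|X}(X_i)t$, and Assumption \ref{ass:info} is exactly what forces this to converge in probability to the constant $-\tfrac{1}{2}t'It$, i.e.\ it is what delivers LAN rather than merely a quadratic expansion. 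It is not needed, as you suggest, to make the averaged quantities in $\psi^\ast$ ``the right limiting objects'' in the derivative computation --- the pathwise derivative in Lemma \ref{lem:differentiability} is a deterministic calculation involving only the marginal probability $\eta(x)$. Relatedly, your displayed expansion writes the score as $g(R_i(1),R_i(0),X_i)$; the observed-data score can only depend on $(X_i,A_i,R_i)$ and takes the form $s_g(x,a,r)=g^X(x)+I\{a=1\}g^{R(1)|X}(r|x)+I\{a=0\}g^{R(0)|X}(r|x)$ as in \eqref{eq:score}.

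Second, and more substantively, your final step asserts that $\psi^\ast$ ``lies in the closed linear span of the tangent set, making it the efficient influence function.'' The paper explicitly cannot take this route: for arbitrary assignment mechanisms satisfying Assumption \ref{ass:info} one cannot verify that the collection of admissible scores forms a linear space (or even a convex cone) while simultaneously establishing LAN, so the classical project-the-gradient-onto-the-tangent-space argument is unavailable. The paper's workaround is to apply Theorem 3.11.2 of \cite{van_der_vaart1996weak} to each finite-dimensional submodel $\mathcal M_g=\{t's_g\}$ \emph{separately}, obtaining $L=N(0,V_g)\ast B_g$ with $V_g=E_P[\psi^\ast s_g']E_P[s_gs_g']^{-1}E_P[s_g\psi^{\ast\prime}]$, and then observing that $V_g$ is maximized (in the matrix sense) when $s_g=\psi^\ast$; the point of Assumption \ref{ass:var_lipschitz}(a) is precisely that $\psi^\ast$ itself satisfies Condition \ref{cond:A1} (in particular the Lipschitz requirement on the conditional second moments), so this maximum is attained by an admissible submodel. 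Your intuition that $\psi^\ast$ must itself be an achievable score is the right one, but you need to restructure the last step around the submodel-by-submodel application of the convolution theorem rather than around linearity of the tangent set.
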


Given Theorem \ref{thm:efficiencybound} we call $\mathbb V_\ast = \var[\psi^\ast(X_i, A_i,R_i, \theta_0)]$ the efficiency bound for $\theta_0$, since our result shows that this is the lowest asymptotic variance attainable by any regular estimator under our assumptions. Indeed, it follows from Anderson's lemma \citep[Lemma 8.5 in][]{van_der_vaart1998asymptotic} that the asymptotic loss of any regular estimator is bounded below by the loss under $N(0, \mathbb V_\ast)$ for any ``bowl-shaped'' loss function (including, in particular, square loss). We note that our assumptions on the assignment mechanism preclude us from immediately appealing to standard semi-parametric convolution theorems \citep[see, for instance, Theorem~25.20 in][]{van_der_vaart1989asymptotic}. Instead, we proceed by justifying an application of Theorem 3.1 in \cite{armstrong2022asymptotic} combined with the convolution Theorem 3.11.2 in \cite{van_der_vaart1996weak} to each $d_\theta$-dimensional parametric submodel separately, and then arguing that the supremum over all such submodels is attained by $\var[\psi^{\ast}]$. A key observation is that in order to apply Theorem 3.1 in \cite{armstrong2022asymptotic} to argue that the likelihood ratio process is locally asymptotically normal, the conditional information needs to settle down in the limit, which is guaranteed as long as Assumption \ref{ass:LLN} is satisfied.

\begin{remark}\label{rem:psi_examples}
Following similar arguments as those in Remark \ref{rem:psi_examples_ate}, we can deduce that our efficiency bound agrees with well-known bounds for common parameters (like those presented in Examples \ref{ex:ATE}--\ref{ex:LATE}) in the setting of i.i.d.\ assignment. For example, we have noted in the case of the ATE (Example \ref{ex:ATE}) that \eqref{eq:var-ate} matches the efficiency bound under i.i.d.\ assignment derived in \cite{hahn1998role}. See \cite{rafi2023efficient} and \cite{armstrong2022asymptotic} for related results in the context of stratified and response-adaptive experiments.  Straightforward calculation also implies that, for the quantile treatment effect (Example \ref{ex:QTE}), the efficiency bound is given by
\begin{multline*}
    E \bigg [ \frac{1}{\eta} \frac{F_1 \big (\theta_0(1) | X_i \big ) \big (1 - F_1 \big (\theta_0(1) | X_i \big ) \big )}{f_1 \big (\theta_0(1) \big )^2} + \frac{1}{1- \eta} \frac{F_0 \big (\theta_0(0) | X_i \big ) \big (1 - F_0 \big (\theta_0(0) | X_i \big ) \big )}{f_0 \big (\theta_0(0) \big )^2} \\
    + \bigg ( \frac{F_1 \big (\theta_0(1) | X_i \big ) - \tau}{f_1 \big (\theta_0(1) \big )} - \frac{F_0 \big (\theta_0(0) | X_i \big ) - \tau}{f_0 \big (\theta_0(0) \big )} \bigg)^2 \bigg ]~,
\end{multline*}
which matches the efficiency bound under i.i.d.\ assignment derived in \cite{firpo2007efficient} when the propensity score is set to $\eta$. 
\end{remark}

\begin{remark} \label{rem:adj}
The efficiency bound in Theorem \ref{thm:efficiencybound} is attained by highly stratified experiments as in Theorem \ref{thm:normal} if no additional covariates are available for estimation beyond the set of covariates $X_i$ used in the design. In practice, researchers may consider adjusting for additional baseline covariates in order to improve efficiency. Suppose additional covariates $W^{(n)}$ are available and Assumption \ref{ass:a} is modified such that
\[ (R^{(n)}(1), R^{(n)}(0), W^{(n)}) \indep A^{(n)} \big | X^{(n)}~. \]
It can be shown that the efficiency bound, allowing for additional covariate adjustment based on $X_i$ and $W_i$, is
\begin{equation} \label{eq:adj}
    \mathbb V_\ast - \eta(1 - \eta) M^{-1} E[\var[E[g_i|X_i,W_i]|X_i]] (M^{-1})'~,
\end{equation}
where $g_i = m(X_i, 1, R_i(1), \theta_0) - m(X_i, 0, R_i(0), \theta_0)$. Then, as in Remark \ref{rem:V_compare}, the potential gain in efficiency from exploiting $W_i$ in addition to $X_i$ is proportional to
\[ E[R_{g, X, W}^2(g_i, X_i, W_i) \var[g_i | X_i]]~, \]
where
\begin{equation*}
R_{g, X, W}^2(g_i, X_i, W_i) = \frac{\var[E[g_i|X_i,W_i]|X_i]}{\var[g_i | X_i]}
\end{equation*}
is the nonparametric $R^2$ from regressing $g_i$ on $X_i$ and $W_i$ conditional on $X_i$. As a result, the scope for improving efficiency by adjusting for additional covariates is limited if $R_{g, X, W}^2$ is small. In the case of estimating the ATE,
\[ g_i = \frac{Y_i(1)}{\eta} + \frac{Y_i(0)}{1 - \eta}~, \]
so the scope for improvement depends on how much additional variation in the weighted potential outcomes can be explained by $W_i$ beyond $X_i$.
\end{remark}

\begin{remark}\label{rem:other_SPEB}
Here, we comment on how Theorem \ref{thm:efficiencybound} relates to prior efficiency bounds in experiments with general assignment mechanisms. For the case of estimating the ATE, \cite{armstrong2022asymptotic} derives an efficiency bound over a very large class of assignment mechanisms, including even response-adaptive designs, and shows that the bound is attained when units are assigned to treatment (control) with conditional probability proportional to the conditional standard deviation of the potential outcome under treatment (control).  This type of assignment is sometimes referred to as the Neyman allocation.  On the other hand, our results show that this bound may be quite loose whenever the assignment proportions are restricted to be anything not equal to the Neyman allocation, which is, of course, unknown.  For example, the bound is not informative about what can be achieved if the assignment proportions were set to one half regardless of whether or not the conditional outcome variances across treatment and control are equal.  Such settings frequently arise in practice due to logistical constraints or the absence of pilot data with which to estimate conditional variances of potential outcomes under treatment and control.  Furthermore, as argued in  \cite{cai2022performance}, even if pilot data is available, these quantities may be estimated so poorly that exogenously constraining the assignment proportions to one half leads to more efficient estimates of the ATE in practice.  Motivated by such concerns, \cite{rafi2023efficient} derives an efficiency bound for the ATE over the class of ``coarsely-stratified'' assignment mechanisms studied in \cite{bugni2019inference}, where the stratum-level assignment proportions are restricted {\it a priori} by the experimenter. This framework, however, rules out highly stratified designs. Finally, we once again emphasize that our analysis, unlike these other papers, applies to a general class of treatment effect parameters, including the ATE as a special case.
\end{remark}

\begin{remark} \label{rem:generaleta}
Although we focus on the case where $\eta_i(X_i) = P\{A_i = 1|X_i\} = \eta$ is a constant, the proof of Theorem \ref{thm:efficiencybound} holds when $\eta_i(x) = \eta(x)$ for $1 \leq i \leq n$, where $\eta(x)$ is an arbitrary known and fixed function. In these settings, Lemma A.5 shows that the efficiency bound equals
\begin{equation} \label{eq:var-generaleta}
\begin{split}
\mathbb V_\ast & = \var[\psi^\ast(X_i, A_i,R_i, \theta_0)] \\
& = M^{-1} \big ( E \big [ \eta(X_i) \var[m(X_i, 1, R_i(1), \theta_0) | X_i] \\
& \hspace{5em} + (1 - \eta(X_i)) \var[m(X_i, 0, R_i(0), \theta_0) | X_i] \big ] \\
& \hspace{3em} + \var \big [ \eta(X_i) E[m(X_i, 1, R_i(1), \theta_0) | X_i] \\
& \hspace{5em}+ (1 - \eta(X_i)) E[m(X_i, 0, R_i(0), \theta_0) | X_i] \big ] \big ) (M^{-1})'~,    
\end{split}
\end{equation}
so that the only difference from \eqref{eq:var} is that $\eta$ is replaced by $\eta(X_i)$.  

If we additionally impose that $\eta(X_i) $ takes on a finite set of values $\{\eta_1, \dots, \eta_S\}$, then this bound could be achieved by separately implementing a highly stratified experiment over each set $\{i: \eta(X_i) = \eta_s\}$ for $1 \leq s \leq S$. In other words, separately within each stratum defined by the units for which $\eta(X_i) = \eta_s$, employ the assignment mechanism described in Assumptions \ref{ass:a}--\ref{ass:pair} with $\ell/k = \eta_s$. For more general functions $\eta(\cdot)$, we conjecture one could employ the local randomization procedure in \cite{cytrynbaum2023designing}.
\end{remark}
\subsection{The Necessity of ``Fast-Balancing''} \label{sec:regular}
In this subsection, we provide conditions under which the fast-balancing condition described in \eqref{eq:fast-m} is a necessary condition for efficient estimation of $\theta_0$ ``by design.'' As a supplement, we also provide necessary and sufficient conditions for an asymptotically linear estimator to be regular (in the sense of (S.16) in Supplement A.2) for a large class of treatment assignment mechanisms. Concretely, given an assignment mechanism, suppose $\tilde \theta_n$ is an asymptotically linear estimator for $\theta_0$ in the sense that
\begin{equation} \label{eq:linear}
\sqrt n (\tilde \theta_n - \theta_0) = \frac{1}{\sqrt n} \sum_{1 \leq i \leq n} \psi(X_i, A_i, R_i, \theta_0) + o_P(1)~,
\end{equation}
where $E[\psi(X_i, A_i, R_i, \theta_0)] = 0$ and $\var[\psi(X_i, A_i, R_i, \theta_0)] < \infty$. The results in this section derive necessary and sufficient conditions for $\tilde \theta_n$ to be regular, and further demonstrate that in order for $\tilde{\theta}_n$ to be regular and efficient, either $\psi = \psi^\ast$ or a fast-balancing condition involving $\psi(\cdot)$ needs to be satisfied. Therefore, highly stratified designs are not only sufficient to guarantee efficiency when estimating $\theta_0$ using the na\"ive method of moments estimator, but their fast-balancing property is also necessary.

In order to study the behavior of the estimator under local alternatives, we will impose the following high-level assumption on the ``imbalance'' of the treatment assignments. To describe the assumption, let $\rho$ denote any metric that metrizes weak convergence.
\begin{assumption} \label{ass:imbalance}
The treatment assignment mechanism is such that for any square-integrable function $\gamma: \mathbf R^{d_x} \to \mathbf R^{d_\theta}$ with $E[\gamma(X_i)] = 0$ ,
\[ \rho \bigg ( \mathcal L \Big ( \frac{1}{\sqrt n} \sum_{1 \leq i \leq n} (A_i - \eta) \gamma(X_i) \Big \vert X^{(n)} \Big ), ~N(0, V_\gamma^{\rm imb}) \bigg ) \xrightarrow{P} 0 \]
for some deterministic variance $V_\gamma^{\rm imb}$, where $\mathcal L(\cdot | X^{(n)})$ denotes the conditional distribution given $X^{(n)}$.
\end{assumption}

In the following examples, we discuss Assumption \ref{ass:imbalance} in the context of some common treatment assignment mechanisms.

\begin{example}
Revisiting Example \ref{ex:iid}, let $A^{(n)}$ be assigned i.i.d., independently of $X^{(n)}$, such that $P \{A_i = 1\} = \eta$. Then, by verifying the conditions of the Lindeberg-Feller CLT conditional on $X^{(n)}$, it can be shown that Assumption \ref{ass:imbalance} is satisfied with $V^{\rm imb}_\gamma = \eta (1 - \eta) \var[\gamma(X_i)]$.
\end{example}

\begin{example}
Revisiting Example \ref{ex:CAR}, suppose treatment status is assigned using stratified block randomization, which is a special case of covariate-adaptive randomization where $A^{(n)}$ is such that \[\sum_{1 \leq i \leq n} A_i I \{S_i = s\} = \bigg\lfloor \eta \sum_{1 \leq i \leq n} I \{S_i = s\} \bigg\rfloor~,\] with all such assignments being drawn uniformly at random and independently across strata. It follows from Theorem 12.2.1 in \cite{lehmann2022testing} combined with a subsequencing argument that Assumption \ref{ass:imbalance} is satisfied with $V_\gamma^{\rm imb} = \eta (1 - \eta) E[\var[\gamma(X_i) | S_i]]$.
\end{example}

\begin{example} \label{ex:ma2}
Revisiting Example \ref{ex:ma}, suppose treatment is assigned using the covariate adaptive randomization procedure described in \cite{ma2024new}. Then it follows from Theorem 3.6 in their paper that, under appropriate assumptions,
\[\frac{1}{\sqrt{n}}\sum_{1 \le i \le n}(A_i - \eta)\gamma(X_i) \xrightarrow{d} N(0, \tilde{V})~,\]
for some variance $\tilde{V}$. Note, however, that this result is \emph{not} conditional on $X^{(n)}$ and thus does not immediately imply Assumption \ref{ass:imbalance}. We conjecture that a similar result could be established conditional on $X^{(n)}$ and thus Assumption \ref{ass:imbalance} would be satisfied.
\end{example}

\begin{example} \label{ex:MP2}
Revisiting Example \ref{ex:MP}, suppose $n$ is even and we assign treatment using a matched pairs design. It then follows by arguing as in the discussion following Theorem \ref{thm:normal} that for any square-integrable Lipschitz function $\gamma(\cdot)$,
\[ \var\bigg[\frac{1}{\sqrt{n}} \sum_{1 \leq i \leq n}(A_i - \eta) \gamma(X_i)\bigg\vert X^{(n)}\bigg] \stackrel{P}{\to} 0~. \]
Therefore, by Markov's inequality, Assumption \ref{ass:imbalance} is satisfied with
$V_\gamma^{\rm imb} = 0$. By approximating square-integrable functions by square-integrable Lipschitz functions as in Lemma C.5 in \cite{cytrynbaum2023designing}, it can be shown that the convergence holds for any square-integrable function $\gamma(\cdot)$. 
\end{example}

\begin{example}\label{ex:rerand2}
Revisiting Example \ref{ex:rerand}, we note that, following Corollary 3.7 in \cite{cytrynbaum2024finely}, we do not expect Assumption \ref{ass:imbalance} to hold for re-randomization designs in general.
\end{example}

We are now ready to state a theorem that characterizes all regular asymptotically linear estimators and establishes the necessity of the fast-balancing condition for efficient estimation using $\hat{\theta}_n$.

\begin{theorem} \label{thm:regular}
Suppose the treatment assignment mechanism satisfies Assumptions \ref{ass:unconfounded} and \ref{ass:LLN}--\ref{ass:imbalance}. Suppose $\tilde \theta_n$ is an asymptotically linear estimator for $\theta_0$ in the sense of \eqref{eq:linear}. Then, $\tilde \theta_n$ is regular if and only if
\[ \psi(x, a, r, \theta_0) = \psi^\ast(x, a, r, \theta_0) + \psi^\perp(x, a, \theta_0)~, \]
for some function $\psi^\perp$ such that $E[\psi^\perp(X_i, A_i, \theta_0) | X_i] = \eta \psi^\perp(X_i, 1, \theta_0) + (1 - \eta) \psi^\perp(X_i, 0, \allowbreak \theta_0) = 0$. Furthermore, if $\tilde \theta_n$ is regular, it attains the efficiency bound if and only if
\begin{equation} \label{eq:fast}
\frac{1}{\sqrt n} \sum_{1 \leq i \leq n} (A_i - \eta) E[\psi(X_i, 1, R_i(1), \theta_0) - \psi(X_i, 0, R_i(0), \theta_0) | X_i] = o_P(1)~.
\end{equation}
\end{theorem}

To further understand condition \eqref{eq:fast}, note that $E[\psi^\ast(X_i, 1, R_i(1)) | X_i] = E[\psi^\ast(X_i, 0, \allowbreak R_i(0)) | X_i]$, so
\begin{align*}
& \frac{1}{\sqrt n} \sum_{1 \leq i \leq n} (A_i - \eta) E[\psi(X_i, 1, R_i(1), \theta_0) - \psi(X_i, 0, R_i(0), \theta_0) | X_i] \\
& = \frac{1}{\sqrt n} \sum_{1 \leq i \leq n} (A_i - \eta) (\psi^\perp(X_i, 1, \theta_0) - \psi^\perp(X_i, 0, \theta_0)) \\
& = \frac{1}{\sqrt n} \sum_{1 \leq i \leq n} \psi^\perp(X_i, A_i, \theta_0)~,
\end{align*}
where the last equality follows from the fact that $E[\psi^\perp(X_i, A_i, \theta_0) | X_i] = 0$. As a result, \eqref{eq:fast} holds either when the treatment assignment mechanism groups units with similar values of $\psi^\perp(X_i, 1, \theta_0) - \psi^\perp(X_i, 0, \theta_0)$, or when the estimator is based on the efficient influence function, so that $\psi^\perp = 0$. Recall that as an intermediate step in the proof of Theorem \ref{thm:normal}, we showed in \eqref{eq:vdv} that for the na\"ive method of moments estimator $\hat \theta_n$, $\psi(\cdot) = - M^{-1} m(\cdot)$, so \eqref{eq:fast} coincides with the fast-balancing condition in \eqref{eq:fast-m}. We can thus conclude from Theorem \ref{thm:regular} that the fast-balancing condition is necessary to achieve efficient estimation based on the na\"ive method of moments estimator, when the class of assignment mechanisms satisfy Assumption \ref{ass:imbalance}.

\begin{example}
Revisiting Example \ref{ex:ATE}, recall $\hat \theta_n$ estimates the ATE based on the moment conditions in \eqref{eq:moments-ate}. Direct calculation shows that for $\hat \theta_n$,
\[ \psi^\perp(x, a, \theta_0) = (a - \eta) \Big ( \frac{\mu_1(x)}{\eta} + \frac{\mu_0(x)}{1 - \eta} \Big )~. \]
As a result, Theorem \ref{thm:regular} demonstrates that $\hat{\theta}_n$ does not achieve the efficiency bound, unless
\[ \frac{1}{\sqrt n} \sum_{1 \leq i \leq n} (A_i - \eta) \left ( \frac{\mu_1(X_i)}{\eta} + \frac{\mu_0(X_i)}{1 - \eta} \right ) = o_P(1)~, \]
which is indeed the case in highly stratified experiments when the treatment assignment mechanism satisfies Assumption \ref{ass:pair}. \end{example}

\section{Simulations}\label{sec:sims}
In this section, we illustrate the theoretical results in Sections \ref{sec:variance} and \ref{sec:semi-reg} through a simulation study. Throughout this section, we set $\eta = 1/2$, and compare the mean-squared error (MSE), bias, the length of the confidence interval and its coverage rate for the following combinations of treatment assignment mechanisms and estimators:
\begin{enumerate}[--]
    \item i.i.d.\ treatment assignment and the naïve method of moments estimator
    \item i.i.d.\ treatment assignment and covariate adjusted estimators
    \item Matched pairs, i.e., a highly stratified design with $(\ell, k) = (1, 2)$, and the naïve method of moments estimator
    \item Matched quadruplets, i.e., a highly stratified design with $(\ell, k) = (2, 4)$, and the naïve method of moments estimator
\end{enumerate}

In Section \ref{sec:sims_ATE}, we present the model specifications and estimators for estimating the ATE as in Example \ref{ex:ATE}. Supplement B contains the model specifications and estimators for estimating the LATE as in Example \ref{ex:LATE}. Section \ref{sec:sims_results} reports the simulation results for the MSE. The tables and discussion for bias and coverage are contained in Supplement B.
 
\subsection{Average Treatment Effect}\label{sec:sims_ATE}
In this section, we present model specifications and estimators for estimating the ATE as in Example \ref{ex:ATE}. The  Recall that in this case the moment function we consider is given by  
\begin{equation*}
    m(X_i, R_i, A_i, \theta) = \frac{Y_i A_i}{\eta} - \frac{Y_i (1 - A_i)}{1 - \eta} - \theta~,
\end{equation*}
with $R_i = Y_i$. For $a \in \{0, 1\}$ and $1\leq i \leq n$, the potential outcomes are generated according to the equation:
\begin{equation} \label{eq:sims-outcome}
Y_i(a) = \mu_a(X_i) + \epsilon_{i}~,
\end{equation}
where $\mu_0(X_i) = \sum_{1 \leq l \leq 8} w_l \big ( X_{i, l} + \frac{1}{3} (X_{i, l}^2 - 1) \big )$ for $w = (2, 1, 1, 0.5, 0.01, 0.001, 0.0001, \allowbreak 0.00001)$, $\mu_1(X_i) = 0.2 + \mu_0(X_i)$, $\epsilon_i \sim N(0, 4)$, $(X_i, \epsilon_i)$, $1 \leq i \leq n$ are i.i.d., and for each $1 \leq i \leq n$, $(X_i, \epsilon_i)$ are independent. In each of the simulations to follow, when forming pairs/quadruplets or performing regression adjustment, we use only a subvector of the covariates $X_i$ consisting of the first $T$ covariates, for $T \in \{2, 4, 8\}$.

We consider the following three estimators for the ATE:

\begin{description}
    \item[Unadjusted Estimator]
    \begin{equation*}
        \hat\theta_n^{\rm unadj} = \frac{1}{n / 2} \sum_{1\leq i \leq n} ( Y_i A_i - Y_i(1-A_i) )~.
    \end{equation*}
    \item[Adjusted Estimator 1]
    \begin{equation*}
        \hat\theta_n^{\rm adj, 1} = \frac{1}{n} \sum_{1 \leq i \leq n} \big ( 2 A_i(Y_i - \hat\mu_1^Y(X_i)) - 2(1- A_i)(Y_i - \hat\mu_0^Y(X_i)) + \hat\mu_1^Y(X_i) - \hat\mu_0^Y(X_i) \big )~,
    \end{equation*}
    where $\hat\mu_a^Y(X_i)$ is the linear projection of $Y_i$ on $(1, (X_{i, l}, X_{i, l}^2: 1 \leq l \leq T))$ in the subsample with $A_i=a$.
    \item[Adjusted Estimator 2] 
     \begin{equation*}
        \hat\theta_n^{\rm adj, 2} = \frac{1}{n} \sum_{1 \leq i \leq n} \big ( 2 A_i(Y_i - \hat\mu_1^Y(X_i)) - 2(1- A_i)(Y_i - \hat\mu_0^Y(X_i)) + \hat\mu_1^Y(X_i) - \hat\mu_0^Y(X_i) \big )~,
    \end{equation*}
    where $\hat\mu_a^Y(X_i)$ is the linear projection of $Y_i$ on on $(1, (X_{i, l}, X_{i, l}^2, X_{i, l} \allowbreak I\{X_{i, l} > \hat t_l\}: 1 \leq l \leq T))$ in the subsample with $A_i=a$, where $\hat t_l$ is the sample median of $X_{i, l}$, $1 \leq i \leq n$.
\end{description}

The first estimator $\hat\theta_n^{\rm unadj}$ is the naïve method of moments estimator given by the solution to \eqref{eq:est}. The second and third estimators $\hat\theta_n^{\rm adj,1}$ and $\hat\theta_n^{\rm adj,2}$  are covariate-adjusted estimators which can be obtained as two-step method of moments estimators from solving the ``augmented'' moment equation \eqref{eq:augment_m} described in the discussion at the end of Section \ref{sec:setup}. $\hat\theta_n^{\rm adj,1}$ and $\hat\theta_n^{\rm adj,2}$ differ in the choice of basis functions used in the construction of the estimators $\hat{\mu}_a(x)$. Note that by the double-robustness property of the augmented estimating equation \eqref{eq:augment_m}, it can be shown that the adjusted estimators $\hat{\theta}_n^{\rm adj,1}$, $\hat{\theta}_n^{\rm adj,2}$ are consistent and asymptotically normal regardless of the choice of estimators $\hat{\mu}_a(x)$, but consistency of $\hat{\mu}_a(x)$ to $\mu_a(x)$ would ensure that $\hat{\theta}_n^{\rm adj,1}$, $\hat{\theta}_n^{\rm adj,2}$ are efficient under i.i.d.\ assignment \citep[][]{robins1995analysis, tsiatis2008covariate, chernozhukov2017doubledebiasedneyman}.

\subsection{Simulation Results}\label{sec:sims_results}
We focus on the MSE and leave the discussion for bias and coverage to Supplement B. Table \ref{table:sims2} displays the ratio of the empirical MSE for each design/estimator pair relative to the MSE of the unadjusted estimator under i.i.d.\ assignment, computed across $4000$ Monte Carlo replications. As expected given our theoretical results, we find that the empirical MSEs of the na\"ive unadjusted estimator under a matched pairs/quads design closely match the empirical MSEs of the covariate adjusted estimators under i.i.d.\ assignment; this feature is particularly noteworthy given that the adjusted estimators are in fact correctly specified, and the correct specification would be unknowable in practice. We note that the MSE improvement of the unadjusted estimator with a matched pairs/quads design relative to i.i.d.\ assignment is typically worse with 2 or 8 covariates than with 4 covariates: this phenomenon stems from the fact that the first 4 covariates are much stronger predictors of the control outcome than the last 4 covariates, which are almost uninformative. Although we have found in prior work \citep[][]{bai2024inference} that the matched pairs design delivers a lower MSE than the matched quads design when the potential outcomes depend on the covariates linearly, we do not find that this is the case here with a nonlinear model. However, we do consistently find that the MSE with 8 covariates is smaller for the matched pairs design than the matched quads design. This comparison illustrates that, although as discussed in Remark \ref{rem:matched_pairs}, our theoretical results imply matched pairs and matched quads designs are not distinguishable asymptotically, their finite-sample properties may differ, especially when the number of covariates is large. In particular, note Assumption \ref{ass:pair} is more stringent for matched quads, for which $k = 4$, than matched pairs, for which $k = 2$.

\begin{table}[ht!]
\centering
\begin{threeparttable}
\caption{MSE ratios relative to unadjusted estimator under i.i.d.\ assignment}
\setlength{\tabcolsep}{3pt} 
\begin{tabular}{cccccccccc}
\toprule
& & & \multicolumn{3}{c}{\textbf{i.i.d.\ assignment}} & & \textbf{Matched pairs} & & \textbf{Matched tuples} \\ 
\cmidrule{4-6} \cmidrule{8-8} \cmidrule{10-10}
& & \# of covariates ($T$) & Unadjusted & Adjusted 1 & Adjusted 2 & & Unadjusted & & Unadjusted \\ 
\midrule

\multicolumn{10}{c}{$n=100$} \\ \addlinespace[0.5em]
& \multirow{3}{*}{ATE} & 2 & 1.0000 & 0.7691 & 0.7809 & & 0.7608 & & 0.7253 \\
& & 4 & 1.0000 & 0.7166 & 0.7356 & & 0.7123 & & 0.6853 \\
& & 8 & 1.0000 & 0.7527 & 0.8102 & & 0.7619 & & 0.8083 \\ \addlinespace[0.5em]
& \multirow{3}{*}{LATE} & 2 & 1.0000 & 0.7550 & 0.7664 & & 0.7527 & & 0.7069 \\
& & 4 & 1.0000 & 0.6991 & 0.7165 & & 0.6941 & & 0.6682 \\
& & 8 & 1.0000 & 0.7209 & 0.7781 & & 0.7312 & & 0.7676 \\ \addlinespace[1.0em]

\multicolumn{10}{c}{$n=200$} \\ \addlinespace[0.5em]
& \multirow{3}{*}{ATE} & 2 & 1.0000 & 0.7598 & 0.7619 & & 0.7500 & & 0.7254 \\
& & 4 & 1.0000 & 0.7084 & 0.7154 & & 0.7016 & & 0.7213 \\
& & 8 & 1.0000 & 0.7256 & 0.7420 & & 0.7586 & & 0.8042 \\ \addlinespace[0.5em]
& \multirow{3}{*}{LATE} & 2 & 1.0000 & 0.7562 & 0.7581 & & 0.7219 & & 0.7056 \\
& & 4 & 1.0000 & 0.7021 & 0.7095 & & 0.6806 & & 0.7068 \\
& & 8 & 1.0000 & 0.7081 & 0.7238 & & 0.7317 & & 0.7811 \\ \addlinespace[1.0em]

\multicolumn{10}{c}{$n=400$} \\ \addlinespace[0.5em]
& \multirow{3}{*}{ATE} & 2 & 1.0000 & 0.7427 & 0.7441 & & 0.7473 & & 0.7698 \\
& & 4 & 1.0000 & 0.6828 & 0.6880 & & 0.6936 & & 0.7372 \\
& & 8 & 1.0000 & 0.6904 & 0.7013 & & 0.7526 & & 0.7986 \\ \addlinespace[0.5em]
& \multirow{3}{*}{LATE} & 2 & 1.0000 & 0.7420 & 0.7436 & & 0.7528 & & 0.7769 \\
& & 4 & 1.0000 & 0.6792 & 0.6845 & & 0.6928 & & 0.7379 \\
& & 8 & 1.0000 & 0.6864 & 0.6974 & & 0.7501 & & 0.7950 \\ \addlinespace[1.0em]

\multicolumn{10}{c}{$n=1000$} \\ \addlinespace[0.5em]
& \multirow{3}{*}{ATE} & 2 & 1.0000 & 0.7526 & 0.7531 & & 0.7853 & & 0.7338 \\
& & 4 & 1.0000 & 0.6846 & 0.6862 & & 0.7178 & & 0.7123 \\
& & 8 & 1.0000 & 0.6889 & 0.6934 & & 0.7722 & & 0.7969 \\ \addlinespace[0.5em]
& \multirow{3}{*}{LATE} & 2 & 1.0000 & 0.7532 & 0.7537 & & 0.7841 & & 0.7318 \\
& & 4 & 1.0000 & 0.6842 & 0.6859 & & 0.7145 & & 0.7119 \\
& & 8 & 1.0000 & 0.6879 & 0.6925 & & 0.7653 & & 0.7916 \\ \addlinespace[1.0em]

\multicolumn{10}{c}{$n=2000$} \\ \addlinespace[0.5em]
& \multirow{3}{*}{ATE} & 2 & 1.0000 & 0.7509 & 0.7512 & & 0.7367 & & 0.7316 \\
& & 4 & 1.0000 & 0.6926 & 0.6932 & & 0.6847 & & 0.6788 \\
& & 8 & 1.0000 & 0.6933 & 0.6934 & & 0.6975 & & 0.7253 \\ \addlinespace[0.5em]
& \multirow{3}{*}{LATE} & 2 & 1.0000 & 0.7501 & 0.7504 & & 0.7377 & & 0.7337 \\
& & 4 & 1.0000 & 0.6915 & 0.6922 & & 0.6858 & & 0.6788 \\
& & 8 & 1.0000 & 0.6914 & 0.6916 & & 0.6966 & & 0.7223 \\
\bottomrule
\end{tabular}
\begin{tablenotes}
\footnotesize
\item Note: For each specification, the MSE of the unadjusted estimator under i.i.d.\ assignment is normalized to one, and the other columns report the ratios of MSEs relative to this baseline.
\end{tablenotes}
\label{table:sims2}
\end{threeparttable}
\end{table}

\section{Recommendations for Empirical Practice}\label{sec:recs}
We conclude with some recommendations for empirical practice based on our theoretical results. Overall, our findings highlight the general benefit of highly stratified designs for designing efficient experiments: highly stratified experiments ``automatically'' perform fully-efficient covariate adjustment for a large class of interesting parameters. This finding generalizes similar observations made by \cite{bai2022inference}, \cite{bai2022optimality} and \cite{cytrynbaum2023designing} for the special case of estimating the ATE. 

Our simulation evidence suggests, however, that highly stratified experiments may produce less precise estimates than (correctly specified) covariate adjustment when the the dimension of $X_i$ is large relative to the sample size.  For this reason, we recommend that practitioners construct their blocks using a subset of the baseline covariates that they believe have the highest explanatory power in terms of the nonparametric $R^2$ in \eqref{eq:R2}; the pre-treatment measure of the outcomes of interest, for example, is typically believed to be one such covariate \citep[see, in particular,][]{bruhn2009pursuit}. The experimental data can then be analyzed efficiently using an unadjusted method-of-moments estimator. 

If one wishes to perform covariate adjustment with additional covariates beyond those used for blocking, then this can be done {\it ex-post}. As discussed in Remark \ref{rem:adj}, the scope for improvement from covariate adjustment is limited by the nonparametric $R^2$ from the regression of the moment functions on the additional covariates, conditional on the ones used for matching; if one has already matched on the covariates with the highest explanatory power, then the potential gain in efficiency from adjusting for these additional covariates may be limited. We further caution that care must be taken to ensure that the adjustment is performed in such a way that it guarantees a gain in efficiency: see \cite{bai2024covariate} and \cite{cytrynbaum2023covariate} for related discussion. Recent work has developed such methods of covariate adjustment for specific parameters of interest \citep[see, for instance,][]{bai2024covariate, bai2024inference-1, bai2025inference,cytrynbaum2023covariate}, but we leave the development of a method of covariate adjustment which applies at the level of generality considered in this paper to future work.

\clearpage

\appendix

\section{Proofs of Main Results}
\subsection{Proof of Theorem \ref{thm:normal}}
First note \eqref{eq:normal_convergence} follows from \eqref{eq:normal} and Lemma \ref{lem:clt}. In particular, the second component of the decomposition therein is zero because $E[\psi^\ast | X, A] = E[\psi^\ast | X]$. To show \eqref{eq:normal}, we first establish \eqref{eq:vdv}, i.e.,
\[ \sqrt n(\hat \theta_n -  \theta_0) = -M^{-1}\frac{1}{\sqrt{n}}\sum_{1 \le i \le n}m(X_i,A_i,R_i,\theta_0) + o_P(1)~. \]
By the proof of Theorem 5.21 in \cite{van_der_vaart1998asymptotic}, to show \eqref{eq:vdv}, it suffices to show
\begin{equation}\label{eq:se}
\mathbb L_n(\hat{\theta}_n) \stackrel{P}{\to} 0~,
\end{equation}
where $\mathbb L_n(\theta) = (\mathbb L^{(1)}_{n}(\theta), \dots, \mathbb L^{(d_\theta)}_{n}(\theta))'$ for
\begin{align*} 
\mathbb L_n^{(s)}(\theta) & = \frac{1}{\sqrt n} \sum_{1 \leq i \leq n} (m_s(X_i, A_i, R_i, \theta) - E_P[m_s(X_i, A_i, R_i, \theta)]) \\
& \hspace{1em} - \frac{1}{\sqrt n} \sum_{1 \leq i \leq n} (m_s(X_i, A_i, R_i, \theta_0) - E_P[m_s(X_i, A_i, R_i, \theta_0)])~.
\end{align*}
To accomplish this, we study $\mathbb L_n^{(s)}(\theta)$ for $1 \leq s \leq d_\theta$ separately. It follows from Assumption \ref{ass:normal}(c)--(d), Proposition 8.11 in \cite{kosorok2008introduction}, and the arguments to establish \eqref{eq:countable} that
\[ \sup_{\theta \in \Theta: \|\theta - \theta_0\| < \delta} |\mathbb L_n^{(s)}(\theta)| = \sup_{\theta \in \Theta^\ast: \|\theta - \theta_0\| < \delta} |\mathbb 
L_n^{(s)}(\theta)|~. \]
Therefore, since $\hat{\theta}_n \stackrel{P}{\to} \theta_0$ by Lemma \ref{lem:consistency}, to show \eqref{eq:se} it suffices to argue that for every $\epsilon > 0$ and every sequence $\delta_n \downarrow 0$ \citep[p.89 of][]{van_der_vaart1996weak},
\begin{equation}\label{eq:L_equicont}
\lim_{n \to \infty}P\left\{\sup_{\theta \in \Theta^\ast:\|\theta - \theta_0\|<\delta_n}\big|\mathbb{L}^{(s)}_n(\theta)\big| > \epsilon\right\} = 0~.
\end{equation}
Following the arguments in the proof of Lemma \ref{lem:consistency}, we decompose $\mathbb L_n^{(s)}(\theta) = \mathbb L_{n, 1}^{(s)}(\theta) + \mathbb L_{n, 0}^{(s)}(\theta)$, where
\begin{align*}
\mathbb L_{n, 1}^{(s)}(\theta) & = \frac{1}{\sqrt n} \sum_{1 \leq i \leq n} A_i (m_s(X_i, 1, R_i(1), \theta) - m_s(X_i, 1, R_i(1), \theta_0) \\
& \hspace{7.5em} - E[m(X_i, 1, R_i(1), \theta) - m_s(X_i, 1, R_i(1), \theta_0)]) \\
\mathbb L_{n, 0}^{(s)}(\theta) & = \frac{1}{\sqrt n} \sum_{1 \leq i \leq n} (1 - A_i) (m_s(X_i, 0, R_i(0), \theta) - m_s(X_i, 0, R_i(0), \theta_0) \\
& \hspace{7.5em} - E[m_s(X_i, 0, R_i(0), \theta) - m_s(X_i, 0, R_i(0), \theta_0)])~.
\end{align*}
Define
\begin{multline*}
\rho_Q(\theta,\theta_0) = E_Q[(m_s(X, a, R(a), \theta) - m_s(X, a, R(a), \theta_0) \\- E_Q[m_s(X, a, R(a), \theta) - m_s(X, a, R(a), \theta_0)])^2]^{1/2}~.    
\end{multline*}
Note by Assumption \ref{ass:normal}(c) that $\rho_Q(\theta, \theta_0)$ is continuous in $\theta$, i.e., as $\|\theta - \theta_0\| \to 0$,
\[ \rho_Q(\theta,\theta_0)\leq E_Q[(m_s(X, a, R(a), \theta) - m_s(X, a, R(a), \theta_0))^2]^{1/2} \to 0~. \]
Fix any sequence $\tilde \delta_n \downarrow 0$. For every $n$, there exists $n'$ such that $\{\theta \in \Theta^\ast: \|\theta - \theta_0\| < \delta_{n'}\} \subseteq \{\theta \in \Theta^\ast: \rho_Q(\theta, \theta_0) < \tilde \delta_n\}$. By Proposition C.1 in \cite{han2021complex},
\begin{align*}
& E \bigg [ \sup_{\rho_Q(\theta, \theta_0) < \tilde \delta_n} |\mathbb L_{n, a}^{(s)}(\theta)| \bigg ] E \bigg [ \sup_{\rho_Q(\theta, \theta_0) < \tilde \delta_n} \bigg | \frac{1}{\sqrt n} \sum_{1 \leq i \leq n} (m_s(X_i, 1, R_i(1), \theta) \\
& \hspace{3em} - m_s(X_i, 1, R_i(1), \theta_0) - E[(m_s(X_i, 1, R_i(1), \theta) - m_s(X_i, 1, R_i(1), \theta_0)]) \bigg | \bigg ] \to 0~.
\end{align*}
where the convergence follows from Assumption \ref{ass:normal}(e) and Corollary 2.3.12 in \cite{van_der_vaart1996weak}. We then obtain \eqref{eq:L_equicont} by Markov's inequality.

Finally, we derive \eqref{eq:normal} from \eqref{eq:vdv}. Note that
\begin{align*}
& \frac{1}{\sqrt n} \sum_{1 \leq i \leq n} m(X_i, A_i, R_i, \theta_0) \\
& = \frac{1}{\sqrt n} \sum_{1 \leq i \leq n} \Big ( \eta E[m(X_i, 1, R_i(1), \theta_0) | X_i] + (1 - \eta) E[m(X_i, 0, R_i(0), \theta_0) | X_i] \\
& \hspace{3.5em} + I\{A_i = 1\} (m(X_i, 1, R_i, \theta_0) - E[m(X_i, 1, R_i(1), \theta_0) | X_i]) \\
& \hspace{3.5em} + I\{A_i = 0\} (m(X_i, 0, R_i, \theta_0) - E[m(X_i, 0, R_i(0), \theta_0) | X_i]) \\
& \hspace{3.5em} + (A_i - \eta) (E[m(X_i, 1, R_i(1), \theta_0) - m(X_i, 0, R_i(0), \theta_0) | X_i]) \Big )~.
\end{align*}
Let $\Omega(X_i) = E[m(X_i, 1,R_i(1),\theta_0) - m(X_i, 0,R_i(0),\theta_0)|X_i]$ and note that by Assumption \ref{eq:unconfounded},
\[E\bigg[\frac{1}{\sqrt{n}}\sum_{1 \le i \le n}(A_i - \eta)\Omega(X_i) \bigg \vert X^{(n)}\bigg] = 0~.\]
Recall $\Omega^{(s)}(X_i)$ is the $s$th component of $\Omega(X_i)$. Next, it follows from Assumption \ref{ass:a}, \ref{ass:normal}(f), and equation (12.3) in \cite{lehmann2022testing} that for $1 \leq s \leq d_\theta$,
\begin{align*}
\var\bigg[\frac{1}{\sqrt{n}}\sum_{1 \le i \le n}(A_i - \eta)\Omega^{(s)}(X_i) \bigg \vert X^{(n)}\bigg] & = \frac{1}{n} \sum_{1 \le j \le n/k} \frac{\ell(k - \ell)}{k - 1} \sum_{i \in \lambda_j} (\Omega_i^{(s)} - \overline \Omega_j^{(s)})^2 \\
& \le C^2\frac{\ell(k - \ell)}{k - 1} \frac{1}{n}\sum_{1 \le j \le n/k} \max_{i, i' \in \lambda_j} \|X_i - X_{i'}\|^2~,
\end{align*}
where $\overline \Omega_j^{(s)} = \frac{1}{k} \sum_{i \in \lambda_j} \Omega^{(s)}(X_i)$, and so the conditional variance converges in probability to zero under Assumption \ref{ass:pair}. It then follows from Markov's inequality and the fact that probabilities are bounded and hence uniformly integrable that
\[ \frac{1}{\sqrt{n}}\sum_{1 \le i \le n}(A_i - \eta)\Omega(X_i) = o_P(1)~. \]
Therefore,
\[ - M^{-1} \frac{1}{\sqrt n} \sum_{1 \leq i \leq n} m(X_i, A_i, R_i, \theta_0) = - M^{-1} \frac{1}{\sqrt n} \sum_{1 \leq i \leq n} m^\ast(X_i, A_i, R_i, \theta_0) + o_P(1)~, \]
which, together with \eqref{eq:vdv}, implies the desired result in \eqref{eq:normal}.
\qed

\subsection{Proof of Theorem \ref{thm:var-est}}
By assumption $\widehat M_n \xrightarrow{P} M$. Therefore, it suffices to show that
\begin{align} 
\label{eq:sigmahat1-consistent} \hat \Sigma_{1, n} & \xrightarrow{P} \Sigma_1 \\
\label{eq:sigmahat2-consistent} \hat \Sigma_{2, n} & \xrightarrow{P} \Sigma_2~,
\end{align}
where
\begin{align*}
\Sigma_1 & = \eta \var[m(X_i, 1, R_i(1), \theta_0)] + (1 - \eta) \var[m(X_i, 0, R_i(0), \theta_0)] \\
\Sigma_2 & = - \eta (1 - \eta) \var \big [ E[m(X_i, 1, R_i(1), \theta_0) | X_i] - E[m(X_i, 1, R_i(1), \theta_0)] \\
& \hspace{7em} - (E[m(X_i, 0, R_i(0), \theta_0) | X_i] - E[m(X_i, 0, R_i(0), \theta_0)]) \big ]~.
\end{align*}
In what follows, we will show \eqref{eq:sigmahat1-consistent}. The proof of \eqref{eq:sigmahat2-consistent} will follow from similar steps, but with the calculations below replaced by the ones in the proof of Lemmas C.2--C.3 in \cite{bai2024inference}, along with Assumptions \ref{ass:pairsofblocks}--\ref{ass:var-est}. To that end, we show for $1 \leq s \leq d_\theta$,
\begin{equation} \label{eq:firstmoment-consistent}
\hat \mu_{1, n}^{(s)} := \frac{1}{\eta n} \sum_{1 \leq i \leq n} m_s(X_i, 1, R_i, \hat \theta_n) \xrightarrow{P} E[m_s(X_i, 1, R_i(1), \theta_0)] =: \mu_1^{(s)}~,
\end{equation}
and similar arguments will establish the results for $a = 0$ as well as for the second moments and therefore \eqref{eq:sigmahat1-consistent}. For $\theta \in \Theta$, define
\begin{equation} \label{eq:sigmahattheta}
\hat \mu_{1, n}^{(s)}(\theta) = \frac{1}{\eta n} \sum_{1 \leq i \leq n} m_s(X_i, 1, R_i, \theta)
\end{equation}
and note $\hat \mu_{1, n}^{(s)} = \hat \mu_{1, n}^{(s)}(\hat \theta_n)$. Suppose \eqref{eq:firstmoment-consistent} doesn't hold. Then, there exists a subsequence $\{n_k\}_{k \geq 1}$ and $\epsilon_1, \epsilon_2 > 0$, such that
\begin{equation} \label{eq:contra}
\lim_{k \to \infty} P \{|\hat \mu_{1, n}^{(s)} - \mu_1^{(s)}| > \epsilon_1\} \to \epsilon_2~.    
\end{equation}
Because $\hat \theta_{n_k} \xrightarrow{P} \theta_0$ by Theorem \ref{thm:normal}, there exists a further subsequence, which we still denote by $\{n_k\}_{k \geq 1}$ by an abuse of notation, along which $\hat \theta_{n_k} \to \theta_0$ with probability one. Along that subsequence, Lemma \ref{eq:varest-fixedsequence} implies that $\hat \mu_{1, n}^{(s)} \xrightarrow{P} \mu_1^{(s)}$, in contradiction to \eqref{eq:contra}. Therefore, \eqref{eq:firstmoment-consistent} holds, and the theorem follows as discussed above.
\qed

\subsection{Proofs for Section \ref{sec:semi}}\label{sec:SPEB}
Recall that $P_n$ denotes the distribution of the observed data $(X^{(n)}, A^{(n)}, R^{(n)})$, and $Q$ denotes the marginal distribution of the vector $(R_i(1), R_i(0), X_i)$. Note that any treatment assignment mechanism $A^{(n)}$ satisfying Assumption \ref{ass:unconfounded} can be represented as a function of $X^{(n)}$ and some additional exogenous randomization device $U_n \in \mathbf R$. Let $p_n^{U_n}$ denote the density function for $U_n$ with respect to a dominating measure $\mu^U$. In what follows, we consider a family $\{Q_{t}: t \in \mathbf R^{d_{\theta}}\}$ of marginal distributions indexed by $t$, and let $q_t^X$ denote the density function for $X_i$ with respect to a dominating measure $\mu^X$, $q_t^{R(a) | X}(r | x)$ denote the conditional density of $R_i(a)$ given $X_i$ with respect to a dominating measure $\mu^R$. With some abuse of notation, continue letting $P_{t, n}$ denote the distribution of $(U_n, X^{(n)}, R^{(n)})$. We require that $Q_0 = Q$ and $P_{0, n} = P_n$ and define $q^X = q_0^X$ and $q^{R(a) | X} = q_0^{R(a) | X}$. As a consequence, the density function of $P_{t,n}$ is given by
\begin{equation} \label{eq:density}
\ell_n = p_n^U(U_n) \prod_{1 \leq i \leq n} q_{t}^X(X_i) \prod_{1 \leq i \leq n} \prod_{a \in \{0, 1\}} q_{t}^{R(a) | X}(R_i | X_i)^{I \{A_i = a\}}~.
\end{equation}
Because the density $p_n^{U_n}$ does not depend on $t$, and in general we will only concern ourselves with the ratio of likelihoods at different values of $t$ (so that $p_n^{U_n}$ in the ratio will cancel), in what follows we suppress the dependence on $n$ and simply denote the distribution $P_{t,n}$ by $P_t$.

We consider parametric submodels $\{P_t: t \in \mathbf R^{d_\theta}\}$, where $P_0 = P$, such that the following holds for some $g = (g^X, g^{R(1) | X}, g^{R(0) | X})$, each component of which is a $d_\theta$-dimensional function:
\begin{enumerate}[\rm (a)]
\item As $t \to 0$,
\begin{equation} \label{eq:qmd-1}
\int \frac{1}{\|t\|^2} \Big ( q_{t}^X(x)^{1/2} - q^X(x)^{1/2}  - \frac{1}{2} q^X(x)^{1/2} t' g^X(x) \Big )^2 d \mu^X(x) \to 0~.
\end{equation}
\item For $a \in \{0, 1\}$, as $t \to 0$,
\begin{multline} \label{eq:qmd-2}
\frac{1}{\|t\|^2} \int\!\!\!\int \Big ( q_{t}^{R(a) | X}(r | x)^{1/2} - q^{R(a) | X}(r | x)^{1/2} - \frac{1}{2} q^{R(a) | X}(r | x)^{1/2} t' g^{R(a) | X}(r | x) \Big )^2 \\
\times d \mu^R(r) q^X(x) d \mu^X(x) \to 0~.
\end{multline}
\end{enumerate}
In what follows, we will index a parametric submodel by its associated function $g$, denoted by $P_{t, g}$, to emphasize the role of $g$. Similarly we denote the density of $Q_{t,g}$ by $q_{t, g}$. When writing expectations and variances, we suppress the subscripts $P$ and $Q$ whenever doing so does not lead to confusion. For completeness, we document the following properties of score functions which satisfy \eqref{eq:qmd-1}--\eqref{eq:qmd-2}:

\begin{lemma} \label{lem:qmd}
For a parametric submodel $\{P_{t, g}: t \in \mathbf R^{d_\theta}\}$ with $P_{0, g} = P$ that satisfies \eqref{eq:qmd-1}--\eqref{eq:qmd-2},
\begin{enumerate}[\rm (a)]
\item $E[g^X(X) g^X(X)'] < \infty$.
\item $E[g^X(X)] = 0$.
\item $E[g^{R(a) | X}(R(a) | X) g^{R(a) | X}(R(a) | X)'] < \infty$ and hence $I^{R(a) | X}(X) < \infty$ with probability one under $Q$.
\item $E[g^{R(a) | X}(R(a) | X) | X] = 0$ with probability one under $Q$.
\end{enumerate}
\end{lemma}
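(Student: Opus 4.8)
The plan is to prove the four claims of Lemma \ref{lem:qmd} by reduction to the classical theory of quadratic mean differentiability (QMD) applied to the \emph{single-observation} marginal models implicit in \eqref{eq:qmd-1}--\eqref{eq:qmd-2}, then lift to conditional statements. Observe that \eqref{eq:qmd-1} is exactly the statement that the family $\{q_t^X: t \in \mathbf R^{d_\theta}\}$ is differentiable in quadratic mean at $t = 0$ with score $g^X$; similarly, for (Lebesgue-)almost every fixed $x$, \eqref{eq:qmd-2} says the family $\{q_t^{R(a)|X}(\cdot|x): t\}$ is QMD at $t=0$ with score $g^{R(a)|X}(\cdot|x)$. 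So the strategy is: invoke the standard consequences of QMD (e.g.\ Theorem 7.2 and its proof in \cite{van_der_vaart1998asymptotic}, or Lemma 7.6 there) on each of these one-parameter-family-in-$\mathbf R^{d_\theta}$ models.

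For part (a) and part (b): QMD of $\{q_t^X\}$ immediately gives that the score $g^X$ is in $L^2(Q^X)$, i.e.\ $I^X = E_Q[g^X(X)g^X(X)'] < \infty$, which is (a); and that $E_Q[g^X(X)] = 0$, which is (b). The mean-zero property follows because $\int q_t^X \, d\mu^X = 1$ for all $t$: differentiating this identity under the integral sign — which QMD legitimizes — kills the first-order term $\tfrac12 \int q^X t' g^X \, d\mu^X$, forcing $E_Q[g^X(X)] = 0$. The clean way to present this is to cite that these are the standard properties of a QMD family; I would state that \eqref{eq:qmd-1} puts $\{q_t^X\}$ in the QMD class and then quote the conclusion. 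Parts (c) and (d) are the conditional analogues: for $Q^X$-almost every $x$, \eqref{eq:qmd-2} makes $\{q_t^{R(a)|X}(\cdot|x)\}$ QMD, so $g^{R(a)|X}(\cdot|x) \in L^2(q^{R(a)|X}(\cdot|x)\,d\mu^R)$ with $I^{R(a)|X}(x) < \infty$ and $E_Q[g^{R(a)|X}(R(a)|X)\mid X=x] = 0$, which is precisely (d) and the pointwise half of (c). To get the \emph{unconditional} integrability claim $E_Q[g^{R(a)|X}(R(a)|X)g^{R(a)|X}(R(a)|X)'] < \infty$ in (c), I would use the hypothesis in \eqref{eq:qmd-2} that $x \mapsto I^{R(a)|X}(x) = E_Q[g^{R(a)|X}g^{R(a)|X}{}'\mid X=x]$ is Lipschitz: a Lipschitz function on $\mathbf R^{d_x}$ is locally bounded, but to conclude a finite \emph{expectation} I would combine Lipschitz continuity with, say, finiteness of $E_Q\|X\|$ (or simply note that Lipschitz plus the value at one point bounds $I^{R(a)|X}(x)$ by $c_1 + c_2\|x\|$, and then invoke an integrability assumption on $X_i$ implicit in the setup, e.g.\ $E_Q\|X_i\| < \infty$ which follows from Assumption \ref{ass:pair}-type conditions) so that $E_Q[I^{R(a)|X}(X)] < \infty$; by the tower property this equals $E_Q[g^{R(a)|X}(R(a)|X)g^{R(a)|X}(R(a)|X)']$, giving (c).

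The main obstacle I anticipate is not the QMD bookkeeping — that is genuinely standard — but rather the measure-theoretic care needed to pass from the "for $Q^X$-almost every $x$" pointwise statements \eqref{eq:qmd-2} to well-defined conditional objects and then to unconditional ones: one must check that $x \mapsto g^{R(a)|X}(\cdot|x)$ can be chosen jointly measurable, that the exceptional null set does not interfere, and that Fubini/Tonelli applies when integrating the pointwise QMD conclusions against $q^X d\mu^X$. The Lipschitz hypothesis on $I^{R(a)|X}(\cdot)$ is doing real work here — it upgrades the a.e.-finiteness coming from QMD to genuine (locally uniform, hence integrable-against-$Q^X$) finiteness — so I would be careful to use it explicitly rather than treat (c) as a triviality. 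Beyond that, the proof is essentially an application of \cite{van_der_vaart1998asymptotic}, Lemma 7.6 / Theorem 7.2, once to the $X$-marginal and once, fiberwise, to the $R(a)\mid X$ conditionals.
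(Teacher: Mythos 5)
Your treatment of (a) and (b) is exactly the paper's: both follow from the standard consequences of quadratic mean differentiability (the paper cites Lemma 14.2.1 of \cite{lehmann2022testing} where you cite the analogous results in \cite{van_der_vaart1998asymptotic}). The problems are in (c) and (d), and both stem from reading \eqref{eq:qmd-2} as a pointwise-in-$x$ condition. As displayed, \eqref{eq:qmd-2} is an \emph{integrated} condition: the remainder is integrated against $d\mu^R(r)\,q^X(x)\,d\mu^X(x)$, i.e.\ it asserts quadratic mean differentiability of the family $(r,x) \mapsto q_t^{R(a)|X}(r|x)\,q^X(x)$ with score $g^{R(a)|X}(r|x)$. (The paper's own proof of (d) confirms this reading, since it has to use Markov's inequality to recover a pointwise statement.) This matters in two places.

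For (c), the unconditional bound $E_Q[g^{R(a)|X}(R(a)|X)g^{R(a)|X}(R(a)|X)'] < \infty$ falls out \emph{directly} by applying the same QMD lemma to this joint family --- it is literally the statement that the score of a QMD family lies in $L^2$ of the base measure $q^{R(a)|X}(r|x)q^X(x)\,d\mu^R\,d\mu^X$ --- and then $I^{R(a)|X}(X) < \infty$ a.s.\ follows from the tower property. Your route goes the other way (fiberwise finiteness first, then integrate up), and to integrate up you invoke the Lipschitz continuity of $x \mapsto I^{R(a)|X}(x)$ together with $E_Q\|X_i\| < \infty$. That last integrability is \emph{not} among the hypotheses of the lemma (Assumption \ref{ass:pair} concerns the blocking used in Section \ref{sec:variance} and is not assumed in the efficiency-bound section, and in any case does not deliver $E_Q\|X_i\|<\infty$), so as written your proof of (c) has a genuine gap --- and an avoidable one, since the direct route needs neither the Lipschitz condition nor any moment on $X$. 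For (d), under the integrated reading your fiberwise application of the QMD lemma is not immediately licensed: you must first fix a sequence $t_n \to 0$, use Markov's inequality to extract a subsequence along which the inner integral over $r$ converges to zero for $Q$-almost every $x$, and only then apply the lemma fiberwise along that subsequence. This is exactly the extra step in the paper's proof of (d), and it is the step your proposal elides by treating the pointwise form of \eqref{eq:qmd-2} as given.
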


\begin{proof}
(a) and (b) follow from Lemma 14.2.1 in \cite{lehmann2022testing}. (c) follows from the same lemma. In order to show (d), fix $t_n \to 0$. Note \eqref{eq:qmd-2} and Markov's inequality imply that along a subsequence $t_{n_k}$,
\begin{multline*}
\frac{1}{\|t_{n_k}\|^2} \int \Big ( q_{t_{n_k}}^{R(a) | X}(r | x)^{1/2} - q^{R(a) | X}(r | x)^{1/2} - \frac{1}{2} q^{R(a) | X}(r | x)^{1/2} t_{n_k}' g^{R(a) | X}(r | x) \Big )^2 \\
\times d \mu^R(r) \to 0
\end{multline*}
for $Q$-almost every $x$. Along that subsequence, another application of Lemma 14.2.1 in \cite{lehmann2022testing} implies (d).
\end{proof}
Define the information of $X$ as $I^X = E[g^X(X) g^X(X)']$. Define the conditional information of $R(a)$ given $X=x$ as
\[ I^{R(a) | X}(x) = E[g^{R(a) | X}(R(a) | X) g^{R(a) | X}(R(a) | X)' | X = x]~. \]
Further define $I = I^X + \eta E[I^{R(1) | X}(X)] + (1 - \eta) E[I^{R(0) | X}(X)]$. We restrict ourselves to parametric submodels that satisfy \eqref{eq:qmd-1}--\eqref{eq:qmd-2} for a $g$ that satisfies the following conditions. These submodels exist by Lemma \ref{lem:path} below.

\begin{condition} \label{cond:path}
The function $g$ satisfies that
\begin{enumerate}[\rm (a)]
\item $E[g^X(X)] = 0$ and $\var[g^X(X)] < \infty$.
\item For $a \in \{0, 1\}$, $E[g^{R(a) | X}(R(a) | X) | X] = 0$, and $\var[g^{R(a) | X}(R(a) | X)] < \infty$ with probability one.
\item $I$ is nonsingular.
\end{enumerate}
\end{condition}

\begin{lemma} \label{lem:path}
For any $g$ that satisfies Condition \ref{cond:path}, there exists a parametric submodel $\{P_{t, g}: t \in \mathbf R^{d_\theta}\}$ such that \eqref{eq:qmd-1}--\eqref{eq:qmd-2} hold.  
\end{lemma}

\begin{proof}
We use a vector version of the construction in in Example 25.16 in \cite{van_der_vaart1998asymptotic}. Let $k(x)$ be any strictly positive function that is bounded from above and away from zero with a bounded derivative such that $k(0) = k'(0) = 1$; for example, take $k(x) = 2 (1 + e^{-2x})^{-1}$. Define
\[ q_t^X(x) = C(t) q^X(x) k(t' g^X(x))~, \]
where $C(t) = \big ( \int q^X(x) k(t' g^X(x)) d \mu^X(x) \big )^{-1}$, so that $q_t^X(x)$ is a probability density function. Differentiating both sides of $C(t) \int q^X(x) k(t' g^X(x)) d \mu^X(x) = 1$ at $t = 0$, we get that $\frac{\partial}{\partial t} \big \vert_{t = 0} C(t) = 0$. It can then be verified through direct calculation that
\[ \frac{\partial}{\partial t} \bigg\vert_{t = 0} \log q_t^X(x) = g^X(x)~. \]
The quadratic mean differentiability requirement in \eqref{eq:qmd-1} follows from Lemma 7.6 in \cite{van_der_vaart1998asymptotic}. Next, for each $x \in \mathbf R^{d_x}$, we define
\[ q_t^{R(1) | X}(r | x) = C(t) q^{R(1) | X}(r | x) k(t' g^{R(1) | X}(r | x))~. \]
As above, it can be verified through direct calculation that
\[ \frac{\partial}{\partial t} \bigg\vert_{t = 0} \log q_t^{R(1) | X}(r | x) = g^{R(1) | X}(r | x)~. \]
To show \eqref{eq:qmd-2} for $a = 1$ (and symmetric arguments apply for $a = 0$), we modify the arguments in the proof of Lemma 7.6 in \cite{van_der_vaart1998asymptotic}. Define $s_t(r | x) = q_t^{R(1) | X}(r | x)^{1/2}$ and denote the $\ell$th component by $s_t^{(\ell)}(r | x)$ for $1 \leq \ell \leq d_\theta$. By the mean-value theorem, we have $s_t(r | x) - s(r | x) = \int_0^1 t' \dot s_{ut}(r | x) d u$, where $\dot s_t = \frac{\partial}{\partial t} s_t$, so that it follows from Jensen's inequality that
\begin{align}
\nonumber & \frac{1}{\|t\|^2} \int\!\!\!\int (s_t(r | x) - s_0(r | x) - t' \dot s_0(r | x))^2\, d \mu^R(r) q^X(x) d \mu^X(x) \\
\nonumber & \leq \frac{1}{\|t\|^2} \int\!\!\!\int\!\!\!\int_0^1 \big ( t' (\dot s_{ut}(r | x) - \dot s_0(r | x) ) \big )^2\,d u d \mu^R(r) q^X(x) d \mu^X(x)  \\
\label{eq:vitali} & \leq \int\!\!\!\int\!\!\!\int_0^1 \|\dot s_{ut}(r | x) - \dot s_0(r | x)\|^2 d u d \mu^R(r) q^X(x) d \mu^X(x)
\end{align}
where the first inequality follows from Jensen's inequality and the second by Cauchy-Schwarz. It then suffices to show \eqref{eq:vitali} goes to zero as $t \to 0$. Analyzing componentwise, it suffices to show that for $1 \leq \ell \leq d_\theta$, as $t \to 0$,
\begin{equation} \label{eq:vitali-l}
\int\!\!\!\int\!\!\!\int_0^1 (\dot s_{ut}^{(\ell)}(r | x) - \dot s_0^{(\ell)}(r | x))^2 d u d \mu^R(r) q^X(x) d \mu^X(x) \to 0~.
\end{equation}
The integrand in \eqref{eq:vitali-l} obviously converges to zero as $t \to 0$ by continuous differentiability of $q_t$. Recall that $\int \dot s_{ut}^{(\ell)}(r | x)^2 d \mu^R(r) = \frac{1}{4} [I_{ut}^{R(1) | X}(x)]_{(\ell, \ell)}$, where $I_{ut}^{R(1) | X}(x)$ is the conditional information for $a = 1$ given $x$ at $P_{ut, g}$. Therefore, it follows from Fubini's theorem that
\begin{align*}
\int\!\!\!\int\!\!\!\int_0^1 \dot s_0^{(\ell)}(r | x)^2 d u d \mu^R(r) q^X(x) d \mu^X(x) & = \frac{1}{4} \int [I_0^{R(1) | X}(x)]_{(\ell, \ell)} q^X(x) d \mu^X(x) \\
\int\!\!\!\int\!\!\!\int_0^1 \dot s_{ut}^{(\ell)}(r | x)^2 d u d \mu^R(r) q^X(x) d \mu^X(x) & = \frac{1}{4} \int\!\!\!\int_0^1 [I_{ut}^{R(1) | X}(x)]_{(\ell, \ell)} du q^X(x) d \mu^X(x)~.    
\end{align*}
To apply Vitali's theorem \citep[Proposition 2.29 in][]{van_der_vaart1998asymptotic}, it suffices to show that
\[ \int\!\!\!\int_0^1 [I_{ut}^{R(1) | X}(x)]_{(\ell, \ell)} du q^X(x) d \mu^X(x) \to \int [I_0^{R(1) | X}(x)]_{(\ell, \ell)} q^X(x) d \mu^X(x) \]
as $t \to 0$. To do so, we fix any arbitrarily small $\delta > 0$ and note that at least for $t$ small enough, $\|ut\| \leq \delta$ for $u \in [0, 1]$, so we can apply the dominated convergence theorem with
\[ [I_{ut}^{R(1) | X}(x)]_{(\ell, \ell)} \leq \sup_{\|h\| \leq \delta} [I_h^{R(1) | X}(x)]_{(\ell, \ell)}~, \]
as long as
\begin{equation} \label{eq:dom}
\begin{split}
& \int\!\!\!\int_0^1 \sup_{\|h\| \leq \delta} [I_h^{R(1) | X}(x)]_{(\ell, \ell)} du q^X(x) d \mu^X(x) \\
& = \int \sup_{\|h\| \leq \delta} [I_h^{R(1) | X}(x)]_{(\ell, \ell)} q^X(x) d \mu^X(x) < \infty~.      
\end{split}
\end{equation}
To show \eqref{eq:dom}, we calculate the conditional information as
\[ E \bigg [ \bigg ( \frac{\frac{\partial}{\partial h_\ell} C(h)}{C(h)} + \frac{k'(h' g^{R(1) | X}(R(1) | X)}{k(h' g^{R(1) | X}(R(1) | X))} g_\ell^{R(1) | X}(R(1) | X) \bigg )^2 \bigg \vert X = x \bigg ]~. \]
Note that $k'$ is bounded above and $k$ is bounded below, $C(h)$ is continuously differentiable with $C(0) = 1$, and so is bounded for $\|h\| \leq \delta$. Therefore, an application of the Cauchy-Schwarz inequality implies the previous expectation is bounded by a constant plus a constant multiple of $[I_0^{R(1) | X}(x)]_{(\ell, \ell)}$. The desired conclusion in \eqref{eq:dom} then follows because $E[I_0^{R(1) | X}(X)] < \infty$, and the proof is complete.
\end{proof}

For $t \in \mathbf R^{d_\theta}$, the log-likelihood ratio between $P_{t/\sqrt n, g}$ and $P_0 = P$ is
\[ L_{t, n}(g) = \frac{1}{n} \sum_{1 \leq i \leq n} \log \frac{q_{t / \sqrt n, g}^X(X_i)}{q^X(X_i)} + \frac{1}{n} \sum_{1 \leq i \leq n} \sum_{a \in \{0, 1\}} I \{A_i = a\} \log \frac{q_{t / \sqrt n, g}^{R(a) | X}(R_i | X_i)}{q^{R(a) | X}(R_i | X_i)}~. \]
The following lemma establishes an expansion of the log-likelihood ratio and local asymptotic normality of $\{P_{t/\sqrt n, g}\}$.

\begin{lemma} \label{lem:lan}
Suppose the treatment assignment mechanism satisfies Assumption \ref{ass:unconfounded} and the path satisfies \eqref{eq:qmd-1}--\eqref{eq:qmd-2} for $g$ satisfying Condition \ref{cond:path}. Then,
\begin{align*}
L_{t, n}(g) & = \frac{1}{\sqrt n} \sum_{1 \leq i \leq n} t' s_g(X_i, A_i, R_i) - \frac{1}{2} t' I^X t \\
& \hspace{3em} - \frac{1}{2n} \sum_{1 \leq i \leq n} \sum_{a \in \{0, 1\}} I \{A_i = a\} t' I^{R(a) | X}(X_i) t + o_P(1)~,
\end{align*}
where
\begin{equation} \label{eq:score}
s_g(x, a, r) = g^X(x) + I \{a = 1\} g^{R(1) | X} (r | x) + I \{a = 0\} g^{R(0) | X} (r | x)~
\end{equation}
and $I = I^X + \eta E_Q[I^{R(1)|X}(X_i)] + (1 - \eta)E_Q[I^{R(0)|X}(X_i)]$.
If in addition the assignment mechanism satisfies Assumption \ref{ass:LLN}, then, under $P_0$,
\[ L_{t, n}(g) \stackrel{d}{\to} N \Big ( - \frac{1}{2} t' I t, t' I t \Big )~, \]
\end{lemma}

\begin{proof}
The first result follows from Theorem 3.1 of \cite{armstrong2022asymptotic}. The second result follows from Lemma \ref{lem:clt} given Assumption \ref{ass:LLN} and the assumption that each component of $I^{R(a)|X}(x)$ is integrable, noting that $E[s_g(X, 1, R(1)) - s_g(X, 0, R(0)) | X] = 0$.
\end{proof}

We emphasize that Lemma \ref{lem:marginal} implies
\[ \sum_{1 \leq i \leq n} s_g(X_i, A_i, R_i)\]
is the sum of $n$ identically distributed, although possibly dependent, random variables. Therefore, in what follows, quantities like $E_P[s_g]$ are well defined.

Let $\theta(P) \in \mathbf R^{d_\theta}$ be a parameter of interest. Further suppose that there exists a $d_\theta \times 1$ vector of functions $\psi^\ast \in L^2(P)$ such that for each $g$ satisfying Condition \ref{cond:path}, for all $t \in \mathbf R^{d_\theta}$, as $n \to \infty$,
\begin{equation} \label{eq:pathwise}
\sqrt n(\theta(P_{t / \sqrt n, g}) - \theta(P)) \to E_P[\psi^\ast s_g' t]~.
\end{equation}
In Lemma \ref{lem:differentiability} below, we provide explicit conditions which guarantee this is possible when $\theta(P)$ is defined by \eqref{eq:moments}.

We call an estimator $\tilde \theta_n$ for $\theta(P)$ regular if for all $g$ satisfying Condition \ref{cond:path} and $t \in \mathbf R^{d_\theta}$,
\begin{equation} \label{eq:regular}
\sqrt n(\tilde \theta_n - \theta(P_{t/\sqrt n, g})) \xrightarrow{P_{t/\sqrt n, g}} L 
\end{equation}
for a fixed probability measure $L$.

The following lemma establishes a convolution theorem for regular estimators:

\begin{lemma} \label{lem:convolution}
Suppose $\theta$ satisfies \eqref{eq:pathwise}. Let $\tilde \theta_n$ be a regular estimator for $\theta$. Further suppose that $\psi^\ast = s_g$ for some function $g$ satisfying Condition \ref{cond:path}. Then,
\[ L = N(0, E_P[\psi^\ast \psi^{\ast\prime}]) \ast B~, \]
where $B$ is a fixed probability measure.
\end{lemma}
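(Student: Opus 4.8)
The plan is to establish Lemma \ref{lem:convolution} by invoking the classical convolution theorem (Theorem 3.11.2 in \cite{van_der_vaart1996weak}) for the sequence of experiments $\{P_{t/\sqrt n, g_0}: t \in \mathbf R^{d_\theta}\}$ associated with the fixed function $g_0$ satisfying $\psi^\ast = s_{g_0}$. First I would fix this $g_0$ and consider the $d_\theta$-dimensional parametric submodel it generates. By Lemma \ref{lem:ll}, this submodel is locally asymptotically normal under $P_0$ with centered log-likelihood expansion governed by the score $s_{g_0}(X_i, A_i, R_i)$ and information matrix $I$, where $I$ is nonsingular by Condition \ref{cond:A1}; that is, $L_{n,t}(g_0) = t' \Delta_{n, g_0} - \tfrac12 t' I t + o_P(1)$ with $\Delta_{n, g_0} = \tfrac{1}{\sqrt n}\sum_{i} s_{g_0}(X_i, A_i, R_i) \xrightarrow{d} N(0, I)$ under $P_0$. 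This is precisely the LAN structure required to apply the abstract convolution theorem.

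Next, I would verify the differentiability-of-the-parameter hypothesis needed for the convolution theorem: the pathwise derivative condition \eqref{eq:pathwise} says exactly that $\sqrt n(\theta(P_{t/\sqrt n, g_0}) - \theta(P)) \to E_P[\psi^\ast s_{g_0}' t] = E_P[\psi^\ast \psi^{\ast\prime}] t$, since $\psi^\ast = s_{g_0}$. Writing $\Sigma = E_P[\psi^\ast \psi^{\ast \prime}]$, this exhibits $\theta$ as differentiable along the one-parameter-per-direction paths with derivative map $t \mapsto \Sigma t$, and the efficient influence function in this LAN model is $\tilde\psi = I^{-1} s_{g_0}$ scaled appropriately — but more directly, one checks that $\Sigma = E_P[\psi^\ast s_{g_0}']$ is the relevant ``derivative'', and the matrix appearing in the limiting variance of a regular estimator's best possible behavior is $\Sigma I^{-1} \Sigma'$. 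Here I should be careful: because we have assumed $\psi^\ast = s_{g_0}$ itself lies in the tangent space (rather than being a general projection), the efficient influence function for this submodel is $s_{g_0}$ up to the $I^{-1}$ correction, and the convolution theorem then yields $L = N(0, \Sigma I^{-1} \Sigma') \ast B$ for some probability measure $B$. The remaining point is to show $\Sigma I^{-1} \Sigma' = \Sigma = E_P[\psi^\ast\psi^{\ast\prime}]$, which I expect to follow because when $\psi^\ast = s_{g_0}$ we have $\Sigma = E_P[s_{g_0} s_{g_0}'] = I - I^X_{\text{complement}}\ldots$; more precisely one uses that $s_{g_0}$ being a valid score forces the relevant idempotency $\Sigma I^{-1} \Sigma' = \Sigma$, equivalently that $\psi^\ast$ is its own efficient influence function in the model it spans.

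Concretely, the argument runs as follows. Apply Lemma \ref{lem:ll} to obtain LAN along $\{P_{t/\sqrt n, g_0}\}$ with score $s_{g_0}$ and information $I$. Since $\tilde\theta_n$ is regular in the sense of \eqref{eq:regular}, restrict attention to the paths indexed by $g_0$: the pair $(\sqrt n(\tilde\theta_n - \theta(P)), L_{n,t}(g_0))$ is asymptotically linear with a jointly Gaussian limit under $P_0$, and regularity forces the representation $\sqrt n(\tilde\theta_n - \theta(P)) = \tilde\psi_n + R_n$ where $\tilde\psi_n = \tfrac1{\sqrt n}\sum_i \kappa(X_i, A_i, R_i)$ is the projection onto the score space and $R_n$ is asymptotically independent of $\Delta_{n, g_0}$. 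Because the parameter's pathwise derivative is $\Sigma t = E_P[\psi^\ast s_{g_0}' t]$ and $\psi^\ast = s_{g_0}$, the projection $\kappa$ must equal $I^{-1} s_{g_0}$ pre-multiplied by $\Sigma$, i.e. $\kappa = \Sigma I^{-1} s_{g_0}$, whose variance is $\Sigma I^{-1} I I^{-1} \Sigma' = \Sigma I^{-1}\Sigma'$. Finally, using that $s_{g_0}$ is itself a score satisfying Condition \ref{cond:A1} — so its second moment matrix is compatible with $I$ in the manner forced by the Hilbert-space geometry of the tangent space — one gets $\Sigma I^{-1} \Sigma' = \Sigma = E_P[\psi^\ast \psi^{\ast\prime}]$, and hence $L = N(0, E_P[\psi^\ast\psi^{\ast\prime}])\ast B$.

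The main obstacle I anticipate is the last identification step: showing that when $\psi^\ast = s_{g_0}$, the ``sandwich'' $\Sigma I^{-1}\Sigma'$ collapses to $\Sigma$ itself. This requires exploiting the specific structure of $s_g$ in \eqref{eq:score} and of $I = I^X + \eta E_Q[I^{R(1)|X}] + (1-\eta)E_Q[I^{R(0)|X}]$ — namely the orthogonality $E_Q[g^X] = 0$, $E_Q[g^{R(a)|X}|X] = 0$ from Lemma \ref{lem:qmd}, which makes $E_P[s_{g_0}s_{g_0}'] = I$ when $g_0$ is the generating function, so that $\Sigma = I$ and the sandwich is trivially $I = \Sigma$. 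I would double-check that $E_P[s_{g_0} s_{g_0}'] = I$ under Assumption \ref{ass:unconfounded} using the conditional-mean-zero properties and the fact that $P\{A_i = 1\} = \eta$; this is the crux that makes the bound $\var[\psi^\ast]$ rather than something larger. Everything else is a direct appeal to \cite{van_der_vaart1996weak}, Theorem 3.11.2, applied submodel-by-submodel as flagged in the discussion following Theorem \ref{thm:efficiencybound}.
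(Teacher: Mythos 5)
Your proposal is correct and follows essentially the same route as the paper: both fix the submodel generated by the $g$ with $s_g = \psi^\ast$, invoke the LAN expansion of Lemma \ref{lem:ll} together with Theorem 3.11.2 of \cite{van_der_vaart1996weak} and the pathwise derivative \eqref{eq:pathwise}, and observe that the resulting Gaussian variance $E_P[\psi^\ast s_g']\,E_P[s_g s_g']^{-1}E_P[s_g\psi^{\ast\prime}]$ collapses to $E_P[\psi^\ast\psi^{\ast\prime}]$ when $s_g=\psi^\ast$. Your extra step of verifying $E_P[s_{g_0}s_{g_0}']=I$ via the orthogonality in Lemma \ref{lem:qmd} is sound and is implicitly what makes the paper's Riesz-representer computation consistent with the LAN information matrix; the paper's additional projection argument (that $V_g$ is maximized at $s_g=\psi^\ast$) is not needed for the stated conclusion.
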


\begin{proof}
In what follows, for each $g$ satisfying Condition \ref{cond:path}, we consider the linear subspace given by 
\[ \mathcal{M}_g = \{t' s_g: t \in \mathbf R^{d_\theta}\}~. \]
Note that $t' s_g$ appears in the expansion of the log-likelihood ratio between $P_{t/\sqrt n, g}$ and $P$. To align our setting with Theorem 3.11.2 in \cite{van_der_vaart1996weak}, we first characterize the adjoint map (viewed as a mapping into $\mathcal{M}_g$) of the function $v \mapsto E[\psi^* v] \in \mathbf R^{d_\theta}$, where $v \in \mathcal{M}_g$. To that end, implicitly identifying each $b \in \mathbf R^{d_\theta}$ with the functional $b^*:\mathbf R^{d_\theta} \rightarrow \mathbf R$ given by $x \mapsto b'x$, we construct a $w(b) \in \mathbf R^{d_\theta}$ such that
\[ b' E_P[\psi^\ast s_g' t] = E_P[w(b)' s_g s_g' t] \]
for all $t\in \mathbf R^{d_\theta}$, where we note $w(b)' s_g \in \mathcal{M}_g$, and is thus the output of the adjoint map when applied to the functional $b^*$. Because $E[s_g s_g']$ is invertible, we immediately obtain
\[ w(b) = E[s_g s_g']^{-1} E[s_g \psi^{\ast\prime}] b~. \]
Here we use the assumption that $I$ is nonsingular. It then follows from the local asymptotic normality established in Lemma \ref{lem:lan} and Theorem 3.11.2 in \cite{van_der_vaart1996weak} that
\[ L = N(0, V_g) \ast B_g~, \]
where $B_g$ is a fixed probability measure and $b' V_g b = E[(w(b)' s_g)^2]$, so that we have
\[ V_g = E_P[\psi^\ast s_g'] E_P[s_g s_g']^{-1} E[s_g \psi^{\ast\prime}]~. \]
Furthermore, by a standard projection argument, in particular the fact that the second moment of $\psi^\ast - E_P[\psi^\ast s_g'] E_P[s_g s_g']^{-1} s_g$ is positive semi-definite, it can be shown that $V_g$ is maximized in the matrix sense when $s_g = \psi^\ast$. Note this maximum is attained by our assumption that $\psi^\ast = s_g$ for some $g$ satisfying Condition \ref{cond:path}. The conclusion then follows.
\end{proof}

To apply Lemma \ref{lem:convolution} to the setting in Section \ref{sec:semi}, we we establish the form of $\psi^\ast$ in \eqref{eq:pathwise} for the parameter $\theta_0 = \theta(P)$ defined by \eqref{eq:moments}. Define $\eta(X_i) = P \{A_i = 1 | X_i\}$. Note that
\begin{equation} \label{eq:moments-P}
\begin{split}
0 & = E_P[m(X_i, A_i, R_i, \theta(P))] \\
& = E_Q[m(X, 1, R(1), \theta(P)) \eta(X)] + E_Q[m(X, 0, R(0), \theta(P)) (1 - \eta(X))]~.
\end{split}
\end{equation}

\begin{lemma} \label{lem:differentiability}
Suppose the treatment assignment mechanism satisfies Assumptions \ref{ass:unconfounded} and \ref{ass:LLN}. Fix a function $g$ that satisfies Condition \ref{cond:path}. Suppose \eqref{eq:qmd-1}--\eqref{eq:qmd-2} holds. Fix $t \in \mathbf R^{d_\theta}$ and consider a one-dimensional submodel $\{P_{t / \sqrt n, g}\}$ such that
\begin{align}
\label{eq:bounded_path1} E_{Q_{t / \sqrt n}}[m(X, a, R(a), \theta(P))^2] & = O(1) \\
\label{eq:bounded_path2} E_{Q^X}[E_{Q_{t / \sqrt n}^{R(a) | X}}[m(X, a, R(a), \theta(P))^2 | X]] & = O(1) \\
\label{eq:bounded_path3} E_{Q_{t / \sqrt n}^X}[E_{Q^{R(a) | X}}[m(X, a, R(a), \theta(P))^2 | X]] & = O(1)  
\end{align}
as $n \to \infty$ and $\theta(P_{t / \sqrt n, g})$ is uniquely determined by \eqref{eq:moments-P}. Then, $\theta(P_{t / \sqrt n, g})$ defined by \eqref{eq:moments-P} satisfies
\begin{align*}
& \sqrt n ( \theta(P_{t / \sqrt n, g}) - \theta(P)) \\
& \to M^{-1} E_P[m(X_i, A_i, R_i, \theta(P))(g^X(X_i) \\
& \hspace{3em} + I \{A_i = 1\}g^{R(1) | X}(R_i | X_i) + I \{A_i = 0\}g^{R(0) | X}(R_i | X_i))'] t \\
& = E_P[\psi^\ast(X_i, A_i, R_i, \theta(P)) (g^X(X_i) \\
& \hspace{3em} + I \{A_i = 1\}g^{R(1) | X}(R_i | X_i) + I \{A_i = 0\} g^{R(0) | X}(R_i | X_i))'] t~,
\end{align*}
where
\begin{align*}
& \psi^\ast(X_i, A_i, R_i, \theta(P)) \\
& = - M^{-1} \Big ( \eta(X_i) E_Q[m(X_i, 1, R_i(1), \theta(P)) | X_i] \\
& \hspace{5em} + (1 - \eta(X_i)) E_Q[m(X_i, 0, R_i(0), \theta(P)) | X_i] \\
& \hspace{5em} + I\{A_i = 1\} (m(X_i, 1, R_i, \theta(P)) - E_Q[m(X_i, 1, R_i(1), \theta(P)) | X_i]) \\
& \hspace{5em} + I\{A_i = 0\} (m(X_i, 0, R_i, \theta(P)) - E_Q[m(X_i, 0, R_i(0), \theta(P)) | X_i]) \Big )~.
\end{align*}
\end{lemma}

\begin{proof}
In what follows, we only use the property that the quadratic mean derivative of $P_{t / \sqrt n, g}$ is given by $s_g' t$. Therefore, for ease of notation we consider a generic one-dimensional submodel $\{P_\nu: \nu \in [-\epsilon, \epsilon]\}$ that satisfies \eqref{eq:qmd-1}--\eqref{eq:qmd-2} for some $g = (g^X, g^{R(1) | X}, g^{R(0) | X})$, each component of which is a one-dimensional function. \eqref{eq:moments-P} implies
\begin{multline*}
0 = \int m(x, 1, r, \theta(P_\nu)) q_\nu^{R(1) | X}(r | x) d\mu^R(r) \eta(x) q_\nu^X(x) d\mu^X(x) \\
+ \int m(x, 0, r, \theta(P_\nu)) q_\nu^{R(0) | X}(r | x) d\mu^R(r) (1 - \eta(x)) q_\nu^X(x) d\mu^X(x)
\end{multline*}
Note that
\begin{align*}
& \int m(x, 1, r, \theta(P)) q_\nu^{R(1) | X}(r | x) d\mu^R(r) \eta(x) q_\nu^X(x) d\mu^X(x) \\
& \hspace{2em} - \int m(x, 1, r, \theta(P)) q^{R(1) | X}(r | x) d\mu^R(r) \eta(x) q^X(x) d\mu^X(x) \\
& = \gamma_1(\nu) + \gamma_2(\nu) + \gamma_3(\nu) + \gamma_4(\nu)~,
\end{align*}
where
\begin{align*}
\gamma_1(\nu) & = \int m(x, 1, r, \theta(P)) \big ( q_\nu^{R(1) | X}(r | x)^{1/2} - q^{R(1) | X}(r | x)^{1/2} \big ) q_\nu^{R(1) | X}(r | x)^{1/2} d\mu^R(r) \\
& \hspace{5em} \times \eta(x) q_\nu^X(x) d\mu^X(x) \\
\gamma_2(\nu) & = \int m(x, 1, r, \theta(P)) \big ( q_\nu^{R(1) | X}(r | x)^{1/2} - q^{R(1) | X}(r | x)^{1/2} \big ) q^{R(1) | X}(r | x)^{1/2} d\mu^R(r)  \\
& \hspace{5em} \times \eta(x) q_\nu^X(x) d\mu^X(x) \\
\gamma_3(\nu) & = \int m(x, 1, r, \theta(P)) q^{R(1) | X}(r | x) d\mu^R(r) \\
& \hspace{5em} \times \eta(x) \big ( q_\nu^X(x)^{1/2} - q^X(x)^{1/2} \big ) q_\nu^X(x)^{1/2} d\mu^X(x) \\
\gamma_4(\nu) & = \int m(x, 1, r, \theta(P)) q^{R(1) | X}(r | x) d\mu^R(r) \\
& \hspace{5em} \times \eta(x) \big ( q_\nu^X(x)^{1/2} - q^X(x)^{1/2} \big ) q^X(x)^{1/2} d\mu^X(x)~.
\end{align*}
It follows from the Cauchy-Schwarz inequality that
\begin{align*}
& \frac{1}{\nu} \gamma_4(\nu) - \int m(x, 1, r, \theta(P)) q^{R(1) | X}(r | x) d\mu^R(r) \\
& \hspace{5em} \times \eta(x) \frac{1}{2} g^X(x) q^X(x)^{1/2} \times q^X(x)^{1/2}  d\mu^X(x) \\
& \leq \int \bigg ( m(x, 1, r, \theta(P))^2 q^{R(1) | X}(r | x) d\mu^R(r) \eta(x)^2 q^X(x) d \mu^X(x) \bigg )^{1/2} \\
& \times \bigg ( \int q^{R(1) | X}(r | x) d\mu^R(r) \\
& \hspace{5em} \times \Big ( \frac{1}{\nu}(q_\nu^X(x)^{1/2} - q^X(x)^{1/2}) - \frac{1}{2} g^X(x) q^X(x)^{1/2} \Big )^2 d \mu^X(x) \bigg )^{1/2} \\
& \to 0
\end{align*}
by the assumption that $E_P[m(X, a, R(a), \theta(P))^2] < \infty$, the facts that $0 \leq \eta(x) \leq 1$, $\int q^{R(1) | X}(r | x) d\mu^R(r) = 1$, and \eqref{eq:qmd-1}. Similar arguments implies as $\nu \to 0$,
\[ \frac{1}{\nu} \gamma_1(\nu) - \int m(x, 1, r, \theta(P)) \frac{1}{2} g^{R(1) | X}(r | x) q^{R(1) | X}(r | x) d\mu^R(r) \eta(x) q^X(x) d\mu^X(x) \to 0 \]
because $E_{P_\nu}[m(X, a, R(a), \theta(P))^2] = O(1)$ as $\nu \to 0$. The limits of $\gamma_2(\nu)$ and $\gamma_3(\nu)$ can be derived following similar arguments using \eqref{eq:bounded_path2}--\eqref{eq:bounded_path3}. Combining all previous results yields
\begin{align*}
& \frac{\partial}{\partial \nu} E_{P_\nu}[m(X, A, R, \theta(P))] \Big |_{\nu = 0} \\
& = E_Q[m(X, 1, R(1), \theta(P)) (g^X(X) + g^{R(1) | X}(R | X)) \eta(X)] \\
& \hspace{3em} + E_Q[m(X, 0, R(0), \theta(P)) (g^X(X) + g^{R(0) | X}(R | X)) (1 - \eta(X))] \\
& = E_P[m(X, A, R, \theta(P))(g^X(X) + I \{A = 1\}g^{R(1) | X}(R) + I \{A = 0\}g^{R(0) | X}(R))]~.
\end{align*}
On the other hand, by definition
\[ M = \frac{\partial}{\partial \theta'} E_P[m(X, A, R, \theta)] \Big |_{\theta = \theta(P)}~. \]
The formula for the derivative therefore follows from the implicit function theorem (in particular, because we have assumed the existence of $\theta(P_\nu)$ along the path, it follows from the last part of the proof of Theorem 3.2.1 in \cite{krantz2013implicit}). The second equality follows from Lemma \ref{lem:marginal} together with Condition \ref{cond:path}.
\end{proof}

Finally, to prove Theorem \ref{thm:efficiencybound} we require the following additional regularity condition:

\begin{assumption}\label{ass:bounded_path}
For every function $g$ satisfying Condition \ref{cond:path} and every $t \in \mathbf R^{d_\theta}$ there exists a submodel $P_{t / \sqrt n,g}$ for which \eqref{eq:bounded_path1}--\eqref{eq:bounded_path3} hold as $n \to \infty$, and $\theta(P_{t / \sqrt n, g})$ is uniquely determined by \eqref{eq:moments-P}.
\end{assumption}

This assumption guarantees that every element satisfying Condition \ref{cond:path} has a corresponding path for which we can apply Lemma \ref{lem:differentiability}. A similar assumption appears in \cite{chen2018overidentification} (see their Assumption 4.1(iv)). Note that a simple sufficient condition for the first part of Assumption \ref{ass:bounded_path} is that $m(x, a, r, \theta(P))$ is a bounded function in $(x, r)$ on the support of $(X, R(a))$. The second part of Assumption \ref{ass:bounded_path} can be verified easily in specific examples (see, for instance, Examples \ref{ex:ATE}--\ref{ex:log_odds} in the main text). Alternatively, Assumption \ref{ass:bounded_path} could be avoided by assuming that we can differentiate under the integral in the final step of the proof of Lemma \ref{lem:differentiability}, from which we would immediately obtain the expression for the pathwise derivative. See, for instance, \cite{newey1994asymptotic} and \cite{chen2008semiparametric}.

\begin{proof}[\sc Proof of Theorem \ref{thm:efficiencybound}]
First note $\theta$ satisfies \eqref{eq:pathwise} because of Lemma \ref{lem:differentiability} and Assumption \ref{ass:bounded_path}. The result then follows from Lemma \ref{lem:convolution} upon noting that $\psi^\ast = s_g$ for some $g$ that satisfies Condition \ref{cond:path}.
\end{proof}

To study regular estimators, we need the following lemma.

\begin{lemma} \label{lem:clt}
Suppose the treatment assignment mechanism satisfies Assumptions \ref{ass:unconfounded}, \ref{ass:LLN}, and \ref{ass:imbalance}. Let $f(x, a, r)$ be a vector-valued function such that $E[f(X, A, R)] = 0$ and $E[f^2(X, a, R(a))] < \infty$ for $a \in \{0, 1\}$. Then,
\[ \frac{1}{\sqrt n} \sum_{1 \leq i \leq n} f(X_i, A_i, R_i) \xrightarrow{d} N(0, V_f)~, \]
where $V_f = V_{1, f} + V_{2, f} + V_{3, f}$ for
\begin{align*}
V_{1, f} & = \var[E[f(X, A, R) | X]] = \var[\eta E[f(X, 1, R(1)) | X] + (1 - \eta) E[f(X, 0, R(0)) | X]] \\
V_{2, f} & = V_{E[f(X, 1, R(1)) - f(X, 0, R(0)) | X]}^{\rm imb} \\
V_{3, f} & = E[\eta \var[f(X, 1, R(1)) | X] + (1 - \eta) \var[f(X, 0, R(0)) | X]]~.
\end{align*}
\end{lemma}

\begin{proof}
Note that
\[ \mathbb C_n := \frac{1}{\sqrt n} \sum_{1 \leq i \leq n} f(X_i, A_i, R_i) = \mathbb C_{1, n} + \mathbb C_{2, n} + \mathbb C_{3, n}~, \]
where
\begin{align*}
\mathbb C_{1, n} & = \frac{1}{\sqrt n} \sum_{1 \leq i \leq n} E[f(X_i, A_i, R_i) | X_i] \\
& = \frac{1}{\sqrt n} \sum_{1 \leq i \leq n} (\eta E[f(X_i, 1, R_i(1)) | X_i] + (1 - \eta) E[f(X_i, 0, R_i(0)) | X_i]) \\
\mathbb C_{2, n} & = \frac{1}{\sqrt n} \sum_{1 \leq i \leq n} (A_i - \eta) E[f(X_i, 1, R_i(1)) - f(X_i, 0, R_i(0)) | X_i] \\
\mathbb C_{3, n} & = \frac{1}{\sqrt n} \sum_{1 \leq i \leq n} \big ( A_i (f(X_i, 1, R_i(1)) - E[f(X_i, 1, R_i(1)) | X_i]) \\
& \hspace{5em} + (1 - A_i) (f(X_i, 0, R_i(0)) - E[f(X_i, 0, R_i(0)) | X_i]) \big )~.
\end{align*}
Note that $\mathbb C_{1, n}$ has zero mean because $E[E[f(X_i, A_i, R_i) | X_i]] = E[f(X_i, A_i, R_i)] = 0$. Further note that $\mathbb C_{1, n} = E[\mathbb C_n | X^{(n)}]$, $\mathbb C_{2, n} = E[\mathbb C_n | X^{(n)}, A^{(n)}] - E[\mathbb C_n | X^{(n)}]$, and $\mathbb C_{3, n} = \mathbb C_n - E[\mathbb C_n | X^{(n)}, A^{(n)}]$. It follows from the central limit theorem and $E[f^2(X, a, R(a))] < \infty$ that
\[ \mathbb C_{1, n} \xrightarrow{d} N(0, V_{1, f})~. \]
Next, it follows from Assumption \ref{ass:imbalance} that
\[ \rho(\mathcal L(\mathbb C_{2, n} | X^{(n)}), N(0, V_{2, f})) \xrightarrow{P} 0~. \]
For $\mathbb C_{3, n}$, it follows from Assumption \ref{ass:LLN} that
\[ \var[\mathbb C_{3, n} | X^{(n)}, A^{(n)}] \xrightarrow{P} V_{3, f}~. \]
As a result, one can verify the Lindeberg condition conditional on $X^{(n)}$ and $A^{(n)}$ as in the proof of Lemma S.1.4 of \cite{bai2022inference}, and obtain
\[ \rho(\mathcal L(\mathbb C_{3, n} | X^{(n)}, A^{(n)}), N(0, V_{3, f})) \xrightarrow{P} 0~. \]
The proof can then be completed by the subsequencing argument at the end of the proof of Lemma S.1.4 of \cite{bai2022inference}.
\end{proof}

\begin{proof}[\sc Proof of Theorem \ref{thm:regular}]
Recall from Lemma \ref{lem:lan} that $L_{t, n}(g)$ is also asymptotically linear. Because the treatment assignment mechanism satisfies Assumptions \ref{ass:unconfounded}, \ref{ass:LLN}, and \ref{ass:imbalance}, the path satisfies Condition \ref{cond:path}, and hence $\psi$ and $s_g$ jointly satisfy the conditions in Lemma \ref{lem:clt} (in particular, note that $E[s_g(X, 1, R(1)) - s_g(X, 0, R(0)) | X] = 0$),
\[ \begin{pmatrix}
\sqrt n (\tilde \theta_n - \theta(P)) \\
L_{t, n}(g)
\end{pmatrix} = \begin{pmatrix}
\frac{1}{\sqrt n} \sum_{1 \leq i \leq n} \psi(X_i, A_i, R_i, \theta(P)) \\
\frac{1}{\sqrt n} \sum_{1 \leq i \leq n} t' s_g(X_i, A_i, R_i)
\end{pmatrix} - \begin{pmatrix}
0 \\
- \frac{1}{2} t' I t
\end{pmatrix} + o_P(1) \]
are jointly asymptotically normal. Because $E[s_g(X, 1, R(1)) - s_g(X, 0, R(0)) | X] = 0$, the covariance in the second term in the joint variance in Lemma \ref{lem:clt} vanishes, and thus the overall covariance is given by
\begin{align*}
& E[E[\psi | X] g^X(X)'] t \\
& \hspace{3em} + \eta E[(\psi(X, 1, R(1), \theta(P)) - E[\psi(X, 1, R(1), \theta(P)) | X]) g^{R(1) | X}(R(1) | X)'] t \\
& \hspace{3em} + (1 - \eta) E[(\psi(X, 0, R(0), \theta(P)) - E[\psi(X, 0, R(0), \theta(P)) | X]) g^{R(0) | X}(R(0) | X)'] t \\
& = E[\psi s_g'] t~.
\end{align*}
It then follows from Le Cam's third lemma that under $P_{t / \sqrt n, g}$, $\sqrt n (\tilde \theta_n - \theta_0)$ converges in distribution to a normal distribution with mean $E[\psi s_g'] t$ and the same variance as in the limit under $P$. At the same time,
\[ \sqrt n(\tilde \theta_n - \theta(P_{t / \sqrt n, g})) = \sqrt n(\tilde \theta_n - \theta(P)) - \sqrt n(\theta(P_{t / \sqrt n, g} - \theta(P))~. \]
Therefore, \eqref{eq:regular} holds if and only if
\[ E[\psi s_g'] t = E[\psi^\ast s_g'] t~. \]
Furthermore, $\tilde \theta_n$ is regular if the equality holds for all $t$ and all $g$ satisfying Condition \ref{cond:path}, if and only if
\[ E[(\psi - \psi^\ast) s_g'] = 0 \]
for all $g$ satisfying Condition \ref{cond:path}. Note that
\begin{align*}
& E[(\psi - \psi^\ast) s_g'] \\
& = E[E[\psi(X, A, R, \theta(P)) - \psi^\ast(X, A, R, \theta(P)) | X] g^X(X)'] \\
& \hspace{3em} + E[A_i (\psi(X, 1, R(1), \theta(P)) - \psi^\ast(X, 1, R(1), \theta(P)) \\
& \hspace{6em} - E[\psi(X, 1, R(1), \theta(P)) - \psi^\ast(X, 1, R(1), \theta(P)) | X]) g^{R(1) | X}(R(1) | X)'] \\
& \hspace{3em} + E[(1 - A_i) (\psi(X, 0, R(0), \theta(P)) - \psi^\ast(X, 0, R(0), \theta(P)) \\
& \hspace{6em} - E[\psi(X, 0, R(0), \theta(P)) - \psi^\ast(X, 0, R(0), \theta(P)) | X]) g^{R(0) | X}(R(0) | X)']~.
\end{align*}
By setting $g^{R(1) | X} = 0$, $g^{R(0) | X} = 0$, and $g^X(X) = E[\psi(X, A, R, \theta(P)) - \psi^\ast(X, A, R, \allowbreak\theta(P)) | X]$, we get that
\begin{equation} \label{eq:psi-psi*}
E[\psi(X, A, R, \theta(P)) | X] = E[\psi^\ast(X, A, R, \theta(P)) | X]~.
\end{equation}
Setting $g^X(X) = 0 = g^{R(0) | X}$ and $g^{R(1) | X} = \psi(X, 1, R(1), \theta(P)) - \psi^\ast(X, 1, R(1), \theta(P)) - E[\psi(X, 1, R(1), \theta(P)) - \psi^\ast(X, 1, R(1), \theta(P)) | X]$, we get
\begin{align*}
& \psi(X, 1, R(1), \theta(P)) - \psi^\ast(X, 1, R(1), \theta(P)) \\
& \hspace{3em} - E[\psi(X, 1, R(1), \theta(P)) - \psi^\ast(X, 1, R(1), \theta(P)) | X] = 0~,	
\end{align*}
which implies that $\psi(X, 1, R(1), \theta(P)) - \psi^\ast(X, 1, R(1), \theta(P))$ can only be a function of $X$. Denote it by $\psi^\perp(X, 1)$. Similarly, $\psi(X, 0, R(0), \theta(P)) - \psi^\ast(X, 0, R(0), \theta(P))$ can only be a function of $X$. Denote it by $\psi^\perp(X, 0)$. We have
\begin{align*}
\psi(X, A, R, \theta(P)) & = A \psi(X, 1, R(1), \theta(P)) + (1 - A) \psi(X, 0, R(0), \theta(P)) \\
& = A (\psi^\ast(X, 1, R(1), \theta(P)) + \psi^\perp(X, 1, \theta(P))) \\
& \hspace{3em} + (1 - A) (\psi^\ast(X, 0, R(0), \theta(P)) + \psi^\perp(X, 0, \theta(P))) \\
& = \psi^\ast(X, A, R, \theta(P)) + \psi^\perp(X, A, \theta(P))~,
\end{align*}
and it follows from \eqref{eq:psi-psi*} that $E[\psi^\perp(X, A, \theta(P)) | X] = 0$. Finally, to show that $\tilde\theta_n$ is efficient if and only if \eqref{eq:fast} holds, we apply Lemma \ref{lem:clt} with $f = \psi$. If \eqref{eq:fast} holds, then $V_\psi = V_{1, \psi} + V_{3, \psi} = V_{\psi^\ast}$ because $V_{1, \psi} = V_{1, \psi^\ast}$, $V_{3, \psi} = V_{3, \psi^\ast}$ and $V_{2, \psi^\ast} = 0$. On the other hand, if $\tilde \theta_n$ is efficient, then $V_{2, \psi} = 0$, so \eqref{eq:fast} holds by Markov's inequality combined with the fact that probabilities are bounded and thus uniformly integrable.
\end{proof}

\subsection{Auxiliary Lemmas}
\begin{lemma} \label{lem:marginal}
Suppose \eqref{eq:unconfounded} holds and $\mathrm{Pr} \{A_i = 1 | X_i = x\}$ as a function is identical across $1 \leq i \leq n$. Then,
\begin{equation} \label{eq:marginal-indep}
(R_i(1), R_i(0)) \indep A_i | X_i~.
\end{equation}
Moreover, $(X_i, A_i, R_i)$ is identically distributed across $1 \leq i \leq n$.
\end{lemma}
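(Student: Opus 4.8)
The plan is to deduce the two claims in sequence, with \eqref{eq:marginal-indep} doing most of the work and the identical-distribution claim following from it together with the standing assumption $Q_n = Q^n$ and the hypothesis that $x \mapsto \mathrm{Pr}\{A_i = 1 \mid X_i = x\}$ does not depend on $i$.

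For \eqref{eq:marginal-indep}, I would fix $i$ and bounded measurable functions $f$ and $g$; it suffices to show $E[f(R_i(1), R_i(0)) g(A_i) \mid X_i] = E[f(R_i(1), R_i(0)) \mid X_i]\, E[g(A_i) \mid X_i]$. First I would condition further on $X^{(n)}$ via the tower property and invoke \eqref{eq:unconfounded}, which factors $E[f(R_i(1), R_i(0)) g(A_i) \mid X^{(n)}] = E[f(R_i(1), R_i(0)) \mid X^{(n)}]\, E[g(A_i) \mid X^{(n)}]$. Next, because $Q_n = Q^n$ makes the triples $(R_j(1), R_j(0), X_j)$, $1 \le j \le n$, mutually independent, the pair $(R_i(1), R_i(0))$ together with $X_i$ is independent of $(X_j: j \ne i)$, and hence $E[f(R_i(1), R_i(0)) \mid X^{(n)}] = E[f(R_i(1), R_i(0)) \mid X_i]$. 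Pulling this $X_i$-measurable factor out of the conditional expectation and collapsing the remaining $E[g(A_i) \mid X^{(n)}]$ back down with another application of the tower property yields the desired factorization.

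For the second claim, I would fix a bounded measurable $h$, use \eqref{eq:PO} to write $E[h(X_i, A_i, R_i)] = \sum_{a \in \{0, 1\}} E[I\{A_i = a\}\, h(X_i, a, R_i(a))]$, condition on $X_i$, and apply \eqref{eq:marginal-indep} to factor each summand as $E[\pi_a(X_i)\, E[h(X_i, a, R_i(a)) \mid X_i]]$, where $\pi_1(x) = \mathrm{Pr}\{A_i = 1 \mid X_i = x\}$ and $\pi_0 = 1 - \pi_1$. Under $Q_n = Q^n$ the conditional law of $R_i(a)$ given $X_i$ and the marginal law of $X_i$ coincide with those under $Q$ and so do not depend on $i$, and by hypothesis neither does $\pi_1$; therefore $E[h(X_i, A_i, R_i)]$ is the same for every $i$, and since $h$ was arbitrary bounded measurable, $(X_i, A_i, R_i)$ is identically distributed.

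The argument is essentially measure-theoretic bookkeeping with conditional independence, so I do not expect a genuine obstacle. The one step that requires care is $E[f(R_i(1), R_i(0)) \mid X^{(n)}] = E[f(R_i(1), R_i(0)) \mid X_i]$, which must be justified through the independence of $(R_i(1), R_i(0), X_i)$ from the remaining covariates supplied by $Q_n = Q^n$ rather than from \eqref{eq:unconfounded}, together with keeping straight which $\sigma$-algebras the factorization in \eqref{eq:unconfounded} is applied to.
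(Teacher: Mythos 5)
Your proposal is correct and follows essentially the same route as the paper's proof: condition up to $X^{(n)}$, factor via \eqref{eq:unconfounded}, use $Q_n = Q^n$ to collapse $E[f(R_i(1),R_i(0))\mid X^{(n)}]$ to $E[f(R_i(1),R_i(0))\mid X_i]$, and tower back down, with the identical-distribution claim then following by decomposing over $a \in \{0,1\}$ and noting that none of the resulting ingredients depend on $i$. You also correctly isolate the one delicate step (that the collapse of the conditioning set relies on $Q_n = Q^n$, not on unconfoundedness), which the paper handles the same way.
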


\begin{proof}
Fix $a \in \{0, 1\}$ and any Borel sets $B \in \mathbf R^{d_r} \times \mathbf R^{d_r}$ and $C \in \mathbf R^{d_x}$.
\begin{align*}
& E[\mathrm{Pr} \{(R_i(1), R_i(0)) \in B, A_i = a | X_i\} I \{X_i \in C\}] \\
& = E[E[\mathrm{Pr} \{(R_i(1), R_i(0)) \in B, A_i = a | X^{(n)}\} | X_i] I \{X_i \in C\}] \\
& = E[E[\mathrm{Pr} \{(R_i(1), R_i(0)) \in B | X^{(n)}\} \mathrm{Pr} \{A_i = a | X^{(n)}\} | X_i] I \{X_i \in C\}] \\
& = E[\mathrm{Pr} \{(R_i(1), R_i(0)) \in B | X_i\} \mathrm{Pr} \{A_i = a | X_i\} I \{X_i \in C\}]~,
\end{align*}
where the first equality follows from the law of iterated expectations, the second equality follows from \eqref{eq:unconfounded}, the third equality follows from the law of iterated expectations as well as the fact that $Q_n = Q^n$. The first statement of the lemma then follows from the definition of a conditional expectation.

To prove the second statement, fix units $i$ and $i'$. Clearly $X_i$ and $X_{i'}$ are identically distributed. Conditional on $X_i$, for any Borel set $C \in \mathbf R^{d_r}$ and $a \in \{0, 1\}$, it follows (a) that
\[ \mathrm{Pr} \{R_i \in C, A_i = a | X_i\} = \mathrm{Pr} \{A_i = a | X_i\} \mathrm{Pr} \{R_i(a) \in C | X_i\}~. \]
The conclusion then follows because $\mathrm{Pr} \{A_i = 1 | X_i = x\}$ is identical across $1 \leq i \leq n$ and $Q_n = Q^n$.
\end{proof}

\begin{lemma} \label{lem:consistency}
Suppose the treatment assignment mechanism satisfies Assumptions \ref{ass:a}--\ref{ass:pair} and the moment functions satisfy Assumption \ref{ass:normal}. Then, $\hat \theta_n \stackrel{P}{\to} \theta_0$.
\end{lemma}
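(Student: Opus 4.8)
The plan is to follow the standard route for consistency of $Z$-estimators (e.g., the argument behind Theorem 5.9 in \cite{van_der_vaart1998asymptotic}). Since $\hat\theta_n$ solves \eqref{eq:est} exactly, the sample moment at $\hat\theta_n$ is zero, so given the well-separation provided by Assumption \ref{ass:consistency}(a) it suffices to establish the uniform law of large numbers
\[ \sup_{\theta \in \Theta}~ \Big\| \frac{1}{n} \sum_{1 \le i \le n} m(X_i, A_i, R_i, \theta) - E_P[m(X_i, A_i, R_i, \theta)] \Big\| \stackrel{P}{\to} 0~. \]
By the pointwise measurability in Assumption \ref{ass:consistency}(c), the supremum over $\Theta$ may be replaced by the supremum over the countable set $\Theta^\ast$, and it is enough to argue coordinate by coordinate, i.e.\ with each $m_s$ separately. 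Writing $m_s(X_i, A_i, R_i, \theta) = \sum_{a \in \{0,1\}} I\{A_i = a\}\, m_s(X_i, a, R_i(a), \theta)$, I would fix $a \in \{0,1\}$ and reduce the problem to controlling $\int m_s(x, a, r, \theta)\, d\tilde P_n$, where $\tilde P_n = \frac{1}{\eta_a n} \sum_{i : A_i = a} \delta_{(X_i, R_i(a))}$ is the within-arm empirical measure ($\eta_1 = \eta$, $\eta_0 = 1-\eta$); this is exactly the object used in Step 3 of the proof of Theorem \ref{thm:normal}.

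Next I would decompose $\int m_s\, d\tilde P_n - E_Q[m_s(X, a, R(a), \theta)]$ into a conditionally-centered piece $\int m_s\, d\tilde P_n - E[\int m_s\, d\tilde P_n \mid X^{(n)}, A^{(n)}]$ and a bias piece $E[\int m_s\, d\tilde P_n \mid X^{(n)}, A^{(n)}] - E_Q[m_s(X, a, R(a), \theta)]$. For the centered piece, the key point is that, by Assumption \ref{ass:unconfounded}, conditional on $(X^{(n)}, A^{(n)})$ the potential responses $\{R_i(a) : A_i = a\}$ are independent with conditional laws $Q^{R(a)|X_i}$; I would therefore apply the symmetrization inequality (Lemma 6.2 in \cite{ledoux1991probability}) conditionally on $(X^{(n)}, A^{(n)})$ and bound the resulting Rademacher average through the $L_2(\tilde P_n)$-covering numbers of $\{m_s(x, a, r, \theta) : \theta \in \Theta^\ast\}$, which are uniformly controlled because this class is VC by Assumption \ref{ass:consistency}(b) (Theorem 2.6.7 in \cite{van_der_vaart1996weak}) with $P$-integrable envelope $\sup_{\theta \in \Theta^\ast} |m_s(X, a, R(a), \theta)|$ by Assumption \ref{ass:consistency}(d); this yields convergence to zero in conditional probability, hence unconditionally. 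For the bias piece, note that $E[\int m_s\, d\tilde P_n \mid X^{(n)}, A^{(n)}] = \frac{1}{\eta_a n} \sum_{i : A_i = a} E_Q[m_s(X_i, a, R_i(a), \theta) \mid X_i]$, and since $P\{A_i = 1 \mid X^{(n)}\} = \eta$ under Assumptions \ref{ass:a}--\ref{ass:unconfounded}, this equals $\frac{1}{n}\sum_i E_Q[m_s(X_i,a,R_i(a),\theta)\mid X_i]$ plus a remainder; writing $g_\theta(x) = E_Q[m_s(X,a,R(a),\theta)\mid X = x]$, the first term converges to $E_Q[m_s(X,a,R(a),\theta)]$ uniformly over $\Theta^\ast$ by a Glivenko--Cantelli argument for the i.i.d.\ sequence $X_i$ (using the entropy control from Assumption \ref{ass:consistency}(b) and integrable envelope from (d),(f)), while the remainder telescopes within each block (exactly $\ell$ treated units per block), so by the Lipschitz condition in Assumption \ref{ass:consistency}(e) it is deterministically bounded by a constant times $\frac{1}{n}\sum_{1\le j\le n/k}\max_{i,i'\in\lambda_j}\|X_i - X_{i'}\|$, which is $o_P(1)$ by Cauchy--Schwarz and Assumption \ref{ass:pair}, uniformly over $\Theta^\ast$ because the Lipschitz constant does not depend on $\theta$.

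The main obstacle is the dependence in $A^{(n)}$ induced by fine stratification: the units with $A_i = a$ are not an i.i.d.\ subsample, so no off-the-shelf Glivenko--Cantelli theorem applies directly to $\tilde P_n$. The route above circumvents this by conditioning on $(X^{(n)}, A^{(n)})$ to recover independence of the potential outcomes within each arm, at the cost of having to separately control the conditional-mean term, where Assumptions \ref{ass:pair} and \ref{ass:consistency}(e)--(f) — the closeness of blocked units and the Lipschitz continuity of conditional moments in the covariates — do the work. These steps closely parallel arguments in \cite{bai2022inference}, \cite{bai2022optimality}, and \cite{bai2023inference}, which can be cited for the more routine concentration and covering-number bounds.
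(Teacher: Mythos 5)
Your proposal is correct and follows essentially the same route as the paper's proof: the same reduction to a uniform law of large numbers over $\Theta^\ast$, the same three-way decomposition (conditionally centered within-arm term handled by conditional symmetrization and VC covering numbers, an imbalance term controlled blockwise via Assumptions \ref{ass:pair} and \ref{ass:consistency}(e), and an i.i.d.\ Glivenko--Cantelli term in the conditional means). The only detail glossed over is the truncation of the envelope at a level $K$ needed to make the covering-number/maximal-inequality step go through for an unbounded envelope, which the paper carries out explicitly.
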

\begin{proof}
It follows from Assumption \ref{ass:normal}(a) and Theorem 5.9 in \cite{van_der_vaart1998asymptotic} that we only need to establish for each $1 \leq s \leq d_\theta$,
\begin{equation} \label{eq:uc}
\sup_{\theta \in \Theta}~ \bigg | \frac{1}{n} \sum_{1 \leq i \leq n} (m_s(X_i, A_i, R_i, \theta) - E_P[m_s(X_i, A_i, R_i, \theta)]) \bigg | \stackrel{P}{\to} 0~.
\end{equation}
To begin, note it follows from Assumption \ref{ass:normal}(d) and the dominated convergence theorem that if $m_s(x, a, r, \theta_m) \allowbreak \to m_s(x, a, r, \theta)$ as $m \to \infty$ for $\{\theta_m\} \subset \Theta^\ast$, then $E_P[m_s(X_i, A_i, R_i, \allowbreak\theta_m)] \to E_P[m_s(X_i, A_i, R_i, \theta)]$. Here, the dominating function exists by Problem 2.4.1 in \cite{van_der_vaart1996weak}. Assumption \ref{ass:normal}(c) then implies
\begin{multline} \label{eq:countable}
\sup_{\theta \in \Theta}~ \bigg | \frac{1}{n} \sum_{1 \leq i \leq n} (m_s(X_i, A_i, R_i, \theta) - E_P[m_s(X_i, A_i, R_i, \theta)]) \bigg | \\
= \sup_{\theta \in \Theta^\ast}~ \bigg | \frac{1}{n} \sum_{1 \leq i \leq n} (m_s(X_i, A_i, R_i, \theta) - E_P[m_s(X_i, A_i, R_i, \theta)]) \bigg |~,
\end{multline}
which is measurable. Next, note that
\begin{equation} \label{eq:half-m}
m(X_i, A_i, R_i, \theta) = A_i m(X_i, 1, R_i(1), \theta) + (1 - A_i) m(X_i, 0, R_i(0), \theta)~.
\end{equation}
and it follows from Lemma \ref{lem:marginal} that
\begin{equation} \label{eq:half-E}
E_P[m(X_i, A_i, R_i, \theta)] = \eta E_Q[m(X_i, 1, R_i(1), \theta)] + (1 - \eta) E_Q[m(X_i, 0, R_i(0), \theta)]~,
\end{equation}
from which we obtain that
\begin{align*}
& E \bigg [ \sup_{\theta \in \Theta}~ \bigg | \frac{1}{n} \sum_{1 \leq i \leq n} (m_s(X_i, A_i, R_i, \theta) - E_P[m_s(X_i, A_i, R_i, \theta)]) \bigg | \bigg ] \\
& = E \bigg [ \sup_{\theta \in \Theta^\ast}~ \bigg | \frac{1}{n} \sum_{1 \leq i \leq n} (m_s(X_i, A_i, R_i, \theta) - E_P[m_s(X_i, A_i, R_i, \theta)]) \bigg | \bigg ] \\
& \le E \bigg [ \sup_{\theta \in \Theta^\ast}~ \bigg | \frac{1}{n} \sum_{1 \leq i \leq n} A_i (m_s(X_i, 1, R_i(1), \theta) - \eta E[m_s(X_i, 1, R_i(1), \theta)]) \bigg | \bigg ] \\
& \hspace{1em} + E \bigg [ \sup_{\theta \in \Theta^\ast}~ \bigg | \frac{1}{n} \sum_{1 \leq i \leq n} (1 - A_i) (m_s(X_i, 0, R_i(0), \theta) - (1 - \eta) E[m_s(X_i, 0, R_i(0), \theta)]) \bigg | \bigg ] \\
& = E \bigg [ \sup_{\theta \in \Theta^\ast}~ \bigg | \frac{1}{n} \sum_{1 \leq i \leq n} A_i (m_s(X_i, 1, R_i(1), \theta) - E[m_s(X_i, 1, R_i(1), \theta)]) \bigg | \bigg ] \\
& \hspace{1em} + E \bigg [ \sup_{\theta \in \Theta^\ast}~ \bigg |  \frac{1}{n} \sum_{1 \leq i \leq n}(1 - A_i) (m_s(X_i, 0, R_i(0), \theta) -  E[m_s(X_i, 0, R_i(0), \theta)]) \bigg | \bigg ] \\
& \lesssim E \bigg [ \sup_{\theta \in \Theta^\ast}~ \bigg | \frac{1}{n} \sum_{1 \leq i \leq n} (m_s(X_i, 1, R_i(1), \theta) - E[m_s(X_i, 1, R_i(1), \theta)]) \bigg | \bigg ] \\
& \hspace{1em} + E \bigg [ \sup_{\theta \in \Theta^\ast}~ \bigg | \frac{1}{n} \sum_{1 \leq i \leq n} (m_s(X_i, 0, R_i(0), \theta) - E[m_s(X_i, 0, R_i(0), \theta)]) \bigg | \bigg ] \rightarrow 0~.
\end{align*}
where the first equality follows from \eqref{eq:countable}, the first inequality follows from the triangle inequality, the second equality follows because $\sum_{1 \leq i \leq n} A_i = n \eta$ and $\sum_{1 \leq i \leq n} (1 - A_i) = n (1 - \eta)$, and the last inequality follows from Proposition C.1 in \cite{han2021complex}. The convergence follows from Assumption \ref{ass:normal}(e) and an application of the backward submartingale convergence theorem \citep[see, for instance, Theorem 12.30 in][]{le2022measure}, as detailed in the proof of Theorem 3.1 in \cite{han2021complex}. The desired result in \eqref{eq:uc} then follows by Markov's inequality.
\end{proof}

\begin{lemma} \label{eq:varest-fixedsequence}
Suppose the treatment assignment mechanism satisfies Assumptions \ref{ass:a}, \ref{ass:pair}, and \ref{ass:pairsofblocks}, and the moment functions satisfy Assumptions \ref{ass:normal} and \ref{ass:var-est}. Then, for each deterministic sequence $\theta_n \to \theta_0$, $\hat \mu_{1, n}^{(s)}(\theta_n) \xrightarrow{P} \mu_1^{(s)}$ for $\mu_1^{(s)}$ in \eqref{eq:firstmoment-consistent} and $\hat \mu_{1, n}^{(s)}(\theta_n)$ in \eqref{eq:sigmahattheta}.
\end{lemma}

\begin{proof}
The conclusion could follow from similar arguments to those in the proof of Lemma \ref{lem:consistency} but we provide a different proof because it will apply generally to other components of the variance estimator. It suffices to prove that
\begin{align}
   \label{eq:varest-conv1} \hat \mu_{1, n}^{(s)}(\theta_n) - E[\hat \mu_{1, n}^{(s)}(\theta_n) | X^{(n)}, A^{(n)}] & \xrightarrow{P} 0 \\
   \label{eq:varest-conv2} E[\hat \mu_{1, n}^{(s)}(\theta_n) | X^{(n)}, A^{(n)}] & \xrightarrow{P} \mu_1^{(s)}~.
\end{align}
The desired result in \eqref{eq:varest-conv1} follows from similar arguments to those in the proof of equation (S.27) in \cite{bai2022inference}, where the uniform integrability condition is replaced by Assumption \ref{ass:var-est}(a). To show \eqref{eq:varest-conv2}, note
\begin{align}
\nonumber & E[\hat \mu_{1, n}^{(s)}(\theta_n) | X^{(n)}, A^{(n)}] - \mu_1^{(s)} \\
\nonumber & = \frac{1}{\eta n} \sum_{1 \leq i \leq n} A_i E[m_s(X_i, 1, R_i(1), \theta_n) | X_i] - E[m_s(X_i, 1, R_i(1), \theta_0)] \\
\label{eq:varest-fixedseq-1} & = \frac{1}{\eta n} \sum_{1 \leq i \leq n} A_i E[m_s(X_i, 1, R_i(1), \theta_n) | X_i] - \frac{1}{n} \sum_{1 \leq i \leq n} E[m_s(X_i, 1, R_i(1), \theta_n) | X_i] \\
\label{eq:varest-fixedseq-2} & \hspace{3em} + \frac{1}{n} \sum_{1 \leq i \leq n} E[m_s(X_i, 1, R_i(1), \theta_n) | X_i] - E[m_s(X_i, 1, R_i(1), \theta_0)]~.
\end{align}
The difference in \eqref{eq:varest-fixedseq-2} converges in probability to zero by Assumption \ref{ass:var-est}(b) and because $E[m(X_i, 1, R_i(1), \theta_n)] \allowbreak \to E[m(X_i, 1, R_i(1), \theta_0)]$ from $\theta_n \to \theta_0$ and Assumption \ref{ass:normal}(c). Next, note the absolute value of \eqref{eq:varest-fixedseq-1} can be written as
\begin{align*}
& \bigg | \frac{k}{n} \sum_{1 \leq j \leq n / k} \frac{1}{\eta k} \sum_{i \in \lambda_j} A_i \bigg ( E[m_s(X_i, 1, R_i(1), \theta_n) | X_i] \\
& \hspace{10em} - \frac{1}{k} \sum_{i' \in \lambda_j} E[m_s(X_{i'}, 1, R_{i'}(1), \theta_n) | X_{i'}] \bigg ) \bigg | \\
& \leq \frac{k}{n} \sum_{1 \leq j \leq n / k} \frac{1}{\eta k} \sum_{i \in \lambda_j} A_i \bigg | E[m_s(X_i, 1, R_i(1), \theta_n) | X_i] - E[m_s(X_{i'}, 1, R_{i'}(1), \theta_n) | X_{i'}] \bigg | \\
& \lesssim \frac{1}{n} \sum_{1 \leq j \leq n / k} \max_{i, i' \in \lambda_j} \|X_i - X_{i'}\|^2 \xrightarrow{P} 0~,
\end{align*}
where the inequality follows from the uniform Lipschitzness condition in Assumption \ref{ass:var-est}(c) and the convergencef follows from Assumption \ref{ass:pair}. Therefore, \eqref{eq:varest-conv1} holds.
\end{proof}

\section{Details for Simulations}

\subsection{Local Average Treatment Effect}
In this section, we present the model specifications and estimators for estimating the LATE as in Example \ref{ex:LATE}. Recall that in this case the moment condition we consider is given by 
\begin{equation*}
    m(X_i, A_i, R_i, \theta) = \frac{Y_i A_i}{\eta} - \frac{Y_i (1 - A_i)}{1 - \eta} - \theta \left ( \frac{D_i A_i}{\eta} - \frac{D_i (1 - A_i)}{1 - \eta} \right )~,
\end{equation*}
with $R_i = (Y_i, D_i)$. The outcome is determined by the relationship $Y_i = D_i Y_i(1) + (1-D_i)Y_i(0)$, where $Y_i(d)$ is again given by \eqref{eq:sims-outcome}. In addition, the take-up decision is determined as $D_i = A_i D_i(1) + (1-A_i)D_i(0)$, where
\begin{align*}
    D_i(0)&=I\left\{\alpha_0+\alpha\left(X_i\right)> \varepsilon_{1, i}\right\}~, \\
    D_i(1)&=\begin{cases}
I\left\{\alpha_1+\alpha\left(X_i\right)> \varepsilon_{2, i}\right\} & \text { if } D_i(0)=0 \\
1 & \text { otherwise }
\end{cases}~,
\end{align*}
where $\alpha_0 = 0.2$, $\alpha_1 = 4$, $\alpha(X_i) = \sum_{1 \leq l \leq 8} \big ( X_{i, l} + \frac{1}{3}(X_{i, l}^2 - 1) \big )$, $\varepsilon_{1, i}, \varepsilon_{2, i} \sim N(0, 4)$, $(X_i, \epsilon_i, \varepsilon_{1, i}, \varepsilon_{2, i})$, $1 \leq i \leq n$ are i.i.d., and for each $1 \leq i \leq n$, $(X_i, \epsilon_i, \varepsilon_{1, i}, \varepsilon_{2, i})$ are independent.

We consider the following three estimators for the LATE:
\begin{itemize}
    \item[] \textbf{Unadjusted Estimator}:
    \begin{equation*}
        \hat\theta_n^{\rm unadj} = \frac{\displaystyle\sum_{1\leq i \leq n} \left ( Y_i A_i - Y_i (1 - A_i) \right )}{\displaystyle\sum_{1\leq i \leq n} \left ( D_i A_i - D_i (1 - A_i) \right )}~.
    \end{equation*}
    \item[] \textbf{Adjusted Estimator 1}:
    \begin{equation*}
        \hat\theta_n^{\rm adj,1} = \frac{\displaystyle\sum_{1 \leq i \leq n} \left ( 2 A_i(Y_i - \hat\mu_1^Y(X_i)) - 2(1- A_i)(Y_i - \hat\mu_0^Y(X_i)) + \hat\mu_1^Y(X_i) - \hat\mu_0^Y(X_i) \right )}{\displaystyle\sum_{1 \leq i \leq n} \left ( 2 A_i(D_i - \hat\mu_1^Y(X_i)) - 2(1- A_i)(D_i - \hat\mu_0^Y(X_i)) + \hat\mu_1^D(X_i) - \hat\mu_0^D(X_i) \right )}~,
    \end{equation*}
    where $\hat\mu_a^Y(X_i)$ is the linear projection of $Y_i$ on $(1, (X_{i, l}, X_{i, l}^2: 1 \leq l \leq T))$ in the subsample with $A_i=a$, and $\hat\mu_a^D(X_i)$ is estimated using from a logistic regression model with the same set of regressors in the subsample with $A_i=a$.
    \item[] \textbf{Adjusted Estimator 2}: 
    As in Adjusted Estimator 1, but in $\hat\mu_a^Y(X_i)$ and $\hat\mu_a^D(X_i)$ the regressors are instead $(1, (X_{i, l}, X_{i, l}^2, X_{i, l} 1\{X_{i, l} > \hat t_l\}: 1 \leq l \leq T))$, where $\hat t_l$ is the sample median of $X_{i, l}$, $1 \leq i \leq n$. 
\end{itemize}
Similarly to Section \ref{sec:sims_ATE}, $\hat{\theta}_n^{\rm unadj}$ solves \eqref{eq:est} for the moment condition given in \eqref{eq:moments-late}. The second and third estimators are covariate adjusted estimators which can be obtained as two-step method of moments estimators from solving an ``augmented'' version of the moment condition \eqref{eq:moments-late} \citep[see, for instance,][]{chernozhukov2018doubledebiased, jiang2022improving}.

\subsection{Additional Simulation Tables}
Table \ref{table:sims2-bias} displays the absolute value of the bias for each design and estimator pair computed across $4000$ Monte Carlo replications. We find that all estimators have fairly low and comparable bias.

Table \ref{table:coverage_ci} reports the coverage and average length of confidence intervals constructed using the estimator described in Section \ref{sec:var-est}, again computed across $4000$ Monte Carlo replications. We find that the confidence intervals have appropriate coverage at all sample sizes and for any number of covariates. The average length of the intervals follows a pattern similar to what was observed for the MSE, again because the first 4 covariates are much stronger predictors of the control outcome than the last 4 covariates, which are almost uninformative. Although we found in Table \ref{table:sims2} that the MSE of the matched tuples design was worse than that of matched pairs with 8 covariates, this does not seem to translate to longer confidence intervals for matched tuples: this may be due to the fact that the matched tuples variance estimator is a ``within block" estimator, as described in Section \ref{sec:var-est}. Similar findings have been reported by \cite{bai2024inference} for related problems in factorial designs.
\begin{table}[ht!]
\centering
\begin{threeparttable}
\caption{Absolute value of bias of estimators for different designs and estimators}
\setlength{\tabcolsep}{3pt} 
\begin{tabular}{cccccccccc}
\toprule
& & & \multicolumn{3}{c}{\textbf{i.i.d.\ assignment}} & & \textbf{Matched pairs} & & \textbf{Matched tuples} \\ 
\cmidrule{4-6} \cmidrule{8-8} \cmidrule{10-10}
& & \# of covariates ($T$) & Unadjusted & Adjusted 1 & Adjusted 2 & & Unadjusted & & Unadjusted \\ 
\midrule

\multicolumn{10}{c}{$n=100$} \\ \addlinespace[0.5em]
& \multirow{3}{*}{ATE} & 2 & 0.0083 & 0.0065 & 0.0051 & & 0.0212 & & 0.0068 \\
& & 4 & 0.0083 & 0.0090 & 0.0090 & & 0.0073 & & 0.0124 \\
& & 8 & 0.0083 & 0.0134 & 0.0131 & & 0.0010 & & 0.0209 \\ \addlinespace[0.5em]
& \multirow{3}{*}{LATE} & 2 & 0.0905 & 0.0436 & 0.0396 & & 0.0341 & & 0.0076 \\
& & 4 & 0.0905 & 0.0276 & 0.0264 & & 0.0301 & & 0.0427 \\
& & 8 & 0.0905 & 0.0401 & 0.0376 & & 0.0264 & & 0.0297 \\ \addlinespace[1.0em]

\multicolumn{10}{c}{$n=200$} \\ \addlinespace[0.5em]
& \multirow{3}{*}{ATE} & 2 & 0.0075 & 0.0068 & 0.0074 & & 0.0084 & & 0.0081 \\
& & 4 & 0.0075 & 0.0052 & 0.0067 & & 0.0040 & & 0.0040 \\
& & 8 & 0.0075 & 0.0045 & 0.0055 & & 0.0145 & & 0.0051 \\ \addlinespace[0.5em]
& \multirow{3}{*}{LATE} & 2 & 0.0545 & 0.0326 & 0.0339 & & 0.0361 & & 0.0328 \\
& & 4 & 0.0545 & 0.0182 & 0.0217 & & 0.0123 & & 0.0144 \\
& & 8 & 0.0545 & 0.0159 & 0.0180 & & 0.0447 & & 0.0216 \\ \addlinespace[1.0em]

\multicolumn{10}{c}{$n=400$} \\ \addlinespace[0.5em]
& \multirow{3}{*}{ATE} & 2 & 0.0001 & 0.0011 & 0.0009 & & 0.0041 & & 0.0055 \\
& & 4 & 0.0001 & 0.0001 & 0.0001 & & 0.0001 & & 0.0016 \\
& & 8 & 0.0001 & 0.0006 & 0.0006 & & 0.0056 & & 0.0041 \\ \addlinespace[0.5em]
& \multirow{3}{*}{LATE} & 2 & 0.0142 & 0.0064 & 0.0059 & & 0.0179 & & 0.0198 \\
& & 4 & 0.0142 & 0.0012 & 0.0015 & & 0.0004 & & 0.0020 \\
& & 8 & 0.0142 & 0.0021 & 0.0019 & & 0.0129 & & 0.0064 \\ \addlinespace[1.0em]

\multicolumn{10}{c}{$n=1000$} \\ \addlinespace[0.5em]
& \multirow{3}{*}{ATE} & 2 & 0.0021 & 0.0037 & 0.0039 & & 0.0014 & & 0.0025 \\
& & 4 & 0.0021 & 0.0021 & 0.0022 & & 0.0020 & & 0.0006 \\
& & 8 & 0.0021 & 0.0021 & 0.0023 & & 0.0023 & & 0.0027 \\ \addlinespace[0.5em]
& \multirow{3}{*}{LATE} & 2 & 0.0118 & 0.0120 & 0.0123 & & 0.0019 & & 0.0083 \\
& & 4 & 0.0118 & 0.0055 & 0.0057 & & 0.0051 & & 0.0007 \\
& & 8 & 0.0118 & 0.0054 & 0.0058 & & 0.0057 & & 0.0059 \\ \addlinespace[1.0em]

\multicolumn{10}{c}{$n=2000$} \\ \addlinespace[0.5em]
& \multirow{3}{*}{ATE} & 2 & 0.0013 & 0.0010 & 0.0010 & & 0.0011 & & 0.0025 \\
& & 4 & 0.0013 & 0.0015 & 0.0015 & & 0.0003 & & 0.0021 \\
& & 8 & 0.0013 & 0.0015 & 0.0015 & & 0.0001 & & 0.0014 \\ \addlinespace[0.5em]
& \multirow{3}{*}{LATE} & 2 & 0.0062 & 0.0035 & 0.0035 & & 0.0018 & & 0.0050 \\
& & 4 & 0.0062 & 0.0037 & 0.0038 & & 0.0011 & & 0.0055 \\
& & 8 & 0.0062 & 0.0038 & 0.0037 & & 0.0008 & & 0.0032 \\ 
\bottomrule
\end{tabular}
\begin{tablenotes}
\footnotesize
\item Note: Entries report the absolute bias of each estimator under different sample sizes, designs, and covariate adjustment strategies. Results are averaged across 2000 replications.
\end{tablenotes}
\label{table:sims2-bias}
\end{threeparttable}
\end{table}

\begin{table}[ht!]
\centering
\begin{threeparttable}
\caption{Coverage rates and lengths of 95\% confidence intervals based on unadjusted estimator for matched pairs and matched tuples}
\setlength{\tabcolsep}{6pt}
\begin{tabular}{ccc cc cc}
\toprule
& & & \multicolumn{2}{c}{\textbf{Matched pairs}} & \multicolumn{2}{c}{\textbf{Matched tuples}} \\
\cmidrule(lr){4-5} \cmidrule(lr){6-7}
& & \# of covariates ($T$) & Coverage & CI length & Coverage & CI length \\
\midrule

\multicolumn{7}{c}{$n=100$} \\ \addlinespace[0.5em]
& \multirow{3}{*}{ATE} & 2 & 0.9403 & 2.3191 & 0.9490 & 2.3224 \\
& & 4 & 0.9465 & 2.2744 & 0.9485 & 2.2830 \\
& & 8 & 0.9487 & 2.3532 & 0.9395 & 2.3672 \\ \addlinespace[0.5em]
& \multirow{3}{*}{LATE} & 2 & 0.9590 & 6.4075 & 0.9590 & 6.3865 \\
& & 4 & 0.9620 & 6.3184 & 0.9615 & 6.3176 \\
& & 8 & 0.9610 & 6.5497 & 0.9527 & 6.5311 \\ \addlinespace[1.0em]

\multicolumn{7}{c}{$n=200$} \\ \addlinespace[0.5em]
& \multirow{3}{*}{ATE} & 2 & 0.9495 & 1.6401 & 0.9497 & 1.6418 \\
& & 4 & 0.9485 & 1.5990 & 0.9440 & 1.6049 \\
& & 8 & 0.9503 & 1.6529 & 0.9425 & 1.6617 \\ \addlinespace[0.5em]
& \multirow{3}{*}{LATE} & 2 & 0.9550 & 4.4046 & 0.9563 & 4.4264 \\
& & 4 & 0.9557 & 4.3315 & 0.9530 & 4.3381 \\
& & 8 & 0.9597 & 4.4833 & 0.9503 & 4.5038 \\ \addlinespace[1.0em]

\multicolumn{7}{c}{$n=400$} \\ \addlinespace[0.5em]
& \multirow{3}{*}{ATE} & 2 & 0.9487 & 1.1602 & 0.9483 & 1.1614 \\
& & 4 & 0.9497 & 1.1255 & 0.9455 & 1.1285 \\
& & 8 & 0.9495 & 1.1609 & 0.9445 & 1.1667 \\ \addlinespace[0.5em]
& \multirow{3}{*}{LATE} & 2 & 0.9530 & 3.0955 & 0.9493 & 3.0963 \\
& & 4 & 0.9547 & 3.0122 & 0.9517 & 3.0110 \\
& & 8 & 0.9555 & 3.1284 & 0.9505 & 3.1239 \\ \addlinespace[1.0em]

\multicolumn{7}{c}{$n=1000$} \\ \addlinespace[0.5em]
& \multirow{3}{*}{ATE} & 2 & 0.9485 & 0.7349 & 0.9550 & 0.7352 \\
& & 4 & 0.9500 & 0.7086 & 0.9505 & 0.7102 \\
& & 8 & 0.9417 & 0.7285 & 0.9430 & 0.7318 \\ \addlinespace[0.5em]
& \multirow{3}{*}{LATE} & 2 & 0.9507 & 1.9468 & 0.9560 & 1.9452 \\
& & 4 & 0.9515 & 1.8836 & 0.9507 & 1.8857 \\
& & 8 & 0.9467 & 1.9469 & 0.9463 & 1.9480 \\ \addlinespace[1.0em]

\multicolumn{7}{c}{$n=2000$} \\ \addlinespace[0.5em]
& \multirow{3}{*}{ATE} & 2 & 0.9490 & 0.5194 & 0.9530 & 0.5193 \\
& & 4 & 0.9483 & 0.4998 & 0.9485 & 0.5004 \\
& & 8 & 0.9505 & 0.5121 & 0.9483 & 0.5139 \\ \addlinespace[0.5em]
& \multirow{3}{*}{LATE} & 2 & 0.9495 & 1.3714 & 0.9530 & 1.3721 \\
& & 4 & 0.9493 & 1.3250 & 0.9483 & 1.3258 \\
& & 8 & 0.9533 & 1.3663 & 0.9505 & 1.3658 \\
\bottomrule
\end{tabular}
\label{table:coverage_ci}
\end{threeparttable}
\end{table}

\clearpage

\bibliography{efficiency}

\end{document}